\documentclass[11pt,letterpaper]{amsart}
\usepackage[foot]{amsaddr}
\usepackage[T1]{fontenc}
\usepackage{etoolbox}

\makeatletter
\patchcmd{\@sect}{\@addpunct.}
  {\ifstrequal{#1}{paragraph}{}{\@addpunct.}}{}{
  \PackageError{manuscript}{Could not customize paragraph punctuation}
    {Check the definition of \string\@sect\space in the document class.}}
\def\paragraph{\@startsection{paragraph}{4}%
  \z@{.5\linespacing\@plus.7\linespacing}{-\fontdimen2\font}%
  \normalfont}
\makeatother

\usepackage{ifxetex}
\ifxetex
  \usepackage[no-math]{fontspec}
\else
\fi
\usepackage{amsmath}
\usepackage{amsfonts}
\usepackage{amssymb}
\usepackage{amsthm}
\usepackage{fullpage}
\usepackage{tablefootnote}
\usepackage{hyphenat}
\usepackage{microtype}
\RequirePackage{silence}
\usepackage{hyphenat}
\ifxetex
  \usepackage[libertine]{newtxmath}
\else
  \usepackage{newtxmath}
\fi
\usepackage[tt=false]{libertine}
\usepackage{caption}
\usepackage{tcolorbox}
\tcbuselibrary{breakable,skins}
\usepackage{bbm}
\usepackage{hyperref, color}
\hypersetup{colorlinks=true,citecolor=blue, linkcolor=blue, urlcolor=blue}
\usepackage[linesnumbered,boxed,ruled,vlined]{algorithm2e}
\usepackage{bm}
\usepackage{bbm}
\usepackage[numbers]{natbib}
\usepackage{xcolor}
\usepackage{tikz}
\usepackage{enumerate}
\usepackage{enumitem}
\usepackage{ragged2e}
\usepackage{tabularx}
\usepackage{array}
\usepackage{longtable}
\usepackage{hhline}
\usepackage{multirow}
\newcolumntype{L}[1]{>{\=Raggedright\arraybackslash}p{#1}}
\newcolumntype{C}[1]{>{\centering\arraybackslash}m{#1}}
\newcolumntype{R}[1]{>{\=Raggedleft\arraybackslash}p{#1}}

\usepackage{makecell}
\usepackage{aliascnt}
\usepackage{prettyref}
\usepackage{footnote}
\usepackage{float}
\usepackage{fix-cm}
\makesavenoteenv{tabular}

\newtheorem{theorem}{Theorem}[section]
\newcommand{\newsharedtheorem}[2]{%
  \newaliascnt{#1}{theorem}%
  \newtheorem{#1}[#1]{#2}%
  \aliascntresetthe{#1}%
}
\newsharedtheorem{observation}{Observation}
\newsharedtheorem{claim}{Claim}
\newtheorem*{claim*}{Claim}
\newsharedtheorem{condition}{Condition}
\newsharedtheorem{example}{Example}
\newsharedtheorem{fact}{Fact}
\newsharedtheorem{lemma}{Lemma}
\newsharedtheorem{proposition}{Proposition}
\newsharedtheorem{conjecture}{Conjecture}
\newsharedtheorem{corollary}{Corollary}
\theoremstyle{definition}

\newsharedtheorem{definition}{Definition}
\newtheorem*{definition*}{Definition}
\newsharedtheorem{remark}{Remark}
\newtheorem*{remark*}{Remark}

\newrefformat{thm}{Theorem~\ref{#1}}
\newrefformat{cond}{Condition~\ref{#1}}
\newrefformat{cor}{Corollary~\ref{#1}}
\newrefformat{def}{Definition~\ref{#1}}
\newrefformat{definition}{Definition~\ref{#1}}
\newrefformat{lem}{Lemma~\ref{#1}}
\newrefformat{lemma}{Lemma~\ref{#1}}
\newrefformat{Alg}{Algorithm~\ref{#1}}
\newrefformat{observation}{Observation~\ref{#1}}
\newrefformat{claim}{Claim~\ref{#1}}
\newrefformat{example}{Example~\ref{#1}}
\newrefformat{fact}{Fact~\ref{#1}}
\newrefformat{prop}{Proposition~\ref{#1}}
\newrefformat{proposition}{Proposition~\ref{#1}}
\newrefformat{conjecture}{Conjecture~\ref{#1}}
\newrefformat{remark}{Remark~\ref{#1}}
\newrefformat{fig}{Figure~\ref{#1}}
\newrefformat{assumption}{Assumption~\ref{#1}}

\let\pref\prettyref
\newcommand{\Cref}[1]{\pref{#1}}

\def\*#1{\boldsymbol{#1}} 
\def\+#1{\mathcal{#1}} 
\def\-#1{\mathrm{#1}} 
\def\=#1{\mathbb{#1}} 
\def\!#1{\mathfrak{#1}} 

\newcommand{\bits}{\{0,1\}}
\newcommand{\defeq}{\triangleq}
\newcommand{\ASens}{\operatorname{ASens}}
\newcommand{\CSP}{\operatorname{CSP}}

\newcommand{\OR}{\operatorname{OR}}
\newcommand{\NAND}{\operatorname{NAND}}
\newcommand{\viol}{\operatorname{viol}}

\newcommand{\Unif}{\operatorname{Unif}}
\newcommand{\Law}{\operatorname{Law}}

\renewcommand{\P}{\ensuremath{\mathbf{P}}}
\newcommand{\NP}{\ensuremath{\mathbf{NP}}}

\DeclareMathOperator{\Sol}{Sol}
\DeclareMathOperator{\ExpDist}{Exp}
\DeclareMathOperator{\arity}{arity}
\newcommand{\var}[1]{\operatorname{var}(#1)}
\newcommand{\distH}{d_{\mathrm H}}
\newcommand{\Wass}{W_1}
\newcommand{\eps}{\varepsilon}

\DeclareMathOperator{\oPr}{\mathbf{Pr}}
\DeclareMathOperator{\opr}{\mathbb{P}}
\renewcommand{\Pr}[2][]
{\ifthenelse{\isempty{#1}}
  {\oPr\left[#2\right]}
  {\oPr_{#1}\left[#2\right]}
} 
\newcommand{\pr}[2][]
{\ifthenelse{\isempty{#1}}
  {\opr\left[#2\right]}
  {\opr_{#1}\left[#2\right]}
}

\DeclareMathOperator{\oE}{\mathbb{E}}
\newcommand{\E}[2][]
{\ifthenelse{\isempty{#1}}
  {\oE\left[#2\right]}
  {\oE_{#1}\left[#2\right]}
} 

\DeclareMathOperator{\oVar}{\mathrm{Var}}
\newcommand{\Var}[2][]
{\ifthenelse{\isempty{#1}}
  {\oVar\left[#2\right]}
  {\oVar_{#1}\left[#2\right]}
}
\def\oEnt{\mathrm{Ent}}
\NewDocumentCommand{\Ent}{ O{} O{} m }{
  \ifthenelse{\isempty{#1}} {
    \ifthenelse{\isempty{#2}} {
      \oEnt\left[#3\right]
    } {
      \oEnt^{#2}\left[#3\right]
    }
  } {
    \ifthenelse{\isempty{#2}} {
      \oEnt_{#1}\left[#3\right]
    } {
      \oEnt_{#1}^{#2}\left[#3\right]
    }
  }
}

\newcommand{\e}{\mathrm{e}}
\renewcommand{\emptyset}{\varnothing}

\newcommand{\tuple}[1]{\left(#1\right)}

\newcommand{\tp}{\tuple}

\newbool{doubleblind}
\setbool{doubleblind}{false}

\title{Stability Dichotomies for Boolean Constraint Satisfaction Problems}

\ifdoubleblind
\author{Author(s)}
\else
\author{Chunyang Wang, Yuichi Yoshida}
\address[Chunyang Wang, Yuichi Yoshida]{National Institute of Informatics, Tokyo, Japan. \textnormal{Email: \mbox{\texttt{\{c\_wang,yyoshida\}@nii.ac.jp}}}}

\fi

\begin{document}
\pagenumbering{roman}

\begin{abstract}
We study the \emph{stability} of Boolean constraint satisfaction problems
(CSPs) through the notion of \emph{average sensitivity} (Varma and Yoshida,
SODA 2021; SICOMP 2023).  It measures the expected $1$-Wasserstein distance
between an algorithm's output distributions before and after the deletion
of a uniformly chosen constraint, using the unnormalized Hamming metric.

We establish two dichotomies for every finite Boolean constraint language
$\Gamma$, where $n\geq 2$ denotes the number of variables in an instance.
For \emph{stable solvability}, exactly one of the following holds:
\begin{itemize}
    \item either there is an algorithm that solves $\CSP(\Gamma)$ and has
          average sensitivity $O_{\Gamma}(1)$ for all satisfiable instances;
    \item or every algorithm that solves $\CSP(\Gamma)$ has average sensitivity
          $\Omega_{\Gamma}(n)$ on satisfiable instances of arbitrarily large $n$.
\end{itemize}
The first alternative holds if and only if $\Gamma$ has \emph{finite duality}: unsatisfiability can be witnessed on a bounded number of variables.

For \emph{stable approximability}, where a $(1-\eps)$-approximation violates at most an $\eps$-fraction of the constraints in expectation, exactly one of the following holds:
\begin{itemize}
    \item either for every $\eps\in(0,1]$, there is an algorithm that
          $(1-\eps)$-approximates $\CSP(\Gamma)$ with average sensitivity
          $O_{\Gamma}(\eps^{-1}\log n)$ for all satisfiable instances;
    \item or there exists $\eps_{\Gamma}\in(0,1]$ such that
          every algorithm that $(1-\eps_{\Gamma})$-approximates
          $\CSP(\Gamma)$ has average sensitivity
          $\Omega_{\Gamma}(n)$ on satisfiable instances of arbitrarily large $n$.
\end{itemize}
The first alternative holds if and only if $\Gamma$ has \emph{bounded width}: local consistency checks on bounded sets of variables detect unsatisfiability.
\end{abstract}

\maketitle


\setcounter{tocdepth}{1}
\tableofcontents
\pagenumbering{arabic}

\section{Introduction}
\label{sec:introduction}

Constraint satisfaction problems (CSPs) provide a common framework for studying the computational complexity of problems such as Boolean satisfiability and graph coloring. A CSP asks whether values can be assigned to variables so that all constraints are satisfied. Boolean satisfiability is an NP-complete example~\cite{cook1971complexity}. More generally, the classification of CSPs asks how the complexity of deciding satisfiability depends on the constraint language, that is, the set of allowed constraint relations~\cite{schaefer1978complexity,FederVardi98}. The CSP Dichotomy Theorem states that, for every finite constraint language over a finite domain, this decision problem is either solvable in polynomial time or $\NP$-complete~\cite{bulatov2017dichotomy,zhuk2020dichotomy}.

The $\NP$-completeness of CSP satisfiability for some constraint languages motivates the study of approximation algorithms. Within this paradigm, a central objective is to characterize the \emph{efficient approximability of CSPs}: which CSPs can---or cannot---be efficiently approximated. The pursuit of this characterization has catalyzed major advances across theoretical computer science, including the PCP Theorem~\cite{uriel1996interactive,sanjeev1998proof,sanjeev1998probabilistic}, semidefinite programming (SDP) relaxations~\cite{geomans1995improved,karger1998approximate,zwick1998approximation,zwick1999outward,charikar2009near}, and the Unique Games Conjecture (UGC)~\cite{khot2002power,raghavendra2008optimal}. Assuming the UGC, Raghavendra~\cite{raghavendra2008optimal} established an optimal approximation characterization for almost-satisfiable CSP instances. However, in other regimes, including satisfiable CSP instances, efficient approximability is not fully characterized by the UGC and remains a highly active area of research~\cite{amey2002approximabilityI,amey2023approxmabilityII,amey2023approximabilityIII,amey2024approximabilityIV,amey2025approximabilityV}.

\medskip
\noindent
\textbf{Stable solvability and stable approximability of CSPs.}\,\, A rapidly growing body of literature in computer science studies algorithmic \emph{stability}, a fundamental concept that measures how an algorithm's output changes under perturbations to its input. In many real-world settings, merely computing a valid solution is insufficient, and the algorithm must also ensure that its output remains robust against small variations in the input data. This requirement has driven the development of various domain-specific stability frameworks. Notable examples include differential privacy~\cite{dwork2006calibrating,mcsherry2007mechanism,gupta2010differentially,bassily2021algorithmic} in data analysis, recourse in online algorithms~\cite{gupta2014maintaining,megow2016power,lattanzi2017consistent,dutting2025cost}, and various stability measures in machine learning~\cite{kearns1997algorithmic,bousquet2002stability,poggio2004general,elisseeff2005stability,mukherjee2006learning,shai2010learnability,hardt2016train,impagliazzo2022reproducibility}.


We study the stability of algorithms for CSPs through the lens of \emph{average sensitivity}, a general notion of algorithmic stability introduced by Varma and Yoshida~\cite{VarmaYoshida23}. This concept has attracted substantial attention and applies to a broad spectrum of problems~\cite{PengYoshida20,KumabeYoshida22,KumabeYoshidaKnapsack22,YoshidaIto22,HaraYoshida23,VarmaYoshida23,LiHeBaiPeng25,EbbensYoshida26,YoshidaZhang26}. 
In the context of CSPs, the average sensitivity of a randomized algorithm is measured by the expected $1$-Wasserstein distance between its original output distribution and the output distribution after deleting a uniformly chosen constraint of the input instance, using unnormalized Hamming distance as the transportation cost.

The first fundamental problem is whether we can \emph{stably} solve satisfiable CSPs---that is, solve them with low average sensitivity. This requirement proves surprisingly restrictive even for very simple instances. For example, every algorithm computing proper $2$-colorings has linear average sensitivity on some bipartite graphs~\cite{VarmaYoshida23}.\footnote{This lower bound applies even to computationally unbounded algorithms and requires no computational complexity assumption.} To date, existing literature on the average sensitivity of CSPs remains quite limited. It is therefore natural to ask:

\begin{center}
    \emph{Can we characterize which CSPs are stably solvable?}
\end{center}

Our second question concerns stable approximation. Even when a CSP is not stably solvable, allowing a small fraction of constraints to be violated may permit algorithms with low average sensitivity. Formally, we analyze the trade-off between the following two criteria:
\begin{itemize}
    \item \textbf{Approximation}: The (randomized) algorithm outputs assignments that satisfy a large fraction of the constraints in expectation.
    \item \textbf{Stability}: The algorithm exhibits low average sensitivity.
\end{itemize}

This trade-off between approximation and stability is highly natural and mirrors similar phenomena studied across various other contexts~\cite{duchi2014privacy,megow2016power,bassily2021algorithmic,mark2024stable,bender2026history}. We therefore ask:

\begin{center}
    \emph{Can we characterize which CSPs are stably approximable?}
\end{center}


\subsection{Our Results}
In this work, we provide a complete resolution of the two aforementioned questions for satisfiable Boolean CSPs parameterized by constraint languages. Specifically, we establish dichotomy theorems for both stable solvability and stable approximability.

Fix a finite Boolean constraint language $\Gamma$, defined as a finite family of relations over the Boolean domain $\bits$. Throughout, we will assume that $\Gamma$ is nonempty and every relation in $\Gamma$ is nonempty, and use $n\geq 2$ to denote the number of variables. Our algorithmic guarantees and sensitivity bounds apply to satisfiable input instances, and the algorithms are permitted to behave arbitrarily on unsatisfiable inputs.

Our first result characterizes stable solvability by \emph{finite duality}, the property that unsatisfiability can always be witnessed on a number of variables bounded solely in terms of $\Gamma$~\cite{larose2007characterization,BulatovKrokhinLarose08}.

\begin{theorem}[Dichotomy for stable solvability, informal]
\label{thm:intro-stable-solvability}
For every finite Boolean constraint language $\Gamma$, exactly one of the following holds:
\begin{itemize}
    \item either there exists an algorithm that solves $\CSP(\Gamma)$ with average sensitivity $O_{\Gamma}(1)$ for all satisfiable instances;
    \item or every algorithm that solves $\CSP(\Gamma)$ has average sensitivity $\Omega_{\Gamma}(n)$ on satisfiable instances with arbitrarily large $n$.
\end{itemize}
The first alternative holds if and only if $\Gamma$ has finite duality.
\end{theorem}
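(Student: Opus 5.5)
I would prove the two directions separately and then observe that they are mutually exclusive: $O_\Gamma(1)$ and $\Omega_\Gamma(n)$ cannot both hold for arbitrarily large $n$. The characterization I would use for Boolean finite duality is the known one (Larose--Loten--Tardif, together with the Boolean classification). $\Gamma$ has finite duality if and only if its core has a majority-like ``tree duality with bounded obstructions''. Concretely, in the Boolean case this happens if and only if every relation of the core of $\Gamma$ is definable by unary constraints together with implications in a way that avoids long implication chains. Operationally, I would use the following fact: if $\Gamma$ does \emph{not} have finite duality, then $\Gamma$ pp-defines (or its core does) either the relation $x\to y$ (implication), or the relation $x\neq y$, or equality-with-constants in a way that produces an unbounded path obstruction. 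This is because the absence of finite duality is witnessed by an infinite family of critical obstructions, which for Boolean languages can be taken to be long paths of $\leq$ or $\neq$ constraints.

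\textbf{Lower bound.} For the lower bound I would build explicit hard satisfiable instances. The model case is the $2$-coloring lower bound of~\cite{VarmaYoshida23}: take a long path/cycle of $\neq$ constraints on $n$ variables. Deleting an edge of an even cycle leaves the instance satisfiable, but the solution set on either side is rigid, so any solver's output must be one of two antipodal assignments. A careful averaging argument then shows that the expected Wasserstein distance is $\Omega(n)$. For implication, I would use a path $x_0\to x_1\to\dots\to x_n$ with the endpoints pinned: $x_0=1$ forces everything to $1$. Removing the $i$-th edge frees a suffix, and after deletion the all-ones assignment is no longer forced. To get linear sensitivity I would choose the instance so that removing constraints forces distinct solutions. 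The model here is a cycle of implications plus unary pins: the solutions before and after deletion differ on $\Theta(n)$ variables for a constant fraction of deletions. I would then transfer these gadget instances to $\Gamma$ via pp-definitions. Pp-definitions blow up the number of variables and constraints only by constant factors. Existential variables add only $O_\Gamma(1)$ per constraint, so the Hamming distance on the original variables lower-bounds the full distance. The main obstacle is exactly this step: the Boolean classification of finite duality must yield a pp-definable gadget that is \emph{deletion-robust}. Deleting one constraint of the expanded instance must correspond to deleting one gadget edge, or at least must not accidentally keep the rigidity. I would first try to handle this by replacing each gadget edge with constantly many parallel copies, so that deleting a single $\Gamma$-constraint still changes the solution set in a controlled way. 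If that fails, I would argue directly on the critical obstruction family.

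\textbf{Upper bound.} Assume finite duality with obstructions on at most $k$ variables. My algorithm would output a canonical solution defined locally. Finite duality implies bounded (in fact tree-)width with a \emph{unique minimal/maximal} style solution when a semilattice polymorphism exists. In the Boolean setting the languages with finite duality are essentially those whose core has such a polymorphism together with bounded implication propagation. So I would output the pointwise minimal solution, computed by unit propagation. Deleting one constraint $C$ can change this minimal solution only on variables whose forced value was derived through $C$. I would bound the expected number of such variables, averaged over $C$, by $O_\Gamma(1)$: under finite duality each forced value has a derivation of bounded size, depending only on $k$. Charging each variable to the constraints in one fixed short derivation gives total change $\sum_C |\Delta_C|\le O_\Gamma(m)$, i.e.\ average sensitivity $O_\Gamma(1)$. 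The subtle point is that the minimal solution could flip variables not directly derived from $C$. By monotonicity, though, deletion only lowers the minimal solution, and the lowered variables are exactly those whose every derivation uses $C$, so the charging bound applies.
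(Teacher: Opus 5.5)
Your upper bound is in the right spirit, but the lower bound, which carries the theorem, has a genuine gap. Comparing one satisfiable instance with its own deletions can never force sensitivity. Since $\Sol(\mathcal{I})\subseteq\Sol(\mathcal{I}-c)$, a solver may output on $\mathcal{I}-c$ exactly what it outputs on $\mathcal{I}$. Your model instances show this:
\begin{itemize}
\item An even cycle and the path left after deleting one of its edges have the same two $2$-colorings. The solver that colors each component starting with color $0$ at its smallest vertex therefore has sensitivity $0$ there.
\item On the chain $x_0=1$, $x_i\to x_{i+1}$, the greatest solution is a legitimate solver, since implication and both constants are dual-Horn. It outputs all-ones on the instance and on every deletion, so again sensitivity is $0$.
\end{itemize}
The missing idea is to start from a \emph{minimally unsatisfiable} $\mathcal{F}$ and study the satisfiable instances $\mathcal{F}-c_i$. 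For $x\in\Sol(\mathcal{F}-c)$ and $y\in\Sol(\mathcal{F}-c')$, the set where $x$ and $y$ disagree must contain the variable nodes of some incidence-graph path from $c$ to $c'$. Otherwise, taking $y$ on the component of $c$ and $x$ elsewhere would solve $\mathcal{F}$. Let $D_{ij}$ be the Wasserstein term for deleting $c_j$ from $\mathcal{F}-c_i$. Since $(\mathcal{F}-c_i)-c_j=(\mathcal{F}-c_j)-c_i$, the triangle inequality through this common instance gives $D_{ij}+D_{ji}\ge\frac12 d(c_i,c_j)$. Summing over pairs on a long shortest path then forces $\Omega(n)$ sensitivity on some $\mathcal{F}-c_i$, provided $\mathcal{F}$ has $O_\Gamma(\operatorname{diam})$ variables and constraints. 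Your appeal to a ``careful averaging argument'' for $2$-coloring points toward this. However, nothing in the proposal supplies the mechanism for general $\Gamma$, and reasoning that solutions must ``differ before and after deletion'' cannot work.

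The second issue is producing such an $\mathcal{F}$ over $\Gamma$. You rely on an unproved classification: that lacking finite duality yields pp-definitions of $x\to y$ with both constants, or of $x\neq y$. This can be checked on Post's lattice, but you give no argument. You also leave the gadget transfer open, and parallel copies make it worse: they destroy minimal unsatisfiability and make single deletions inert. What does work is to take an inclusion-minimal unsatisfiable subset of the gadget expansion. This subset meets every gadget, because dropping the gadget of $c$ leaves an expansion of the satisfiable $\mathcal{F}-c$. Gadgets share only original variables, so incidence distances between gadgets do not shrink.

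The paper avoids pp-definitions here altogether. It takes the Larose--Loten--Tardif products of links and squares $\mathcal{I}_N$ directly over $\Gamma$ and proves them unsatisfiable via the Boolean finite-duality criterion. It then shows that removing any layer makes them satisfiable. Hence any minimal unsatisfiable subinstance spans all $N+1$ layers, giving diameter $\ge 2N$ with $O_\Gamma(N)$ constraints.

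For the upper bound, the relevant classes are $0$-valid, $1$-valid, strictly essentially positive, and strictly essentially negative. Implications play no role, and a pointwise minimal solution need not exist (already not for $x\vee y$). In the positive case you need the greatest solution instead: $0$ exactly on variables in negative unit clauses, $1$ elsewhere. With that choice every forced value has a one-constraint derivation, and your charging argument gives the bound $r_\Gamma$, as in the paper.
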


The formal statement of \Cref{thm:intro-stable-solvability} is provided in \Cref{thm:stable-solvability-dichotomy}.

Our second result characterizes stable approximability by \emph{bounded width}: unsatisfiability can always be detected by a local consistency algorithm that examines only a bounded number of variables at a time~\cite{BartoKozik09CD,BartoKozik14}.

\begin{theorem}[Dichotomy for stable approximability, informal]
\label{thm:intro-stable-approximability}
For every finite Boolean constraint language $\Gamma$, exactly one of the following holds:
\begin{itemize}
    \item either for every $\eps\in (0,1]$, there exists an algorithm that $(1-\eps)$-approximates $\CSP(\Gamma)$ with average sensitivity $O_{\Gamma}(\eps^{-1}\log n)$ for all satisfiable instances;
    \item or there exists an $\eps_{\Gamma}\in(0,1]$ such that every algorithm that $(1-\eps_{\Gamma})$-approximates $\CSP(\Gamma)$ has average sensitivity $\Omega_{\Gamma}(n)$  on satisfiable instances with arbitrarily large $n$.
\end{itemize}
The first alternative holds if and only if $\Gamma$ has bounded width.
\end{theorem}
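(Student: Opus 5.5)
The plan is to prove the two directions separately, using Schaefer-style structure theory for Boolean languages. By the Barto--Kozik / Larose--Tesson classification, a Boolean language $\Gamma$ has bounded width iff it is preserved by a majority, min, or max polymorphism. Equivalently, up to pp-definability, every relation is 2-SAT, Horn, or dual-Horn. If $\Gamma$ lacks bounded width, then $\Gamma$ pp-defines (with constants, after reducing to the idempotent core) either linear equations over $\mathbb{Z}_2$ or all of 3-SAT. In either case it pp-defines $x\oplus y\oplus z=b$.

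\textbf{Algorithmic direction (bounded width).} For Horn languages I would use the least-model idea with a soft threshold. Run unit propagation, but instead of deterministically forcing a variable once some clause implies it, assign each variable a random ``level'' and propagate only along derivations of bounded depth. Equivalently, sample a random threshold $t\in\{1,\dots,T\}$ with $T\approx \eps^{-1}\log n$, and set to $1$ exactly those variables whose forced-derivation depth is at most $t$. All other variables are set to $0$.

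\begin{itemize}
\item \emph{Violations.} Only constraints whose derivation sits exactly at the cut level $t$ are violated. Averaging over $t$, together with a charging argument that each constraint is ``at the boundary'' for $O(1)$ values of $t$, gives an $\eps$ fraction of violated constraints.
\item \emph{Sensitivity.} Deleting one constraint changes the depth profile. The expected Wasserstein change is at most the number of variables whose depth crosses the threshold, which averages to $O(T)$ over deleted constraints.
\end{itemize}

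Majority (2-SAT) is handled similarly via the implication graph, using random-threshold reachability and the bounded-width consistency structure. The hard part is making the bound $\log n$ rather than $n$. For this I would use a geometric (exponential) threshold distribution, so that the probability any given chain is cut is $O(\eps)$, while the expected change in outputs is controlled by a $\log n$ harmonic sum.

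\textbf{Lower bound (no bounded width).} Using a gadget reduction that preserves average sensitivity up to $O_\Gamma(1)$ factors and approximation up to constants, it suffices to prove the lower bound for 3-LIN over $\mathbb{Z}_2$. I would take satisfiable random-like instances with linear-size expansion, namely a Tseitin-type system on an expander. Any $(1-\eps_\Gamma)$-approximate assignment must then be close to one of few solution cosets. Deleting a constraint enlarges the solution space, so the approximation requirement must select among far-apart cosets. By a hybrid argument over the $m$ deletions, as in the Varma--Yoshida 2-coloring argument, some deletion forces a shift of $\Omega(n)$ in expectation. The main obstacle is this robust step: showing that near-satisfying assignments cluster around solutions, which needs an expander/coboundary (Gilbert--Varshamov style) argument, and making the gadget reduction preserve both the $\eps$-approximation and the sensitivity.
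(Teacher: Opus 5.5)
Your proposal has genuine gaps in both directions. The Horn algorithm provably fails. The lower bound lacks the constructions that make the argument work.

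\textbf{Upper bound.} The depth-threshold Horn algorithm does not have the claimed sensitivity. Its approximation half is fine: a clause $x_1\wedge\dots\wedge x_k\Rightarrow y$ is violated only when $t=\mathrm{depth}(y)-1$. The sensitivity claim, however, treats derivations as paths, as in $2$-coloring. With multi-premise Horn clauses a derivation is a tree, and every constraint in it can be critical, so the number of constraints whose deletion un-derives a variable of depth $t$ can be $2^{\Theta(t)}$ rather than $O(t)$.

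Here is a concrete counterexample. Take a complete binary tree of depth $h$ whose $2^h$ leaves are forced by unit clauses $(u)$ and whose internal nodes carry $u_L\wedge u_R\Rightarrow u$, with root $z$. Add $M=2^{h+1}$ clauses $z\Rightarrow y_j$. Then $n$ and $|\+C|$ are both about $2^{h+2}$. Deleting any of the $2^{h+1}-1$ tree constraints destroys the unique derivation of $z$, so every $y_j$ is output as $0$ for every $t$. Before the deletion, all $y_j$ equal $1$ whenever $t\ge h+1$. With $T\approx\eps^{-1}\log n$ and $\eps$ small, this event has probability at least $1/2$, so $\ASens=\Omega(n)$.

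The paper avoids derivation depth altogether. It retains each constraint at a random rate $p\sim\Unif([1-\eps,1])$ and outputs the least solution of the retained subinstance. Under the priority coupling this solution is coordinatewise monotone in $p$, so integrating influences over $p$ bounds the total change by the number of non-isolated variables, giving $O(r_\Gamma/\eps)$.

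Your $2$-SAT sketch leaves two things unspecified:
\begin{itemize}
\item how roots are chosen when nothing is forced (in $2$-coloring there is nothing to propagate from); the paper uses Miller--Peng--Xu exponential shifts;
\item how to handle literals that become feasible after a deletion; the paper tests feasibility in a second, nested random subinstance and bounds root changes by a harmonic sum.
\end{itemize}
Your classification also omits the $0$-valid and $1$-valid cases. For example, $x\oplus y\oplus z=0$ has bounded width but no majority, min, or max polymorphism.

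\textbf{Lower bound.} There are three gaps.

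(1) \emph{Constants.} The relation $x\oplus y\oplus z=b$ is not self-dual. A self-dual $\Gamma$ such as NAE-$3$SAT cannot pp-define it without constants. Simulating constants by a global pair of variables makes the decoding non-Lipschitz, since flipping that pair complements everything. The paper instead uses the self-dual $4$-ary parities and writes each code bit as the XOR of two variables.

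(2) \emph{The gadget reduction is only asserted.} One original deletion becomes up to $a$ deletions. The chain from $\+T(\+I)$ to $\+T(\+I-c)$ passes through partially deleted instances that differ for each $c$, and $\ASens$ controls only an average of single deletions from one fixed instance. The paper deletes each replacement constraint independently with probability $\eta$. Each intermediate instance is then the reduced instance conditioned on an event of probability at least $\eta^{a-1}$. It also assigns existential blocks at random from a common pool, so both reduced instances have the same variables.

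(3) \emph{The hard instance, which is the main gap.} Tseitin systems lack the properties you need:
\begin{itemize}
\item On a connected graph each vertex constraint is the sum of the others, so deleting one does not change the solution set at all.
\item The solution set is a coset of the cycle space, which has linear dimension and minimum distance equal to the girth, $O(\log n)$.
\item Expansion alone does not give local testability (sparse random parity systems are not locally testable), so near-satisfying assignments need not be close to solutions.
\end{itemize}
The argument needs $\Omega(|\+C|)$ single-constraint switches, each moving every near-optimal assignment by $\Omega(n)$. The paper obtains these from a bounded-degree systematic $c^3$-LTC (Yoshida--Zhang). Each message bit is fixed by its own constraint $c_x^\sigma$, so that $\+I_\sigma-c_x^\sigma=\+I_{\sigma^{(x)}}-c_x^{\sigma^{(x)}}$. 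The LTC separation bound, combined with the triangle inequality through this common subinstance, gives $\ASens\ge\frac{k}{2|\+C_\sigma|}\cdot\Omega(n)=\Omega(n)$, since the rate is constant.
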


The formal statement of \Cref{thm:intro-stable-approximability} is provided in \Cref{thm:stable-approximability-dichotomy}.

\begin{remark}[Computational complexity]\label{rem:computational-complexity}
    We emphasize that the classifications in Theorems~\ref{thm:intro-stable-solvability} and \ref{thm:intro-stable-approximability} are purely information-theoretic. Consequently, our average-sensitivity lower bounds are unconditional and do not rely on unproven computational complexity assumptions (such as $\P\neq\NP$). Furthermore, while the mathematical definition of stability permits computationally unbounded algorithms, the specific low-sensitivity algorithms we design use polynomially many operations in the model where sampling uniform and exponential random variables and performing arithmetic and comparisons on them take constant time.
\end{remark}

We remark that finite duality is a strictly more restrictive structural condition than bounded width. Consequently, \Cref{thm:intro-stable-solvability} and \Cref{thm:intro-stable-approximability} jointly establish a stability trichotomy for satisfiable Boolean CSPs:
\begin{itemize}
    \item $\CSP(\Gamma)$ is stably solvable if $\Gamma$ has finite duality;
    \item $\CSP(\Gamma)$ is stably approximable, but not stably solvable, if $\Gamma$ has bounded width but lacks finite duality;
    \item $\CSP(\Gamma)$ is not stably approximable if $\Gamma$ lacks bounded width: for some fixed accuracy $1-\eps_\Gamma$, every approximation algorithm has linear average sensitivity on arbitrarily large satisfiable instances.
\end{itemize}

\subsection{Technical Overview}

Our strategy for proving the dichotomies for stable solvability and stable approximability (Theorems \ref{thm:intro-stable-solvability} and \ref{thm:intro-stable-approximability}) proceeds as follows:
\begin{itemize}
    \item \textbf{Upper bounds:} To establish both stability upper bounds, we use the relational characterizations of finite duality and bounded width. These characterizations cover the CSPs by several, possibly overlapping, structural cases. We then design new algorithms with low average sensitivity for each nontrivial case.
    
    \item \textbf{Lower bound for stable solvability:} We construct a family of satisfiable instances whose solutions must be far apart, but any two instances in the family yield the same subinstance after deleting one constraint from each. This converts solution separation into an average-sensitivity lower bound. The family comes from adapting the construction for languages without finite duality in~\cite{larose2007characterization}.
    
    \item \textbf{Lower bound for stable approximability:} We transfer lower bounds between constraint languages by replacing each constraint of one language with a constant-size conjunction of constraints of another language, using the original variables and new existential variables. Such replacements are described by \emph{primitive-positive (pp) definitions}. Our reduction then transfers an average-sensitivity lower bound obtained from the locally testable codes of Yoshida and Zhang~\cite{YoshidaZhang26}.
\end{itemize}

We next provide more in-depth explanations for the algorithms and the mechanisms behind the lower bounds.

\subsubsection{Stable Solvability: Upper Bound}
To establish the upper bound of the dichotomy for stable solvability, we utilize the relational characterization that a constraint language has finite duality if and only if it is $0$-valid, $1$-valid, strictly essentially positive, or strictly essentially negative.

A language is $0$-valid (resp., $1$-valid) when the all-zero (resp., all-one) tuple belongs to each relation, so a fixed assignment gives zero average sensitivity. A strictly essentially positive relation has a CNF definition using only unit clauses and positive clauses. On a satisfiable instance, assign $0$ to variables occurring in negative unit clauses of these definitions and $1$ to all other variables. This satisfies every constraint, and deleting a constraint can change only variables in its scope. Strict essential negativity is the dual case, with positive unit clauses forcing $1$ and all remaining variables assigned $0$.

\subsubsection{Stable Solvability: Lower Bound}

To prove the average-sensitivity lower bound for algorithms that solve every satisfiable instance, we construct sparse \emph{minimally unsatisfiable} instances with large incidence-graph diameter. Minimal unsatisfiability means that deleting any single constraint makes the instance satisfiable; the incidence graph has a node for each variable and constraint, with edges recording occurrences. Finite duality is equivalent to a uniform bound on the size of such obstructions after isolated variables are removed.

Solutions obtained by deleting two distinct constraints must disagree on all variable nodes of some path connecting the deleted constraints. Otherwise, the two assignments could be combined into a solution of the unsatisfiable instance. Thus, distant deletions force distant output distributions. Comparing them through the common instance with both constraints deleted and averaging along a long shortest path gives a linear sensitivity lower bound when the number of constraints is linear in the diameter. We obtain the required instances by adapting the construction of Larose et al.~\cite{larose2007characterization}.

\subsubsection{Stable Approximability: Upper Bound}
\label{sec:stable-approximability-upper-overview}
Fix $\eps\in(0,1]$. The Boolean bounded-width criterion reduces the upper bound to $0$-valid, $1$-valid, Horn, dual-Horn, and bijunctive languages. A Horn relation has a CNF definition with at most one positive literal per clause; a dual-Horn relation has at most one negative literal per clause; a bijunctive relation has a $2$-CNF definition. The valid cases use fixed assignments, so we focus on the remaining cases.

\paragraph{\textbf{Horn and Dual-Horn Languages.}}
It is well known that satisfiable Horn (resp., dual-Horn) instances can be solved by taking the coordinatewise least (resp., greatest) solution. However, the coordinatewise least (resp., greatest) solution of a Horn (resp., dual-Horn) instance can be highly unstable under uniform constraint deletion. A counterexample is the chain instance on $n$ variables $x_0,x_1,\dots,x_{n-1}$, consisting of one unit constraint $x_0=1$ and $n-1$ implication constraints $x_i\implies x_{i+1}$ for $0\le i<n-1$.

For an instance $\+I=(V,\+C)$, where $V$ is its variable set and $\+C$ its constraint multiset, retaining each constraint independently with a fixed probability $p\in[1-\eps,1]$ guarantees the desired approximation, but does not by itself bound sensitivity for that value of $p$. Instead, we draw $p\sim\Unif([1-\eps,1])$ and output the least solution of the retained instance. As $p$ increases, this solution is coordinatewise nondecreasing. Integrating the total influence of the constraints over $p$ bounds the total expected change by the number of non-isolated variables, at most $r_\Gamma|\+C|$, where $r_\Gamma$ bounds the relation arities. Dividing by the interval length $\eps$ and the number of constraints gives average sensitivity at most $r_\Gamma/\eps$. The dual-Horn case uses the greatest solution.

\paragraph{\textbf{A Warm-up: $2$-Coloring.}}

As a warm-up for general bijunctive languages, we first design a stable approximation algorithm for $2$-coloring. The $2$-coloring problem is a Boolean CSP over the bijunctive language consisting only of the disequality relation $\mathrm{NEQ}\defeq\{(0,1),(1,0)\}$.

For a bipartite graph $G=(V,E)$, use the exponential-shift low-diameter decomposition of Miller, Peng, and Xu~\cite{miller2013parallel}. Independently sample $\delta_v\sim\ExpDist(\beta)$, an exponential random variable of rate $\beta=\eps$, for each vertex $v$. The root $r(u)$ maximizes $\delta_v-\mathrm{dist}_G(u,v)$, with ties broken by a fixed ordering. These roots define clusters, each of which admits a shortest-path tree from its root. Color each root $0$ and alternate colors along its tree. Bipartiteness ensures that every edge within a cluster, including non-tree edges, is satisfied (\Cref{fig:ldd-two-coloring}).

The probability that an edge joins different clusters is at most $\beta=\eps$, giving the approximation guarantee. Couple the shifts before and after an edge deletion. A vertex $v$ can change its color only if the deleted edge lies on a chosen shortest path from $r(v)$ to $v$. The expected path length is $O(\beta^{-1}\log n)$. Summing over the at most $2|E|$ non-isolated vertices and dividing by $|E|$ gives the same bound on average sensitivity.

\begin{figure}[htbp]
    \centering
\begin{tikzpicture}[
  scale=1.08,transform shape,
  x=1.25cm,y=1.1cm,
  every node/.style={font=\small},
  vertex/.style={circle,draw=black!70,fill=white,minimum size=5.5mm,inner sep=0pt},
  root/.style={double,double distance=.8pt},
  zero/.style={vertex,draw=black,fill=white},
  one/.style={vertex,draw=black,fill=black!80,text=white},
  edge/.style={draw=black!65,line width=.7pt},
  rcluster/.style={draw=blue!55!black,fill=blue!10,line width=.65pt,rounded corners=4pt},
  scluster/.style={draw=orange!70!black,fill=orange!13,line width=.65pt,rounded corners=4pt},
  tcluster/.style={draw=teal!65!black,fill=teal!11,line width=.65pt,rounded corners=4pt}
]
  \foreach \panel/\offset in {0/0,1/6.7} {
    \begin{scope}[xshift=\offset cm]
      \ifnum\panel=0
        \node[font=\small\bfseries] at (1.5,3.25) {Before deleting $e$};
        \path[rcluster]
          (-.43,.57)--(.43,.57)--(.43,1.57)--(1.43,1.57)
          --(1.43,2.43)--(-.43,2.43)--cycle;
        \path[scluster]
          (1.57,.57)--(2.57,.57)--(2.57,-.43)--(3.43,-.43)
          --(3.43,2.43)--(1.57,2.43)--cycle;
      \else
        \node[font=\small\bfseries] at (1.5,3.25) {After deleting $e$};
        \path[rcluster] (-.43,.57) rectangle (.43,2.43);
        \path[scluster]
          (1.57,.57)--(2.57,.57)--(2.57,-.43)--(3.43,-.43)
          --(3.43,2.43)--(.57,2.43)--(.57,1.57)--(1.57,1.57)--cycle;
      \fi
      \path[tcluster]
        (-.43,-.43)--(2.43,-.43)--(2.43,.43)--(1.43,.43)
        --(1.43,1.43)--(.57,1.43)--(.57,.43)--(-.43,.43)--cycle;

      \foreach \y in {0,1} {\draw[edge] (0,\y)--(3,\y);}
      \draw[edge] (1,2)--(3,2);
      \foreach \x in {0,1,2,3} {\draw[edge] (\x,0)--(\x,2);}
      \ifnum\panel=0
        \draw[edge] (0,2)--(1,2);
        \node[font=\scriptsize] at (.5,2.22) {$e$};
      \else
        \draw[red!75!black,densely dashed,line width=.8pt] (0,2)--(1,2);
        \draw[red!75!black,line width=1pt] (.43,1.92)--(.57,2.08);
        \draw[red!75!black,line width=1pt] (.43,2.08)--(.57,1.92);
      \fi

      \node[zero,root] at (0,2) {0};
      \node at (1,2.55) {$v$};
      \ifnum\panel=0
        \node[one] at (1,2) {1};
      \else
        \node[zero] at (1,2) {0};
      \fi
      \node[one] at (2,2) {1};
      \node[zero,root] at (3,2) {0};
      \node[one] at (0,1) {1};
      \node[one] at (1,1) {1};
      \node[zero] at (2,1) {0};
      \node[one] at (3,1) {1};
      \node[one] at (0,0) {1};
      \node[zero,root] at (1,0) {0};
      \node[one] at (2,0) {1};
      \node[zero] at (3,0) {0};

      \node[text=blue!55!black] at (0,2.7) {$r$, $\delta_r=4.6$};
      \node[text=orange!70!black] at (3,2.7) {$s$, $\delta_s=4.4$};
      \node[text=teal!65!black] at (1,-.72) {$t$, $\delta_t=4.1$};
    \end{scope}
  }
  \node[font=\footnotesize,transform shape=false] at (4.18,-1.3)
    {The exponential shifts are identical in both instances; every unlabelled shift equals $0.1$};
\end{tikzpicture}
    \caption{An illustration of exponential-shift clustering on a bipartite
    graph before (left) and after (right) deleting the edge $e$. Black and
    white nodes indicate output colors $1$ and $0$, respectively. Colored
    regions group vertices by their selected roots, and double circles mark
    these roots. The exponential shifts are identical in the two panels.
    Deleting $e$ changes the selected root of $v$ from $r$ to $s$ and its
    output color from $1$ to $0$.}
    \label{fig:ldd-two-coloring}
\end{figure}

\paragraph{\textbf{From $2$-Coloring to $2$-SAT.}}
For $2$-SAT, we propagate scores along directed paths of implications between literals. A candidate root must be a \emph{feasible literal}, meaning that it is true in some satisfying assignment. Equivalently, in a satisfiable instance there is no directed path from that literal to its negation. Deleting a constraint can make new roots feasible, creating a second source of sensitivity in addition to changes in path lengths (\Cref{fig:ldd-two-sat}).

Assign independent uniform priorities in $[0,1]$ to constraints and retain those with priority at most $1-\eps/2$ to form the implication graph used to compute distances. A second, independent threshold, uniform on $[1-\eps/2,1]$, gives a larger subinstance used to test root feasibility. Give each literal an independent exponential shift of rate $\beta=\Theta_\Gamma(\eps)$, with the precise choice specified in \Cref{def:alg-bijunctive}. A literal's score is the maximum, over feasible roots, of the root's shift minus its directed distance to that literal in the implication graph of the smaller subinstance. Set each variable to $1$ when its positive literal has the higher score, and to $0$ otherwise. Path changes are bounded by expected path lengths as in $2$-coloring. For feasibility changes, the exponential shifts ensure that, as constraints are added to the subinstance used to test root feasibility, the expected number of times the root determining a variable's value changes is $O(\log n)$. Averaging over the second threshold yields a total sensitivity bound of $O_\Gamma(\eps^{-1}\log n)$.

\begin{figure}[htbp]
    \centering
\begin{tikzpicture}[
  x=1cm,y=1cm,
  every node/.style={font=\small},
  literal/.style={circle,draw=black,fill=white,
    minimum size=7mm,inner sep=0pt},
  true literal/.style={literal,fill=black!80,text=white},
  selected root/.style={double,double distance=1pt},
  arc/.style={->,draw=black!60,line width=.8pt,
    shorten <=4mm,shorten >=4mm},
  cluster/.style={line width=.6pt,rounded corners=5pt},
  annotation/.style={font=\footnotesize}
]
  \foreach \panel/\offset in {0/0,1/7.2} {
    \begin{scope}[xshift=\offset cm]
      \ifnum\panel=0
        \node[font=\small\bfseries] at (2.4,4.62)
          {Before deleting $c=(\neg x\vee\neg y)$};
        \path[cluster,draw=blue!55!black,fill=blue!8]
          (-.6,-.1)--(-.4,.4)--(1.82,2.1)--(1.82,3.95)
          --(2.98,3.95)--(2.98,1.25)--(2.08,1.25)
          --(.4,-.6)--(-.4,-.6)--cycle;
        \path[cluster,draw=teal!65!black,fill=teal!8]
          (1.82,-1.35) rectangle (2.98,1.15);
        \path[draw=black!25,fill=black!6] (0,2.6) circle (.53);
      \else
        \node[font=\small\bfseries] at (2.4,4.62)
          {After deleting $c=(\neg x\vee\neg y)$};
        \path[cluster,draw=blue!55!black,fill=blue!8]
          (-.68,2.12)--(-.68,3.08)--(1.92,3.95)--(2.98,3.95)
          --(2.98,1.25)--(1.92,1.25)--cycle;
        \path[cluster,draw=teal!65!black,fill=teal!8]
          (-.6,-.1)--(-.4,.4)--(1.95,1.15)--(2.98,1.15)
          --(2.98,-1.35)--(1.95,-1.35)--(-.4,-.6)--cycle;
      \fi
      \path[draw=orange!70!black,fill=orange!12,line width=.6pt]
        (4.8,2.6) circle (.55);
      \path[draw=black!25,fill=black!6] (4.8,-.1) circle (.53);

      \draw[arc] (0,2.6)--(2.4,3.35);
      \draw[arc] (0,2.6)--(2.4,1.85);
      \draw[arc] (2.4,3.35)--(2.4,1.85);
      \draw[arc] (2.4,3.35)--(4.8,2.6);
      \draw[arc] (2.4,1.85)--(4.8,2.6);
      \draw[arc] (2.4,.65)--(0,-.1);
      \draw[arc] (2.4,-.85)--(0,-.1);
      \draw[arc] (2.4,-.85)--(2.4,.65);
      \draw[arc] (4.8,-.1)--(2.4,.65);
      \draw[arc] (4.8,-.1)--(2.4,-.85);
      \draw[arc] (0,-.1)--(4.8,2.6);
      \draw[arc] (4.8,-.1)--(0,2.6);

      \ifnum\panel=0
        \draw[arc,draw=red!75!black,line width=1pt] (0,2.6)--(2.4,.65);
        \draw[arc,draw=red!75!black,line width=1pt] (2.4,3.35)--(0,-.1);
        \node[literal,draw=black!40] at (0,2.6) {$x$};
        \node[true literal,selected root] at (2.4,3.35) {$y$};
        \node[true literal] at (0,-.1) {$\neg x$};
        \node[annotation,text=black!55] at (0,1.53) {infeasible};
      \else
        \node[true literal,selected root] at (0,2.6) {$x$};
        \node[true literal] at (2.4,3.35) {$y$};
        \node[literal] at (0,-.1) {$\neg x$};
        \node[annotation,text=blue!65!black] at (0,1.53) {now feasible};
      \fi
      \node[true literal] at (2.4,1.85) {$z$};
      \node[true literal,selected root] at (4.8,2.6) {$w$};
      \node[literal] at (2.4,.65) {$\neg y$};
      \node[literal,selected root] at (2.4,-.85) {$\neg z$};
      \node[literal,draw=black!40] at (4.8,-.1) {$\neg w$};
      \node[annotation,text=black!55] at (4.8,-.98) {infeasible};

      \node[annotation] at (0,3.75) {$\delta_x=6$};
      \node[annotation] at (2.4,4.15) {$\delta_y=4.1$};
      \node[annotation] at (4.8,3.27) {$\delta_w=4.2$};
      \node[annotation] at (2.4,-1.58) {$\delta_{\neg z}=2.5$};
      \ifnum\panel=0
        \node at (2.4,-2.05) {$(x,y,z,w)=(0,1,1,1)$};
      \else
        \node at (2.4,-2.05) {$(x,y,z,w)=(1,1,1,1)$};
      \fi
    \end{scope}
  }
  \node[annotation] at (6,-2.6)
    {The exponential shifts are identical in both instances; every unlabelled shift equals $0.1$.};
\end{tikzpicture}
    \caption{A simplified illustration of literal scores on the full implication graph of
    $(\neg x\vee y)\wedge(\neg x\vee z)\wedge(\neg y\vee z)
    \wedge(\neg y\vee w)\wedge(\neg z\vee w)\wedge(x\vee w)\wedge c$,
    where $c=(\neg x\vee\neg y)$, before (left) and after (right) deleting $c$.
    Black and white nodes indicate true and false literals, respectively. Colored regions group literals according to the roots attaining their scores, and double circles mark these roots; gray regions mark infeasible literals. The root attaining a literal's score need not be the root that determines the value of its variable (as with $\neg z$). The newly feasible literal $x$ changes the output from $(0,1,1,1)$ to $(1,1,1,1)$. The nested random subinstances of the actual algorithm (\Cref{def:alg-bijunctive}) are omitted here.}
    \label{fig:ldd-two-sat}
\end{figure}

\subsubsection{Stable Approximability: Lower Bound}
To prove the lower bound for stable approximability, we develop a reduction that transfers lower bounds on average sensitivity between constraint languages related by primitive-positive (pp) definitions.
Earlier sensitivity-preserving reductions addressed worst-case sensitivity via the PCP framework~\cite{FlemingYoshida26}.

For average sensitivity, deleting one constraint before the reduction may remove several constraints after the reduction, whereas the available guarantee averages over single deletions. We therefore delete each replacement constraint independently with a fixed probability. This makes the successive deletions arising from one original constraint comparable to conditioned samples of the same random instance. For the instances used in the lower bound, we reserve the same collection of disjoint sets of new existential variables before and after deleting an original constraint, and assign distinct sets to distinct constraints uniformly at random. We couple the two reductions so that unchanged original constraints receive the same existential variables and the same choices of replacement constraints to retain. The reduced instances then have the same variable set and differ only by deletion of the retained replacement constraints generated by the deleted original constraint.

Finally, we adapt the locally testable code (LTC) construction of Yoshida and Zhang~\cite{YoshidaZhang26} to obtain an average-sensitivity lower bound for a CSP with four-variable parity constraints. Transferring this lower bound through the reduction completes the proof.

\subsection{Related Work}

\subsubsection{Classification of CSPs by Constraint Languages}

There is an extensive literature on the classification of CSPs by their constraint languages. Such classifications have been studied for satisfiability~\cite{schaefer1978complexity,FederVardi98,bulatov2017dichotomy,zhuk2020dichotomy}, optimization~\cite{nadia1995dichotomy,sanjeev2001approximability,raghavendra2008optimal,thapper2016complexity,kolmogorov2017complexity}, exact counting~\cite{bulatov2013complexity,dyer2013effective,cai2017complexity}, approximate counting~\cite{dyer2010approximation}, property testing~\cite{bhattacharyya2013algebraic,chen2019constant} and robust satisfiability~\cite{dalmau2013robust,BartoKozik16}. For Boolean languages, our characterization of stable approximability by bounded width coincides, assuming $\P\neq\NP$, with the characterization of polynomial-time robust solvability. Here, robust solvability means that a polynomial-time algorithm, given an almost satisfiable instance, finds an assignment whose value approaches one as the optimum approaches one.

\subsubsection{Logical Expressibility and Reductions between CSPs}
Our lower bound also compares the expressive power of constraint languages through primitive-positive (pp) definitions, a standard formalism for expressing the relations of one constraint language in terms of another. A pp definition expresses a relation as the projection of a conjunction of relations from the base language. Over finite domains, pp definability is characterized exactly by polymorphism preservation~\cite{jeavons1997closure,bulatov2005classifying,
barto2017polymorphism}. See~\cite{barto2017polymorphism} for related notions of pp interpretations and pp constructions for more general gadget reductions.

Our reduction transfers linear lower bounds on average sensitivity between constraint languages. Previous work studies reductions that preserve approximation quality~\cite{papadimitriou1991optimization,sanjeev2001approximability,martin2000relative}, fixed-parameter tractability~\cite{downey1995fixed}, fine-grained complexity~\cite{dell2021finegrained}, and worst-case sensitivity~\cite{FlemingYoshida26}.

\subsubsection{Stability and Solution Space Geometry}
Stable solvability has a geometric interpretation. Each satisfiable instance determines a set of solutions in Hamming space, and a randomized algorithm selects a distribution supported on this set. The question is whether these distributions can be chosen across all satisfiable instances of a given CSP so that, for each instance, the expected Wasserstein distance before and after a uniformly random constraint deletion is small. Work on random CSPs studies the geometry within a single solution space and its connection to \emph{computational phase transitions}~\cite{mezard2002analytic,mezard2005clustering,achlioptas2006solution,florent2007gibbs,achlioptas2008algorithmic,montanari2011reconstruction}.

\emph{Coupling independence}~\cite{CZ23} and average sensitivity both use the $1$-Wasserstein distance to compare distributions under local perturbations. Coupling independence bounds the distance between conditional Gibbs distributions under a change in one boundary spin. These distributions are specified by the model and the boundary conditions, whereas stable solvability allows the algorithm to choose distributions supported on solutions to achieve low sensitivity. Bounds on the influence of boundary spins have been used to obtain efficient approximate counting and sampling algorithms~\cite{chen2024fast2,wang2024sampling,he2025FPTAS,chen2025deterministic,chen2025counting}.

\section{Preliminaries and Notation}
\label{sec:preliminaries}

\subsection{Boolean Constraint Satisfaction Problems}
\label{subsec:boolean-csps}

We begin by formally defining Boolean constraint satisfaction problems (CSPs). A Boolean CSP instance is a pair $\+I = (V, \+C)$, where $V$ is a finite set of variables and $\+C$ is a finite multiset of constraints. Each constraint is a pair $c = (\mathbf{s}, R)$, where the \emph{scope} $\mathbf{s}$ is a $k$-tuple of variables drawn from $V$, and the \emph{constraint relation} $R \subseteq \{0,1\}^k$ is a $k$-ary Boolean relation. We write $\mathbf{s}_c$ for the scope of $c$ and $\var{c}$ for the set of distinct variables in this scope, where repeated variables in the scope are allowed. Note that each member of the multiset $\+C$ is a separate constraint, and constraints with the same scope and relation are treated separately. 

An \emph{assignment} for $\+I$ is a Boolean mapping $\sigma : V \to \{0,1\}$. We say that $\sigma$ \emph{satisfies} a constraint $c = (\mathbf{s}, R)$ if $\sigma(\mathbf{s}) \in R$. Furthermore, $\sigma$ is called a \emph{satisfying assignment} (or \emph{solution}) of $\+I$ if it satisfies every constraint in $\+C$. We denote the set of all solutions to $\+I$ by $\Sol(\+I)$. We call
$\+I$ \emph{satisfiable} when this set is nonempty and \emph{unsatisfiable} otherwise.

We often study Boolean CSPs parameterized by finite \emph{constraint languages}. A Boolean constraint language $\Gamma$ is a finite set of Boolean relations, each of finite positive arity. We assume throughout that $\Gamma$ is nonempty and every relation in $\Gamma$ is nonempty. Given a language $\Gamma$, we use $\mathrm{CSP}(\Gamma)$ to denote the class of all CSP instances whose constraint relations are drawn from $\Gamma$. We define
$r_\Gamma \defeq \max\bigl(\{\arity(R) : R \in \Gamma\}\bigr)$
as an upper bound on the relation arities.

Given a Boolean CSP instance $\+I=(V,\+C)$,
\begin{itemize}
    \item for any subset of constraints $\+E\subseteq \+C$, define
          $\+I[\+E]\defeq(V,\+E)$, the subinstance containing exactly the
          constraints in $\+E$;
    \item for any subset of variables $U\subseteq V$, define
          \[
            \+I[U]
            \defeq
            \bigl(U,\{c\in\+C:\var{c}\subseteq U\}\bigr),
          \]
          the subinstance induced by $U$.
\end{itemize}

As single-constraint deletions recur throughout, for a constraint $c\in\+C$ we adopt the simplified notation $\+I-c\defeq\+I[\+C\setminus\{c\}]$ for the Boolean CSP instance obtained by deleting $c$.

For a Boolean CSP instance $\+I=(V,\+C)$ and an assignment $x\in\bits^V$, let $\viol_{\+I}(x)$ be the number of
constraints of $\+I$ violated by $x$.

\begin{definition}[Solving and approximating]
Let $\+A$ be a randomized algorithm that maps each Boolean CSP instance
$\+I=(V,\+C)$ to a random assignment $\+A(\+I)\in\bits^V$.  We say that
$\+A$ \emph{solves} a satisfiable instance $\+I$ if
\[
  \Pr{\+A(\+I)\in\Sol(\+I)}=1.
\]
For a satisfiable instance $\+I$ and $\eps\in[0,1]$, we say that $\+A$
\emph{$(1-\eps)$-approximates} $\+I$ if
\begin{equation}
  \E{\viol_{\+I}(\+A(\+I))}\leq \eps|\+C|.
  \label{eq:approx-guarantee}
\end{equation}
Unless additional random choices are specified, probabilities and expectations involving $\+A$ are over its internal randomness. Algorithms specified only for satisfiable inputs are extended to all inputs by returning an arbitrary assignment on unsatisfiable inputs.

We say that the algorithm $\+A$ \emph{solves $\CSP(\Gamma)$} (respectively,
\emph{$(1-\eps)$-approximates $\CSP(\Gamma)$}) if the corresponding
condition holds for every satisfiable instance of $\CSP(\Gamma)$.  No
condition is imposed on the output of $\+A$ on
unsatisfiable instances.
\end{definition}
\subsection{Boolean Relation Classes and Primitive-Positive Definitions}
\label{subsec:boolean-relation-classes}
We describe the relation classes that determine our upper bounds, then recall how pp definitions let us express the constraints used in the lower bound.

Let $R\subseteq\bits^k$. A CNF definition of $R$ is a CNF formula $\phi(x_1,\ldots,x_k)$ whose satisfying assignments are exactly the tuples in $R$. A literal is \emph{positive} if it is an unnegated variable and \emph{negative} otherwise. A clause is \emph{positive} (resp., \emph{negative}) if all of its literals are positive (resp., negative).

\begin{definition}[Boolean relation classes]
A Boolean relation $R$ is \emph{$0$-valid} if $(0,\ldots,0)\in R$, and it is \emph{$1$-valid} if $(1,\ldots,1)\in R$. We further say that $R$ is
\begin{itemize}[leftmargin=2.2em]
  \item \emph{Horn} if it has a CNF definition in which every clause contains at most one positive literal;
  \item \emph{dual-Horn} if it has a CNF definition in which every clause contains at most one negative literal;
  \item \emph{bijunctive} if it has a $2$-CNF definition;
  \item \emph{affine} if it is the solution set of a system of linear equations over $\=F_2$.
\end{itemize}
A Boolean constraint language $\Gamma$ has any of these properties if \emph{every} relation $R\in\Gamma$ has that property.
\end{definition}

We remark that the binary equality relation $x=y \iff (\neg x\lor y)\land (x\lor \neg y)$ is Horn, dual-Horn, bijunctive, and affine.

Schaefer's dichotomy theorem has the following standard relational
formulation~\cite{schaefer1978complexity}.

\begin{theorem}[Schaefer's dichotomy theorem]
\label{thm:schaefer-dichotomy}
For every finite Boolean constraint language $\Gamma$, the decision problem
$\CSP(\Gamma)$ is solvable in polynomial time if $\Gamma$ is $0$-valid,
$1$-valid, Horn, dual-Horn, bijunctive, or affine, and is $\NP$-complete
otherwise.
\end{theorem}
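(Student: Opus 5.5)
Since this is the classical Schaefer theorem, I would prove it the standard way, in two parts: membership and tractability for the six classes, then hardness through the polymorphism Galois connection. Membership of $\CSP(\Gamma)$ in $\NP$ is immediate, because a satisfying assignment is a certificate. $\Gamma$ is finite, so I would fix once and for all a CNF (resp.\ linear-system) definition of each $R\in\Gamma$ in the relevant class. Replacing each constraint of an instance by this constant-size definition then gives an equivalent instance of Horn-SAT, dual-Horn-SAT, 2-SAT or an $\=F_2$-linear system, with only linear blowup. The six tractable cases are then handled as follows:
\begin{itemize}
\item $0$-valid and $1$-valid: output the constant assignment.
\item Horn: unit propagation computes the least model, or certifies that there is none.
\item dual-Horn: the symmetric procedure.
\item bijunctive: strongly connected components of the implication graph, checking whether some $x$ and $\neg x$ lie in the same component.
\item affine: Gaussian elimination.
\end{itemize}

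For hardness I would use polymorphisms. A relation is Horn, dual-Horn, bijunctive or affine iff it is preserved by coordinatewise $\wedge$, $\vee$, majority, or $x\oplus y\oplus z$, respectively. I would also use the Galois connection: if $\mathrm{Pol}(\Gamma)\subseteq\mathrm{Pol}(R)$, then $R$ is pp-definable from $\Gamma$ (together with equality, which is harmless). A pp definition gives a polynomial reduction from $\CSP(\Gamma\cup\{R\})$ to $\CSP(\Gamma)$: substitute the defining conjunction, with fresh existential variables for each constraint occurrence, and identify variables to eliminate equalities. So it suffices to show that when $\Gamma$ lies in none of the six classes, $\mathrm{Pol}(\Gamma)$ is small enough to pp-define an $\NP$-complete relation.

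I would first look at the unary polymorphisms. Since $\Gamma$ is neither $0$-valid nor $1$-valid, the constants are not polymorphisms, so the unary polymorphisms are the identity, possibly together with negation. For any polymorphism $f$, the map $x\mapsto f(x,\dots,x)$ is such a unary polymorphism. Hence every polymorphism is either idempotent or the negation of an idempotent one.

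The key step is the clone lemma. If the idempotent polymorphisms contain a non-projection, then they contain one of $\wedge$, $\vee$, majority, or minority. The plan is to take a minimal-arity non-projection and identify variables. A binary idempotent non-projection must be $\wedge$ or $\vee$. If every identification of two variables yields a projection, a case analysis on the ternary case (Świerczkowski-style) forces majority or minority; higher arity reduces to this. I expect this case analysis to be the main obstacle. Granting the lemma, every idempotent polymorphism is a projection, because any of the four operations would place $\Gamma$ in a tractable class.

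\textbf{Case 1: negation is not a polymorphism.} Then $\mathrm{Pol}(\Gamma)$ consists only of projections, so every Boolean relation is pp-definable, in particular $\{(x,y,z):x\vee y\vee z\}$ and its variants with negated literals. This gives a reduction from 3-SAT.

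\textbf{Case 2: negation is a polymorphism.} Then $\mathrm{Pol}(\Gamma)$ consists of projections and their negations, which preserve every self-complementary relation. In particular $\Gamma$ pp-defines not-all-equal-3-SAT, which is $\NP$-complete.
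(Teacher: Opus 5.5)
The paper does not prove this theorem. It quotes it from Schaefer as background, and no later argument depends on how it is proved. There is therefore no argument in the paper to match, and your outline, the standard polymorphism proof, is correct.

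Schaefer's original argument is more hands-on: it shows directly, through existentially quantified formulas and a combinatorial case analysis, that the hard relations are expressible. Your route instead concentrates the combinatorics in two tools. The first is one clone-theoretic fact: every idempotent Boolean clone other than the projections contains $\wedge$, $\vee$, majority, or minority. The second is the Galois connection between polymorphisms and pp definability. Your unary analysis then cleanly splits the constant-free setting into two cases. In the ``all relations'' case you reduce from 3-SAT; in the complement-closed case you reduce from NAE-3-SAT, i.e.\ hypergraph $2$-coloring. This is shorter and is the template of the general CSP dichotomy, at the cost of importing the Galois connection and a piece of Post's lattice. It also matches the tools the paper itself uses in its lower bound for stable approximability, where pp definitions and Post's classification carry the argument.

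One step in your sketch of the clone lemma needs adjusting: identifications alone do not force a ternary operation to be majority or minority. Consider an idempotent ternary Boolean operation all of whose binary minors are projections. It is determined by the set $S\subseteq\{1,2,3\}$ of odd-one-out positions at which it returns the minority value. $S=\emptyset$ gives majority, $|S|=1$ gives the three projections, and $|S|=3$ gives minority. But $|S|=2$ gives three further non-projections that are neither majority nor minority, for instance $f(x,y,z)=x$ if $x=y=z$ and $f(x,y,z)=\neg x$ otherwise. For these you must compose: $f(f(x,y,z),y,z)$ is majority, and the other two cases follow by permuting coordinates.

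For arity at least $4$, {\'S}wierczkowski's lemma makes a minimal-arity non-projection a semiprojection. On a two-element domain every tuple of length at least $3$ has a repeated entry, so a semiprojection is a projection. Hence the minimal arity is $2$ or $3$, which closes the lemma as you intended.
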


\subsubsection{Finite Duality}
\label{subsec:finite-duality}

The property of finite duality for a Boolean constraint language says that unsatisfiability can always be certified by examining a bounded number of variables. We formalize this as follows.

\begin{definition}[Finite duality]\label{def:finite-duality}
A constraint language $\Gamma$ has \emph{finite duality} if there exists a nonnegative integer $d_\Gamma$ such that every unsatisfiable instance $\+I \in \mathrm{CSP}(\Gamma)$ contains a subset of variables $U \subseteq V(\+I)$ with $|U| \leq d_\Gamma$ for which the induced subinstance $\+I[U]$ is unsatisfiable.
\end{definition}

The standard definition of finite duality uses finite obstruction sets for relational structures~\cite{larose2007characterization,BulatovKrokhinLarose08}, which is equivalent to \Cref{def:finite-duality} in the context of Boolean constraint languages. Furthermore, a constraint language $\Gamma$ has finite duality if and only if $\mathrm{CSP}(\Gamma)$ is first-order definable~\cite{rossman2005existential,atserias2008digraph}. We refer the reader to~\cite{BulatovKrokhinLarose08} for more equivalent characterizations of finite duality.

We provide a convenient equivalent relational characterization of finite duality for the purposes of this paper. The following definition is from~\cite{MahmoodMeierSchmidt21}.

\begin{definition}[Strict essential positivity and negativity]\label{def:strict-essential-positivity}
A Boolean relation $R$ is \emph{strictly essentially positive} (resp., \emph{strictly essentially negative}) if it has a CNF definition consisting exclusively of unit clauses and positive clauses (resp., negative clauses).\footnote{This differs from \emph{essential positivity} (resp., \emph{essential negativity}), which additionally allows the equality relation. These strict classes are also called $\OR$-conj and $\NAND$-conj, respectively: conjunctions of unit clauses (pins) and positive or negative clauses~\cite{dyer2012complexity}.}
\end{definition}

Recall that a Boolean constraint language $\Gamma$ is said to possess any of these properties if every relation $R \in \Gamma$ possesses that property. The following relational formulation specializes the Boolean finite-duality criterion in~\cite[Example~20]{BulatovKrokhinLarose08}.

\begin{proposition}[Boolean finite-duality criterion]
\label{prop:boolean-finite-duality-characterization}
A Boolean constraint language $\Gamma$ has finite duality if and only
if $\Gamma$ is $0$-valid, $1$-valid, strictly essentially positive or strictly essentially negative.
\end{proposition}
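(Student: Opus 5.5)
Both directions will be checked against the definition of finite duality given above (\Cref{def:finite-duality}).

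\emph{Sufficiency.} If $\Gamma$ is $0$-valid or $1$-valid, every instance is satisfiable, so $d_\Gamma=0$ works vacuously. Suppose $\Gamma$ is strictly essentially positive; the negative case follows by flipping all bits. Fix, for each $R\in\Gamma$, a CNF definition using only unit clauses and positive clauses, and expand every constraint of an instance $\+I$ into these clauses on its scope. Let $N$ be the set of variables forced to $0$ by some negative unit clause. Consider the assignment that is $0$ on $N$ and $1$ elsewhere. Every positive unit clause or positive clause it violates must have all its variables in $N$. So if $\+I$ is unsatisfiable, then either some variable receives both a positive and a negative unit clause, or some positive clause lies entirely in $N$. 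In either case, take $U$ to be the scope of the offending clause's constraint together with, for each variable of that clause, the scope of one constraint forcing it to $0$. This set has size at most $r_\Gamma(r_\Gamma+1)$. The induced subinstance $\+I[U]$ contains all these constraints and is therefore unsatisfiable. Hence $d_\Gamma=r_\Gamma(r_\Gamma+1)$ works.

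\emph{Necessity.} This is the main obstacle. My first attempt would be to invoke the general result of Larose--Loten--Tardif and \cite[Example~20]{BulatovKrokhinLarose08}: finite duality implies that the core of the template has a polymorphism-based characterization and, over the Boolean domain, rules out the relations $\mathrm{EQ}$ (when not $0$/$1$-valid) and implications. A self-contained route would proceed contrapositively. Assume $\Gamma$ is neither $0$-valid nor $1$-valid and is not strictly essentially positive or negative.

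The first step is to show via pp-definitions and pinning that $\Gamma$ can express, with constants available from some non-$0$-valid and some non-$1$-valid relation, either the implication $x\to y$ or both a non-trivial positive structure and a non-trivial negative structure. Two facts are used here. A relation that is not strictly essentially positive is not closed under the relevant polymorphism, namely the min-type operation fixing the unit behaviour, so some projection of it yields $x\to y$ or $x\neq y$; the dual statement holds for the negative case. Also, finite duality is preserved under pp-definitions without equality, together with constants, because bounded obstructions lift through gadgets of bounded size. This preservation is the delicate point: the gadgets must not use equality, since the strict classes are exactly those not allowing equality.

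The second step treats the chain instance $x_0=1$, $x_i\to x_{i+1}$, $x_m=0$ (or an odd $\neq$-cycle). This instance is unsatisfiable, yet every induced subinstance on fewer than $m$ variables is satisfiable, because the chain must break somewhere. Gadget-translating it gives unsatisfiable $\CSP(\Gamma)$ instances whose minimal obstructions grow unboundedly, so $\Gamma$ has no finite duality.

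I expect the case analysis in the first step, extracting implication or an unbounded-chain gadget from a relation that is neither positive-plus-units nor negative-plus-units, to be the hardest part. In the worst case I would simply cite the cited Boolean classification.
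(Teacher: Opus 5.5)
Your sufficiency argument is correct. Under the assignment that is $0$ exactly on variables hit by a negative unit clause, only a positive clause whose variables are all forced to $0$ can be violated. That clause's constraint, together with one forcing constraint per variable, gives an unsatisfiable induced subinstance on at most $r_\Gamma(r_\Gamma+1)$ variables. The paper gives no proof of this proposition; it simply cites \cite[Example~20]{BulatovKrokhinLarose08}. So if you fall back on citing the Boolean classification for necessity, you are doing exactly what the paper does. The self-contained route you sketch for necessity, however, has a genuine gap.

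\textbf{Where Step~1 fails.} No polymorphism can separate strictly essentially positive relations from the rest, because every operation preserves $\mathrm{EQ}=\{00,11\}$, and $\mathrm{EQ}$ is not strictly essentially positive. More generally, polymorphisms only see co-clones, which always contain equality. Yet $\{\{0\},\{1\}\}$ has finite duality while $\{\{0\},\{1\},\mathrm{EQ}\}$ does not.

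\textbf{A concrete counterexample.} Take $\Gamma=\{\mathrm{EQ},\{0\},\{1\}\}$, which lies in none of the four classes. Its polymorphisms are exactly the idempotent operations, including $\wedge$ (which breaks $\neq$) and minority $x\oplus y\oplus z$ (which breaks $x\to y$). So $\Gamma$ pp-defines neither $x\to y$ nor $x\neq y$, even with equality atoms. Its unbounded obstructions are pinned chains of $\mathrm{EQ}$-constraints, which your Step~2 does not cover. The ``positive plus negative structure'' alternative cannot rescue this, since unit clauses of both signs alone have finite duality.

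\textbf{A secondary, repairable issue.} Constants are not pp-definable over self-dual languages such as $\{\neq\}$. Identifying the $0$- and $1$-coordinates of a tuple of a non-$0$-valid relation yields $\{1\}$, $\neq$, or both constants without equality atoms, so this part can be fixed.

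\textbf{A working self-contained route.} It is implicit in Lemma~\ref{lemma:hard-instance-correctness}. Suppose $\Gamma$ lies outside the four classes. A solution of $\+I_N$ would yield $g_j$ with the mixing property \eqref{eq:first-layer-transfer}. That property applies $g_j$ only on a chosen coordinate subset $S$, so it is violated by $\mathrm{EQ}$. For instance, when $g_j(0,1)=1$, take $\mathbf a=(0,0)$, $\mathbf b=(1,1)$ and $S=\{1\}$ to get $(1,0)$. The property therefore forces one of the four classes, so $\+I_N$ is unsatisfiable. Moreover, any $U$ with $|U|\le N$ misses one of the $N+1$ disjoint layers, so $\+I_N[U]$ is satisfiable by part~(ii). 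Hence no bound $d_\Gamma$ exists, and this argument needs no pp-definitions or constants at all.
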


\subsubsection{Bounded Width}
\label{subsec:bounded-width}

The property of bounded width for a Boolean constraint language says that unsatisfiability can be certified via local propagation. We now state the canonical definition of bounded width, which is formulated in terms of solvability by local consistency algorithms~\cite{BartoKozik09CD,BartoKozik14}. We note that this definition is provided solely for completeness, as we do not directly employ this algorithmic characterization in our subsequent proofs.

\begin{definition}[Local consistency algorithms]
Let $\+I=(V,\+C)$ be a Boolean CSP instance. A partial assignment $h: U \to \{0,1\}$ defined on a subset of variables $U \subseteq V$ is \emph{locally satisfying} if
\[
  h(\mathbf{s}) \in R \qquad \text{for every constraint } c = (\mathbf{s}, R) \in \+C \text{ such that } \var{c} \subseteq U.
\]

Let $k$ and $\ell$ be integers such that $1 \leq k \leq \ell$. A \emph{$(k,\ell)$-strategy} for $\+I$ is a nonempty set $\+H$ of locally satisfying partial assignments such that:
\begin{enumerate}[label=(\roman*),leftmargin=2.2em]
  \item $|\mathrm{dom}(h)| \leq \ell$ for every $h \in \+H$;
  \item if $h \in \+H$ and $T \subseteq \mathrm{dom}(h)$, then $h_T \in \+H$, where $h_T$ denotes the restriction of the partial assignment $h$ to the subset $T$;
  \item if $S \subseteq T \subseteq V$ with $|S| \leq k$ and $|T| \leq \ell$, and $h \in \+H$ is a partial assignment with $\mathrm{dom}(h) = S$, then there exists an extension $g \in \+H$ such that $\mathrm{dom}(g) = T$ and $g_S = h$.
\end{enumerate}
\end{definition}

\begin{definition}[Bounded width]\label{def:bounded-width}
For fixed $1\leq k\leq \ell$, a Boolean language $\Gamma$ has \emph{width $(k,\ell)$} if every Boolean CSP
instance in $\CSP(\Gamma)$ that admits a nonempty $(k,\ell)$-strategy is
satisfiable.  

A Boolean language $\Gamma$ has \emph{bounded width} if it has width $(k,\ell)$ for
some fixed $1\leq k\leq\ell$. 
\end{definition}

Bounded width has several equivalent characterizations. It is equivalent to
the existence of a fixed Datalog program that recognizes exactly the
unsatisfiable instances~\cite{FederVardi98}, and to having bounded treewidth
duality~\cite{BulatovKrokhinLarose08}. Equivalently, unsatisfiability is
definable by an existential-positive infinitary sentence using finitely many
variables~\cite{kolaitis2000conjunctive,BulatovKrokhinLarose08}. In the
existential pebble-game formulation, there is a fixed pebble bound such that
the second player can respond to every move indefinitely if and only if the
instance is satisfiable~\cite{FederVardi98,kolaitis2000conjunctive}. Bounded width also
implies polynomial-time robust solvability~\cite{BartoKozik16}; the converse
holds assuming $\P\neq\NP$~\cite{dalmau2013robust}. We refer the reader
to~\cite{barto2017polymorphism} for further characterizations.

For Boolean languages, bounded width has the following relational
characterization~\cite{BartoKozik14}.

\begin{proposition}[Boolean bounded-width criterion]
\label{prop:bounded-width-equivalences}
A Boolean constraint language $\Gamma$ has bounded width if and only if $\Gamma$ is $0$-valid, $1$-valid, Horn, dual-Horn, or bijunctive.
\end{proposition}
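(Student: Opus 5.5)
This criterion is cited from \cite{BartoKozik14}, so the natural proof is to derive it from the algebraic characterization of bounded width rather than to rebuild it. I will use the theorem of Barto and Kozik: a finite constraint language (with the constants in its core adjoined, which does not matter here) has bounded width if and only if it omits types $\mathbf{1}$ and $\mathbf{2}$. Equivalently, it has weak near-unanimity polymorphisms of all large enough arities. Specialized to the Boolean domain, I will combine this with Post's lattice, which lists every Boolean clone. The task is then to decide which Boolean clones contain a weak near-unanimity operation, or equivalently which polymorphism clones $\mathrm{Pol}(\Gamma)$ are not contained in the affine/projection obstruction.

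\textbf{Sufficiency.} I would handle each of the five cases by exhibiting polymorphisms. If $\Gamma$ is $0$-valid (resp.\ $1$-valid), the constant operation $0$ (resp.\ $1$) is a polymorphism. Then every instance is satisfiable, every nonempty strategy trivially certifies nothing, and the language has width $(1,1)$ directly from \Cref{def:bounded-width}. If $\Gamma$ is Horn (resp.\ dual-Horn), binary $\wedge$ (resp.\ $\vee$) is a semilattice polymorphism. The $n$-ary $\wedge$ is a totally symmetric polymorphism, so $\Gamma$ has width $1$ by Feder--Vardi; concretely, arc consistency plus unit propagation decides these instances. If $\Gamma$ is bijunctive, the majority operation is a polymorphism, so $\Gamma$ has width $(2,3)$. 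This again follows from Feder--Vardi, via the near-unanimity theorem, or concretely from the fact that path consistency decides 2-SAT. These are all standard, and I would simply cite them.

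\textbf{Necessity.} Suppose $\Gamma$ is none of the five. By Post's lattice, a Boolean clone that contains none of the constants, $\wedge$, $\vee$, or majority is contained in the clone generated by $x\oplus y\oplus z$, possibly together with negation. So $\mathrm{Pol}(\Gamma)$ consists of affine operations over $\mathbb F_2$, which means $\Gamma$ pp-defines the ternary relation $x\oplus y\oplus z=0$ or its complement together with constants; this is the step where the Galois correspondence \cite{jeavons1997closure} enters. Linear equations over $\mathbb F_2$ do not have bounded width: Feder--Vardi show that no Datalog program recognizes unsatisfiable systems, and bounded width is preserved under pp-definitions. Hence $\Gamma$ lacks bounded width. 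The one technical point is the $0$-valid/$1$-valid edge: a non-valid language still needs constants in its core, which holds because its polymorphisms are idempotent or self-dual.

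The main obstacle is the Post-lattice case analysis: confirming that every clone lacking these five generators lies below the affine clone. I would discharge it by quoting the standard Boolean corollary in \cite{BartoKozik14} or \cite{barto2017polymorphism} rather than re-deriving it.
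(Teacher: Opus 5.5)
Your proposal is correct in outline, but it does more than the paper does. The paper gives no argument for this criterion and imports the Boolean statement directly from \cite{BartoKozik14}. You instead reconstruct it from more basic ingredients:
\begin{itemize}
\item for sufficiency, Feder--Vardi's polymorphism conditions (constants, semilattice, majority);
\item for necessity, Post's lattice to place $\mathrm{Pol}(\Gamma)$ inside the self-dual affine clone, then the Galois correspondence, Feder--Vardi's result that $\mathbb F_2$-linear equations lack bounded width, and preservation of bounded width under pp-definitions.
\end{itemize}
The citation is shorter. Your route shows that affine relations are the only obstruction, which is the fact the paper itself relies on through \Cref{prop:affine-expressibility-outside-bounded-width} when it pp-defines the self-dual parity relations $\Lambda_4$. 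Note also that the Barto--Kozik type-omitting theorem you announce at the start is not invoked by either half of your sketch. The Boolean case needs only the Feder--Vardi results, Post's lattice, and the fact about cores discussed below.

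Two details need repair.

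\textbf{Width of bijunctive languages.} In the paper's definition of $(k,\ell)$-strategies, a partial assignment is checked only against constraints whose variables all lie in $\mathrm{dom}(h)$. So a bijunctive language with a relation of arity $4$ need not have width $(2,3)$. Take $R(x,y,z,w)=(x=y)\wedge(z\neq w)$. The instance $R(a,b,u,v)\wedge R(u,v,c,d)$ is unsatisfiable, yet the set of all partial assignments on at most $3$ variables is a nonempty $(2,3)$-strategy for it. You should claim width $(2,\max\{3,r_\Gamma\})$, which follows because majority-closed relations are determined by their binary projections.

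\textbf{Adjoining constants.} Your parenthetical that adjoining constants ``does not matter here'' is wrong in exactly the hard case. If $\neg\in\mathrm{Pol}(\Gamma)$, as for $\Gamma=\Lambda_4$, then every relation pp-definable from $\Gamma$ is self-dual. Hence $\Gamma$ pp-defines neither $x\oplus y\oplus z=0$ nor any constant. The correct chain is:
\begin{itemize}
\item $\Gamma$ is neither $0$-valid nor $1$-valid, so its unary polymorphisms are the identity and possibly $\neg$; hence $\Gamma$ is a core.
\item For a core, adjoining the singleton relations $\{0\}$ and $\{1\}$ does not change whether it has bounded width (a standard fact about cores).
\item The enlarged language has only idempotent polymorphisms, all lying in the clone generated by $x\oplus y\oplus z$.
\item It therefore pp-defines three-variable linear equations, which lack bounded width.
\end{itemize}
Your last paragraph gestures at this, but ``needs constants in its core'' should be replaced by this precise statement.
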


\subsubsection{Primitive-Positive Definitions}
\label{subsec:pp-definitions}

A primitive-positive definition expresses a relation by an existentially
quantified conjunction of constraints. We use such definitions to express parity
relations over the original language $\Gamma$ when proving the lower bound on
average sensitivity for approximation algorithms.

\begin{definition}[Primitive-positive definition]
Let $S\subseteq\bits^k$. A \emph{primitive-positive definition} (pp definition) of
$S$ over $\Gamma$ is a formula of the following form~\cite{jeavons1997closure}:
\[
  S(x_1,\ldots,x_k)
  \quad\Longleftrightarrow\quad
  \exists z_1,\ldots,z_t\;
  \bigwedge_{j=1}^m \alpha_j,
\]
where $t,m\geq0$. Each $\alpha_j$ is an \emph{atomic formula}, or
\emph{atom}, either of the form $R(y_1,\ldots,y_{\arity(R)})$, where
$R\in\Gamma$, or of the form $y=y'$. All variables are among the displayed free
and existential variables, and repetitions are allowed. We say a pp
definition is \emph{equality-free} if it contains no equality atoms.
\end{definition}

A Boolean relation $R\subseteq\bits^k$ is \emph{self-dual} if, for every $\mathbf{a}\in\bits^k$, we have $\mathbf{a}\in R$ if and only if $\neg\mathbf{a}\in R$, where $\neg\mathbf{a}$ denotes the coordinatewise bit complement of $\mathbf{a}$. The following proposition follows from Post's classification~\cite{post1941twovalues}; the proof of Lemma~8 in~\cite{atserias2019operator} gives the constant-free expressibility argument, in particular for even-arity parity relations. We use only self-dual affine relations, so no constants are needed.

\begin{proposition}[Affine expressibility outside bounded width]
\label{prop:affine-expressibility-outside-bounded-width}
Let $\Gamma$ be a finite Boolean constraint language that does not have
bounded width. Then every nonempty self-dual affine Boolean relation
has a pp definition over $\Gamma$.
\end{proposition}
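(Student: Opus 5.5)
We plan to use the algebraic (Galois) correspondence between pp definability and polymorphisms, together with Post's lattice of Boolean clones. A relation $S$ has a pp definition over $\Gamma$ exactly when every polymorphism of $\Gamma$ preserves $S$. This holds over finite domains, and in the Boolean case the existential quantifiers and equality atoms suffice without constants. So it is enough to show that every polymorphism of $\Gamma$ preserves every nonempty self-dual affine relation.

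\textbf{Step 1: locate $\mathrm{Pol}(\Gamma)$ in Post's lattice.} By \Cref{prop:bounded-width-equivalences}, $\Gamma$ is not $0$-valid, not $1$-valid, not Horn, not dual-Horn, and not bijunctive. Each of these properties corresponds to a polymorphism:
\begin{itemize}
  \item $0$-valid: the constant $0$;
  \item $1$-valid: the constant $1$;
  \item Horn: binary $\wedge$;
  \item dual-Horn: binary $\vee$;
  \item bijunctive: the majority operation.
\end{itemize}
So $\mathrm{Pol}(\Gamma)$ contains none of these five operations. We then inspect Post's lattice for the clones avoiding all of them. Such a clone contains no constants, no semilattice operation, and no majority. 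We expect this leaves exactly the subclones of the clone $L$ of all affine (linear) operations. Among those clones that contain no constant, the maximal one is the clone generated by the minority $x\oplus y\oplus z$ together with negation.

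\textbf{Step 2: show these operations preserve the target relations.} Every operation in that maximal clone has the form $x_1\oplus\cdots\oplus x_m\oplus b$ with $m$ odd. Such an operation preserves every affine relation, because an affine subspace coset is closed under odd sums. Adding the constant $b=1$ amounts to complementation, which preserves $S$ because $S$ is self-dual. Hence every polymorphism of $\Gamma$ preserves $S$, and the Galois correspondence yields a pp definition of $S$ over $\Gamma$.

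\textbf{Main obstacle.} The hard part is the case analysis in Post's lattice in Step 1. We must verify that every Boolean clone excluding the constants, $\wedge$, $\vee$ and majority is contained in the clone of odd-arity parities with optional negation. For this we would use the standard fact that the non-affine clones all contain one of these operations once constants are excluded; this is the content cited from Post~\cite{post1941twovalues}. One subtle point is that the classical Galois correspondence permits equality atoms, but the definition of pp definition above explicitly allows them, so this causes no problem. Alternatively, we could follow \cite{atserias2019operator}, Lemma~8, and build the even-arity parity relations directly: first produce $\mathrm{NEQ}$ and the $4$-ary parity $x_1\oplus x_2\oplus x_3\oplus x_4=0$, then compose them to obtain arbitrary self-dual affine systems.
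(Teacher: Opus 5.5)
Your proof is correct and takes essentially the route the paper means by ``follows from Post's classification.'' Via the Pol--Inv Galois correspondence, the five excluded properties rule out $c_0$, $c_1$, $\wedge$, $\vee$ and majority as polymorphisms, so in Post's lattice $\mathrm{Pol}(\Gamma)$ is one of $I_2$, $N_2$, $L_2$, $L_3$, all contained in $L_3$, whose invariant relations are exactly the self-dual affine ones; this gives constant-free pp definitions with equality atoms, which the paper's definition allows and which is the constant-freeness the paper's pointer to Lemma~8 is meant to secure.
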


A $k$-ary relation $R$ is \emph{irredundant} if, for every pair of distinct coordinates $i,j$, there exists $\mathbf a\in R$ such that $a_i\ne a_j$. Equivalently, $R$ does not force any two distinct coordinates to be equal in every tuple. The following equality-elimination lemma is standard~\cite{dalmau2013robust}.
\begin{lemma}[Equality-free pp definitions for irredundant relations]\label{lem:equality-free}
    If an irredundant relation $R$ is pp-definable over a finite constraint language $\Gamma$, then $R$ has an equality-free pp definition over $\Gamma$.
\end{lemma}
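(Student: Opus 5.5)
We start from a pp definition of $R$ over $\Gamma$ that may contain equality atoms, and remove these atoms one at a time by identifying variables. The only obstruction is an equality that identifies two distinct free variables, and irredundancy of $R$ will rule this out.

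Fix a pp definition $R(x_1,\ldots,x_k)\iff\exists z_1,\ldots,z_t\,\bigwedge_{j}\alpha_j$ over $\Gamma$. Pick any equality atom $y=y'$ and split into cases.
\begin{itemize}
\item If $y$ and $y'$ are the same variable, the atom is trivially true, so we simply delete it.
\item If at least one of them is an existential variable, say $y'=z_i$, we substitute $y$ for $z_i$ in every atom and drop the quantifier $\exists z_i$. Any witnessing assignment of the old formula has $z_i$ equal to $y$, so it restricts to a witness of the new formula. Conversely, a witness of the new formula extends to one of the old formula by setting $z_i\defeq y$. Hence the defined relation is unchanged, and the number of equality atoms drops by one. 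Substitution may create repeated variables inside $\Gamma$-atoms, which the definition allows.
\item If both are free variables, say $x_i=x_j$ with $i\ne j$, we argue that this case cannot occur.
\end{itemize}

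To see that the last case is impossible, note that the equality atom remains a conjunct throughout. Here we use that each earlier substitution replaces an existential variable by another variable, so an atom $x_i=x_j$ between distinct free variables stays of that form; and a fresh atom between free variables can arise only from an original chain of equalities. Therefore every tuple $\mathbf a$ satisfying the formula has $a_i=a_j$. This means every $\mathbf a\in R$ has $a_i=a_j$, which contradicts irredundancy of $R$.

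Each step strictly decreases the number of equality atoms, so after finitely many steps we reach an equality-free pp definition of $R$ over $\Gamma$. We only need to confirm that the free variables $x_1,\ldots,x_k$ are never renamed. This holds because we always eliminate the existential side of an equality, so the displayed free variables stay pairwise distinct and the formula still defines $R$ on $(x_1,\ldots,x_k)$.

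The main point requiring care is exactly this bookkeeping: we must eliminate existential variables rather than free ones, and we must see that a surviving free–free equality forces a coordinate equality in $R$. Once the substitution order is fixed, the remaining steps are routine.
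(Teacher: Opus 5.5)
Your proof is correct: substituting away an existential variable in each equality preserves the defined relation. Any surviving equality $x_i=x_j$ between distinct free variables would be a conjunct of a formula still defining $R$, which forces $a_i=a_j$ on all of $R$ and contradicts irredundancy. The paper gives no proof of its own; it cites the lemma as standard from Dalmau and Krokhin, and your equality-elimination argument is that standard argument.
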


\subsection{Average Sensitivity}
\label{subsec:average-sensitivity}

We study the stability of Boolean CSPs using the notion of \emph{average sensitivity} from~\cite{VarmaYoshida23}. We first give some basic definitions.

For a finite set $V$ and $x,y\in\bits^V$, their (unnormalized) \emph{Hamming
distance} is the number of differing positions:
\[
  \distH(x,y)\defeq |\{v\in V:x_v\ne y_v\}|.
\]

A \emph{coupling} of two distributions $\mu$ and $\nu$ over the same space $\Omega$ is a joint distribution over $\Omega\times \Omega$ whose first marginal is $\mu$ and whose second marginal is $\nu$.

We use $W_1(\cdot,\cdot)$ to denote the \emph{$1$-Wasserstein distance} (also known as Earth mover's distance) with respect to the Hamming metric. Thus, for probability distributions $\mu,\nu$ on $\bits^V$, define
\[
  \Wass(\mu,\nu)
  \defeq \min_{\pi\in\Pi(\mu,\nu)}
       \=E_{(X,Y)\sim\pi}[\distH(X,Y)],
\]
where $\Pi(\mu,\nu)$ is the set of couplings of $\mu$ and $\nu$.

We now give the formal notion of \emph{average sensitivity} adapted to Boolean CSPs. Unless stated otherwise, a randomized algorithm $\+A$ takes as input a Boolean CSP instance $\+I=(V,\+C)$ and outputs an assignment $x\in \{0,1\}^V$.

\begin{definition}[Average sensitivity]
For a Boolean CSP instance $\+I=(V,\+C)$ with $\+C\ne\emptyset$ and a
(randomized) algorithm $\+A$, define
\begin{equation}
  \ASens(\+A,\+I)
  \defeq \frac1{|\+C|}\sum_{c\in\+C}
      \Wass\negthinspace\left(
        \Law(\+A(\+I)),\Law(\+A(\+I-c))
      \right)
  \label{eq:constraint-asens}
\end{equation}
as the average sensitivity of $\+A$ on input $\+I$. We set $\ASens(\+A,\+I)\defeq0$ when $\+C=\emptyset$.

\end{definition}

\section{Dichotomy for Stable Solvability}
\label{sec:stable-solvability}

In this section, we prove the dichotomy for stable solvability of Boolean CSPs, beginning with its formal statement.

\begin{theorem}[Dichotomy for stable solvability of Boolean CSPs]
\label{thm:stable-solvability-dichotomy}
For every finite Boolean constraint language $\Gamma$, exactly one of the following holds:
\begin{enumerate}[label=(\roman*),leftmargin=2.2em]
  \item $\Gamma$ has finite duality, and there is a deterministic algorithm
        $\+A_\Gamma$ that solves $\CSP(\Gamma)$ and satisfies
        $\ASens(\+A_\Gamma,\+I)=O_\Gamma(1)$ for every
        satisfiable instance $\+I$ of $\CSP(\Gamma)$.
  \item $\Gamma$ does not have finite duality, and there exists an unbounded set $\+N_\Gamma\subseteq\=N$ such that, for every randomized
        algorithm $\+A$ that solves $\CSP(\Gamma)$ and every $n\in\+N_\Gamma$, there exists a satisfiable instance $\+I$ of $\CSP(\Gamma)$ such that
        $|V(\+I)|=n$ and
        $\ASens(\+A,\+I)=\Omega_\Gamma(n)$.
\end{enumerate}
\end{theorem}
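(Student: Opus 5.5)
The two alternatives are separated by \Cref{prop:boolean-finite-duality-characterization}, so the proof has two parts: an upper bound for each of the four relational cases, and a lower bound whenever finite duality fails. Since each part is conditioned on whether $\Gamma$ has finite duality, exactly one of them applies, and the ``exactly one'' claim follows.

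\textbf{Upper bound.} If $\Gamma$ is $0$-valid or $1$-valid, the algorithm outputs the constant all-zero or all-one assignment. This solves every instance and has zero sensitivity.

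Now suppose $\Gamma$ is strictly essentially positive, and fix, for each $R\in\Gamma$, a CNF definition made of unit clauses and positive clauses. On input $\+I$, the algorithm sets $x_v=0$ exactly when $v$ appears in a negative unit clause of the definition of some constraint in $\+C$, and sets $x_v=1$ otherwise.

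To check correctness, let $\sigma$ be any solution of $\+I$. Every negative unit clause forces its variable to $0$ in $\sigma$, so $x$ agrees with $\sigma$ on those variables. Positive unit clauses are satisfied by $x$: $\sigma$ sets such a variable to $1$, so it cannot also be forced to $0$, and hence $x$ sets it to $1$. A positive clause is satisfied by $\sigma$ through some literal $v$ with $\sigma(v)=1$. That $v$ is not forced to $0$, so $x_v=1$ and $x$ satisfies the clause as well.

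Deleting a constraint $c$ can only change the forced set on $\var{c}$. Therefore $\distH\le r_\Gamma$ and $\ASens\le r_\Gamma$. The strictly essentially negative case is symmetric.

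\textbf{Lower bound.} Call an instance minimally unsatisfiable if it is unsatisfiable but $\+I-c$ is satisfiable for every constraint $c$. When $\Gamma$ lacks finite duality, I would build, following the obstruction construction of Larose et al.~\cite{larose2007characterization}, a family of minimally unsatisfiable instances $\+J$ with three properties:
\begin{itemize}
  \item $|\+C(\+J)|=O_\Gamma(|V(\+J)|)$;
  \item the incidence graph is connected;
  \item the incidence graph has diameter $D=\Omega_\Gamma(|V(\+J)|)$.
\end{itemize}
Concretely, I would aim for path-like chains of bounded-size gadgets glued along variables, for example implication chains built via $\Gamma$.

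\emph{Key lemma.} Let $c\neq c'$ be constraints of $\+J$. For any $\sigma\in\Sol(\+J-c)$ and $\sigma'\in\Sol(\+J-c')$, every path from $c$ to $c'$ in the incidence graph contains a variable on which $\sigma$ and $\sigma'$ differ.

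To prove it, suppose some path has all its variable nodes in the agreement set $A$, and let $S$ be the set of constraints reachable from $c$ in the incidence graph through variables of $A$ only. By assumption $c'\in S$. Define $\tau$ to equal $\sigma'$ on variables appearing in $S$, and $\sigma$ elsewhere. Constraints in $S$ other than $c'$ are satisfied by $\tau$. But $c'$ itself, though in $S$, is not guaranteed by $\sigma'$, which solves $\+J-c'$. So the connectivity argument has to be run from the correct side: the component boundary is what makes $\tau$ consistent, and a constraint with a variable outside $A$ has its $A$-variables where the two solutions agree.

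\emph{Main obstacle.} I expect this to be the hardest step, and the lemma above is not quite the right form. The right statement is: there exist many pairs of constraints whose deleted-instance solutions must disagree on linearly many variables. My first attempt would be to take the actual satisfiable instance to be $\+I=\+J-c_0$ for an endpoint $c_0$. Then compare $\+I-c_i$ with $\+J-c_i$ along a shortest path $c_0,c_1,\dots,c_D$, using the triangle inequality through the common instance $\+J-\{c_i,c_j\}$.

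The intended conclusion is that solutions of $\+J-c_i$ and $\+J-c_j$ are at Hamming distance at least about $|i-j|$. Summing over consecutive pairs along the path then shows that the average over $c$ of $\Wass(\Law(\+A(\+I)),\Law(\+A(\+I-c)))$ is $\Omega(D\cdot D/|\+C|)=\Omega(n)$. Finally, $\+N_\Gamma$ is the set of sizes $|V(\+J)|$ realized by the family.
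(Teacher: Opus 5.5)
Your upper bound is correct and matches the paper's argument, and the three properties you list for $\+J$ are the right ones. The lower bound, however, has genuine gaps.

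\textbf{The key lemma is false.} It is not merely hard to prove. Let $\+J$ be the $5$-cycle with constraints $e_k=\mathrm{NEQ}(v_k,v_{k+1})$ (indices mod $5$); it is minimally unsatisfiable. Take the solutions $\sigma=(0,0,1,0,1)$ of $\+J-e_0$ and $\sigma'=(0,1,0,0,1)$ of $\+J-e_2$ (values on $v_0,\dots,v_4$). They agree on $v_0,v_3,v_4$, so the path $e_0,v_0,e_4,v_4,e_3,v_3,e_2$ contains no disagreeing variable. Even if your lemma held, it would only give $\distH\ge1$.

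The correct statement, which is the paper's, has the opposite quantifier: some path from $c$ to $c'$ has \emph{all} its variable nodes in the disagreement set $D$. You get it by running your splicing argument in the graph that keeps every constraint node but only the variables of $D$. Suppose $c$ and $c'$ lay in different components. Take $\sigma'$ on the variables of the component $K$ of $c$ and $\sigma$ elsewhere. Because $\sigma=\sigma'$ off $D$, each constraint sees a tuple coming entirely from one solution. Then $c$ is satisfied by $\sigma'$ and $c'$ by $\sigma$, so $\+J$ would be satisfiable. This gives $\distH(\sigma,\sigma')\ge d_{G_{\mathrm{inc}}}(c,c')/2$, the separation you assert but never derive.

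\textbf{The averaging cannot use the single instance $\+I=\+J-c_0$.} Every solution of $\+J-c_0$ also solves $\+J-c_0-c$. So an algorithm may return its output on $\+J-c_0$ for each $\+J-c_0-c$ and have zero sensitivity on $\+I$. The lower bound must therefore be adversarial over a family.

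The paper takes the instances $\+I_i=\+J-c_i$ for the constraints $c_1,\dots,c_L$ of a shortest path, padded with isolated variables. It observes $\+I_i-c_j=\+I_j-c_i$ and obtains $D_{ij}+D_{ji}\ge|i-j|$, where $D_{ij}$ is the $c_j$-deletion term of $\ASens(\+A,\+I_i)$. Summing over \emph{all} pairs $i<j$ gives $\sum_i\ASens(\+A,\+I_i)\ge(L^3-L)/(6|\+C(\+J)|)$. Consecutive pairs alone would certify only a total of order $L$. Hence some $\+I_i$ has sensitivity $\Omega(L^2/|\+C(\+J)|)=\Omega(n)$.

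\textbf{The family $\+J$ is never constructed.} Implication chains are unavailable in general: $\Gamma=\{\mathrm{NEQ}\}$ lacks finite duality but, being affine, cannot pp-define implication. The paper's layered instance $\+I_N$ fills this gap. A satisfying assignment of $\+I_N$ would force $\Gamma$ into one of the four finite-duality classes, so $\+I_N$ is unsatisfiable. Deleting any layer makes it satisfiable, so a minimal unsatisfiable subinstance meets all $N+1$ layers. That subinstance has diameter at least $2N$, with $|V|\le4N$ and $|\+C|\le\kappa_\Gamma N$.
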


\subsection{Proof of the Upper Bound}
We first prove the upper-bound direction of \Cref{thm:stable-solvability-dichotomy}, which follows naturally from the characterization of finite duality in \Cref{prop:boolean-finite-duality-characterization}.

\begin{proof}[Proof of \Cref{thm:stable-solvability-dichotomy}\textup{(i)}]
Suppose that $\Gamma$ has finite duality. By \Cref{prop:boolean-finite-duality-characterization}, it suffices to consider the following cases:
\begin{itemize}
    \item If $\Gamma$ is $0$-valid (resp., $1$-valid), the constant all-zero (resp., all-one) assignment satisfies every instance. This trivial algorithm clearly has average sensitivity $0$.
    
    \item If $\Gamma$ is strictly essentially positive, fix a CNF definition of each relation using unit clauses and positive clauses. On a satisfiable instance, replace each constraint by the clauses of its fixed CNF definition. Assign $0$ to variables appearing in negative unit clauses of the resulting formula and $1$ to all remaining variables. Every negative unit clause is satisfied. Any other clause must contain a variable not forced to $0$, since otherwise the instance would be unsatisfiable; hence it too is satisfied. Thus the variables forced to $0$ are exactly those witnessed by negative unit clauses in this expansion. Removing a constraint can remove such a witness only for variables in its scope, so the average sensitivity is at most $r_\Gamma$. For strictly essentially negative languages, interchange $0$ and $1$ and use positive unit clauses as witnesses.
\end{itemize}
This concludes the proof of \Cref{thm:stable-solvability-dichotomy}\textup{(i)}.
\end{proof}

\subsection{Proof of the Lower Bound}
We first relate the distance between two deleted constraints in a minimally unsatisfiable instance to the Hamming distance between the resulting solutions. We then construct such instances with linearly many constraints and large diameter. Comparing outputs through the common instance obtained by deleting both constraints turns this separation into an average-sensitivity lower bound.

\subsubsection{Minimally Unsatisfiable Instances}

The next lemma shows that solutions of two different single-constraint deletions must disagree along a path joining those constraints. This will let us convert incidence-graph distance into separation of the algorithm's output distributions.

\begin{definition}[Minimally unsatisfiable instances]\label{def:minimal-unsatisfiable}
We say that a Boolean CSP instance $\+I=(V,\+C)$ is
\emph{minimally unsatisfiable} if $\+I$ is unsatisfiable but
$\+I-c$ is satisfiable for every constraint
$c\in\+C$.
\end{definition}

\begin{definition}[Incidence graph]\label{def:incidence-graph}
    For a Boolean CSP instance $\+I$, its associated \emph{incidence graph} $G_{\mathrm{inc}}(\+I)$ is the bipartite graph whose nodes are the variables and
constraints of $\+I$, with an edge between $c$ and $v$ if
$v\in\var{c}$. Write $d_{G_{\mathrm{inc}}(\+I)}$ for its graph
distance.
\end{definition}

The following lemma shows that the disagreement set contains the variable nodes of a path between the deleted constraints.

\begin{lemma}[Disagreement along a path]
\label{lem:disagreement-propagation}
Let $\+I=(V,\+C)$ be a minimally unsatisfiable Boolean CSP instance in
$\CSP(\Gamma)$, and let $c,c'\in\+C$ be distinct constraints.  Then, for all
$x\in\Sol(\+I-c)$ and
$y\in\Sol(\+I-c')$, the incidence graph
$G_{\mathrm{inc}}(\+I)$ contains a path from $c$ to $c'$ all of whose
variable nodes are variables on which $x$ and $y$ differ.
\end{lemma}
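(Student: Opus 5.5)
We argue by contradiction, using a "patching" argument. Fix $x\in\Sol(\+I-c)$ and $y\in\Sol(\+I-c')$, and let $D\defeq\{v\in V: x_v\ne y_v\}$ be the disagreement set. Consider the subgraph $H$ of $G_{\mathrm{inc}}(\+I)$ induced by all constraint nodes together with the variable nodes in $D$. A path from $c$ to $c'$ in $G_{\mathrm{inc}}(\+I)$ has all of its variable nodes in $D$ exactly when it is a path in $H$. So suppose that $c$ and $c'$ lie in different connected components of $H$, and let $K$ be the component containing $c$. We will build a solution of $\+I$, contradicting unsatisfiability.

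Define an assignment $z$ as follows. For $v\in D$, set $z_v\defeq y_v$ if $v$ is a node of $K$, and $z_v\defeq x_v$ otherwise. For $v\notin D$, set $z_v\defeq x_v=y_v$. We check that every constraint $d\in\+C$ is satisfied by $z$ by splitting into two cases.

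\emph{Case $d\in K$.} Every variable of $\var{d}$ lying in $D$ is adjacent to $d$ in $H$, hence belongs to $K$. Variables of $\var{d}$ outside $D$ have $x_v=y_v$. Therefore $z$ agrees with $y$ on $\var{d}$. Since $c'\notin K$, we have $d\ne c'$, so $y$ satisfies $d$, and hence so does $z$.

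\emph{Case $d\notin K$.} No variable of $\var{d}$ in $D$ can lie in $K$, since otherwise $d$ would be adjacent to it in $H$ and would belong to $K$. Thus $z$ agrees with $x$ on $\var{d}$. Since $c\in K$, we have $d\ne c$, so $x$ satisfies $d$, and hence so does $z$.

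It follows that $z\in\Sol(\+I)$, contradicting unsatisfiability. Hence $c$ and $c'$ lie in the same component of $H$, and a path in $H$ joining them is the desired path.

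The argument is short, and we expect no real obstacle. The only point needing care is defining $H$ so that constraint nodes are always kept while variable nodes are restricted to $D$; this is what lets the component $K$ separate the constraints cleanly. Repeated variables in scopes cause no trouble, since the argument depends only on $\var{d}$. We also note that the argument uses only the unsatisfiability of $\+I$ and the given solutions $x$ and $y$. Minimality merely guarantees that such solutions exist.
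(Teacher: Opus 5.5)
Your proof is correct and follows the paper's own argument: the same subgraph $H$ (all constraint nodes, only the variable nodes in $D$), the same component $K$ containing $c$, and the same patched assignment that copies $y$ on $K$ and $x$ elsewhere. The paper additionally notes that $D$ meets both $\var{c}$ and $\var{c'}$, but your argument does not need this, since it already follows from the connectivity conclusion.
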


\begin{proof}
Fix $x\in\Sol(\+I-c)$ and $y\in\Sol(\+I-c')$, and define the disagreement set $D\defeq\{v\in V:x_v\ne y_v\}$.

We first claim that $D\cap\var{c}\ne\emptyset$. Otherwise,
$x(\mathbf{s}_c)=y(\mathbf{s}_c)$; since $y$ satisfies $c$, so does $x$,
contradicting the unsatisfiability of $\+I$. Similarly, we have
$D\cap\var{c'}\ne\emptyset$.

Let $H$ be the subgraph of $G_{\mathrm{inc}}(\+I)$ obtained by retaining every constraint
node and only the variable nodes in $D$.  We claim that $c$ and $c'$ are
connected in $H$.  Otherwise, let $K$ be the component of $H$ containing
$c$, and define an assignment $z$ by taking the values of $y$ on the
variable nodes of $K$ and the values of $x$ everywhere else.  Every
constraint in $K$ sees its scoped tuple from $y$, while every constraint
outside $K$ sees its scoped tuple from $x$.  In particular, $z$ satisfies
$c$ using $y$ and satisfies $c'$ using $x$; all other constraints are
satisfied by both assignments.  Thus $z$ satisfies $\+I$, contradicting its
unsatisfiability.  Hence a path from $c$ to $c'$ in $H$ has all of its
variable nodes in $D$.
\end{proof}

The lemma immediately yields the following corollary.
\begin{corollary}
\label{cor:critical-repair-distance}
Let $\+I=(V,\+C)$ be a minimally unsatisfiable Boolean CSP instance in
$\CSP(\Gamma)$, and let $c,c'\in\+C$ be distinct constraints.  Then, for all
$x\in\Sol(\+I-c)$ and
$y\in\Sol(\+I-c')$,
\[
 \distH(x,y)
 \ge \frac12\,d_{G_{\mathrm{inc}}(\+I)}(c,c').
\]
\end{corollary}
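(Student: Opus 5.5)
The plan is to derive the bound directly from \Cref{lem:disagreement-propagation}. Fix $x\in\Sol(\+I-c)$ and $y\in\Sol(\+I-c')$, and set $D\defeq\{v\in V:x_v\ne y_v\}$. The lemma gives a path $P$ in $G_{\mathrm{inc}}(\+I)$ from $c$ to $c'$ whose variable nodes all lie in $D$. I will first shorten $P$ to a simple path, which is possible because the node set only shrinks, so every variable node still lies in $D$. The distinct variable nodes of $P$ therefore form a subset of $D$, and hence $\distH(x,y)=|D|$ is at least the number of variable nodes on $P$.

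Next I count the variable nodes on $P$. Since $G_{\mathrm{inc}}(\+I)$ is bipartite with constraints on one side and variables on the other, a simple path from the constraint node $c$ to the constraint node $c'$ alternates between the two sides. Its length $L$ is therefore even, and it contains exactly $L/2$ variable nodes, all distinct by simplicity. Because $c\ne c'$, we have $L\ge 2$.

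Finally, $L\ge d_{G_{\mathrm{inc}}(\+I)}(c,c')$ by the definition of graph distance. Combining the steps gives
\[
\distH(x,y)\ \ge\ |D|\ \ge\ \frac{L}{2}\ \ge\ \frac12\, d_{G_{\mathrm{inc}}(\+I)}(c,c').
\]

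There is no real obstacle. The only points needing care are passing to a simple path, so that variable nodes are not double-counted, and using bipartiteness to get exactly $L/2$ variable nodes.
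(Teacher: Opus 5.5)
Your proposal is correct and matches the paper, which states this corollary as an immediate consequence of \Cref{lem:disagreement-propagation} without writing out a proof. Your count is exactly the implicit argument: a simple path of length $L\ge d_{G_{\mathrm{inc}}(\+I)}(c,c')$ between two constraint nodes of the bipartite incidence graph has $L/2$ distinct variable nodes, all lying in the disagreement set.
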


\subsubsection{Construction of the Hard Instance}

For every integer $N\geq1$, we need a minimally unsatisfiable instance with $O_\Gamma(N)$ variables and constraints and a connected incidence graph of diameter at least $2N$. The $O_\Gamma(N)$ upper bound on the number of constraints ensures that averaging over deletions retains the separation supplied by a long path. We obtain such an instance by adapting the ``products of links and squares'' construction of Larose et al.~\cite{larose2007characterization}, used there to characterize finite duality through bounded diameters of critical obstructions.

\begin{definition}[Products of links and squares]\label{def:hard-instance-duality}
    For any Boolean constraint language $\Gamma$ and any integer $N\geq 1$, we define the Boolean CSP instance $\+I_N=(V_N,\+C_N)$ as follows:
    \begin{itemize}
        \item \textbf{Variables:} For every $i\in\{0,\ldots,N\}$ and $a,b\in\bits$, we introduce a variable $x_{i,a,b}$. These variables are distinct except for the following identifications at indices $0$ and $N$:
        \[
          x_{0,a,0}=x_{0,a,1}\quad(a\in\bits),
          \qquad
          x_{N,0,b}=x_{N,1,b}\quad(b\in\bits).
        \]
        For $0\leq i\leq N$, write
        \[
          L_i\defeq\{x_{i,a,b}:a,b\in\bits\}
        \]
        for the $i$th layer.  Thus
        $V_N=\bigcup_{i=0}^N L_i$.

        \item \textbf{Constraints:} For every relation $R\in\Gamma$ of arity $r$, every $\mathbf{a}=(a_1,\ldots,a_r),\mathbf{b}=(b_1,\ldots,b_r)\in R$, every $j\in\{1,\ldots,N\}$, and every $\boldsymbol{\lambda}=(\lambda_1,\ldots,\lambda_r)\in\{j-1,j\}^r$, we add the following constraint to $\+C_N$:
        \[
          \bigl(
            (x_{\lambda_1,a_1,b_1},\ldots,
             x_{\lambda_r,a_r,b_r}),
            R
          \bigr).
        \]
    \end{itemize}
\end{definition}

Part (i) of the following lemma is a Boolean specialization of~\cite[Theorems 2.5 and 4.7]{larose2007characterization}, and part (ii) follows from the projection argument in~\cite[Lemma 4.6]{larose2007characterization}. For completeness, we give a direct proof of (i) using the Boolean characterization of finite duality in \Cref{prop:boolean-finite-duality-characterization}, and include the projection argument for (ii).
\begin{lemma}[Properties of the instance $\+I_N$]
\label{lemma:hard-instance-correctness}
Suppose that $\Gamma$ does not have finite duality.  For every $N\geq1$, the
instance $\+I_N=(V_N,\+C_N)$ in \Cref{def:hard-instance-duality} satisfies:
\begin{enumerate}[label=(\roman*),leftmargin=2.2em]
  \item $\+I_N$ is unsatisfiable;
  \item for every $k\in\{0,\ldots,N\}$, the induced subinstance
  $\+I_N[V_N\setminus L_k]$ is satisfiable.
\end{enumerate}
\end{lemma}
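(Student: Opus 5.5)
The plan is to read an assignment $\sigma$ of $\+I_N$ layer by layer as a sequence of binary Boolean operations. For each $i$, set $f_i(a,b)\defeq\sigma(x_{i,a,b})$. The constraints of $\+I_N$ then say exactly the following: for every $R\in\Gamma$, every $\mathbf a,\mathbf b\in R$, every $j$ and every $\boldsymbol\lambda\in\{j-1,j\}^r$, the tuple $\bigl(f_{\lambda_t}(a_t,b_t)\bigr)_t$ lies in $R$. The identifications say that $f_0$ depends only on its first argument and $f_N$ only on its second.

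Part (ii) is the easy half, and I would do it first by the projection argument. Fix $k$. In $\+I_N[V_N\setminus L_k]$, every surviving constraint lies inside $L_{j-1}\cup L_j$ with $j\le k$ (if $j-1,j<k$) or $j-1>k$. Define $\sigma(x_{i,a,b})\defeq a$ for $i<k$ and $\sigma(x_{i,a,b})\defeq b$ for $i>k$. These values are consistent with the identifications: at layer $0$ (when $k\neq0$) the value depends only on $a$, and at layer $N$ (when $k\ne N$) only on $b$; identifications inside $L_k$ are irrelevant. A surviving constraint built from $\mathbf a,\mathbf b\in R$ then receives the tuple $\mathbf a$ or $\mathbf b$, so it is satisfied.

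For part (i), suppose for contradiction that $\sigma$ satisfies $\+I_N$. Taking $\boldsymbol\lambda$ constant shows each $f_i$ is a binary polymorphism of $\Gamma$. Hence $u_i(a)\defeq f_i(a,a)$ is a unary polymorphism. By \Cref{prop:boolean-finite-duality-characterization}, $\Gamma$ is neither $0$-valid nor $1$-valid, so $u_i$ is the identity or the negation; negation is possible only if $\Gamma$ is self-dual, and in that case I compose with it to reduce to idempotent $f_i$. The idempotent binary Boolean operations are $\mathrm{pr}_1$, $\mathrm{pr}_2$, $\wedge$ and $\vee$. At the ends, $f_0=\mathrm{pr}_1$ and $f_N=\mathrm{pr}_2$. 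So somewhere along the chain there is an index $j$ where the "side" changes: $f_{j-1}\in\{\mathrm{pr}_1,\wedge,\vee\}$ and $f_j\in\{\mathrm{pr}_2,\wedge,\vee\}$, with the two different. I would then do a case analysis on this crossing pair using the mixed $\boldsymbol\lambda$ constraints.

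In the case $(\mathrm{pr}_1,\mathrm{pr}_2)$, every relation is closed under arbitrary coordinatewise mixing of two of its tuples. Hence $R$ is a product of unary relations, i.e.\ definable by unit clauses, which is strictly essentially positive. In the case $(\mathrm{pr}_1,\wedge)$, closure under $\mathbf a\wedge(\mathbf b\vee \mathbf e_{\bar S})$ for every $S$, together with the symmetric choices, should force $R$ to be defined by unit clauses and negative clauses (strictly essentially negative). Dually, $\vee$ should give strict essential positivity. The case $(\wedge,\vee)$ should be handled similarly. Each case contradicts \Cref{prop:boolean-finite-duality-characterization}.

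The main obstacle is exactly this link analysis: showing rigorously that each crossing link forces $\Gamma$ into one of the four finite-duality classes. This includes keeping track of the negated (self-dual) variants so that the reduction to idempotent operations is legitimate. I would first try to prove a single lemma: if two binary polymorphisms $f,g$ are "linked" (the mixed condition holds) and $f(a,b)\ne g(a,b)$ for some $a\ne b$, then $\Gamma$ is strictly essentially positive or negative. I would prove it by extracting, from $R$, a CNF whose non-unit clauses all have the same sign.
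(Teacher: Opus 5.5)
Your part (ii) is the paper's projection argument. For (i), your framework is also the paper's: layer $i$ read as a binary operation $f_i$, consecutive layers linked by the mixed-$\boldsymbol\lambda$ constraints, $f_0$ depending only on $a$ and $f_N$ only on $b$. However, the proposal stops where the proof of (i) actually begins, and one step you do state is not justified.

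\textbf{The normalization step.} Negating only those layers with $u_i=\neg$ does not preserve the linking constraints. A mixed tuple $\bigl(f_{\lambda_t}(a_t,b_t)\bigr)_t$ is not the image of a single polymorphism, so flipping only its coordinates with $\lambda_t=j$ need not stay in $R$. You would first have to show that all $u_i$ coincide. This does hold: if $u_{j-1}=\mathrm{id}$ and $u_j=\neg$, then $\mathbf b=\mathbf a$ with arbitrary $\boldsymbol\lambda$ makes every $R$ closed under flipping any set of coordinates, so $R=\bits^r$ and $\Gamma$ is $0$-valid. Only after this is a uniform negation legitimate.

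The paper avoids the issue by pre-composing instead. With $f_0(a,b)=\alpha(a)$, $f_N(a,b)=\beta(b)$, and $\alpha,\beta$ non-constant (hence bijective) endomorphisms, it sets $g_i(a,b)\defeq f_i(\alpha(a),\beta(b))$. The linking condition survives because $\alpha(\mathbf a),\beta(\mathbf b)\in R$. This gives $g_0=\mathrm{pr}_1$ and $g_N=\mathrm{pr}_2$ without assuming the middle layers are idempotent.

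\textbf{The link lemma.} The lemma you flag as the main obstacle is the entire content of (i), and you only assert it (``should force''). The paper closes it with two choices.

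First, take $j$ minimal with $g_j\ne g_0$. Then the left partner is always $\mathrm{pr}_1$, and pairs like $(\wedge,\vee)$ never arise. The mixed constraint now says: replacing the entries of $\mathbf a\in R$ on any set $S$ by $g_j(a_t,b_t)$, with $\mathbf b\in R$, stays in $R$. If $g_j(0,0)=1$ (resp.\ $g_j(1,1)=0$), take $\mathbf b=\mathbf a$ and $S=\{t:a_t=0\}$ (resp.\ $\{t:a_t=1\}$); this gives $1$-validity (resp.\ $0$-validity). So non-idempotent layers are harmless.

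Second, if $g_j$ is idempotent with $g_j(0,1)=1$, use an extremal tuple. A maximum-weight $\mathbf c\in R$ dominates every tuple of $R$: otherwise $\mathbf a=\mathbf c$, $S=\{i\}$, and $\mathbf b$ with $b_i=1>c_i$ produce a heavier tuple. Next, $\mathbf b=\mathbf c$ and $S=\{t:a_t<a'_t\}$ show $\mathbf a'\in R$ whenever $\mathbf a\le\mathbf a'\le\mathbf c$ with $\mathbf a\in R$. So after pinning the zeros of $\mathbf c$, $R$ is upward closed, i.e.\ strictly essentially positive. The case $g_j(1,0)=0$ is dual, and $g_j\ne\mathrm{pr}_1$ makes the four cases exhaustive.

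With this extremal-tuple argument and the corrected normalization, your plan becomes the paper's proof.
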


\begin{proof}
We first prove (i). Suppose for contradiction that $h\in\Sol(\+I_N)$. For each $i\in\{0,\ldots,N\}$, define the Boolean function $f_i:\bits^2\to\bits$ such that
\[
  f_i(a,b)\defeq h(x_{i,a,b}) \quad (a,b\in \bits).
\]
The equalities $x_{0,a,0}=x_{0,a,1}$ and $x_{N,0,b}=x_{N,1,b}$ imply that $f_0(a,0)=f_0(a,1)$ and $f_N(0,b)=f_N(1,b)$, respectively. Hence, there exist functions $\alpha,\beta:\bits\to\bits$ such that for all $a,b\in \bits$,
\begin{equation}\label{eq:f-initialize}
  f_0(a,b)=\alpha(a),
  \qquad
  f_N(a,b)=\beta(b).
\end{equation}
By the definition of the constraints in \Cref{def:hard-instance-duality}, for every relation $R\in\Gamma$ of arity $r$, every $\mathbf{a}=(a_1,\ldots,a_r),\mathbf{b}=(b_1,\ldots,b_r)\in R$, every $j\in\{1,\ldots,N\}$, and every $\boldsymbol{\lambda}\in\{j-1,j\}^r$, we have
\begin{equation}
  \bigl(f_{\lambda_1}(a_1,b_1),\ldots,
        f_{\lambda_r}(a_r,b_r)\bigr)\in R.
  \label{eq:hard-instance-mixing}
\end{equation}
Taking $j=1$ with every $\lambda_i=0$ and $j=N$ with every $\lambda_i=N$, we find that for every $r$-ary $R\in\Gamma$ and every $\mathbf{a},\mathbf{b}\in R$, both
\[
  (\alpha(a_1),\ldots,\alpha(a_r))
  \quad\text{and}\quad
  (\beta(b_1),\ldots,\beta(b_r))
\]
belong to $R$ (i.e., both $\alpha$ and $\beta$ are endomorphisms of $\Gamma$). Note that if either $\alpha$ or $\beta$ is constant with value $c\in\{0,1\}$, then $\Gamma$ is $c$-valid; by \Cref{prop:boolean-finite-duality-characterization}, this implies $\Gamma$ has finite duality, which contradicts our hypothesis. Thus, $\alpha$ and $\beta$ are both permutations of $\bits$. Since the only permutations on $\bits$ are involutions (the identity and the bitwise complement), they satisfy $\alpha(\alpha(a)) = a$ and $\beta(\beta(b)) = b$.

For each $i\in\{0,\ldots,N\}$, define the transformed Boolean function $g_i:\bits^2\to\bits$ as
\[
  g_i(a,b)\defeq
  f_i\bigl(\alpha(a),\beta(b)\bigr) \quad (a,b\in \bits).
\]
By \eqref{eq:hard-instance-mixing}, since $\alpha(\mathbf{a}), \beta(\mathbf{b}) \in R$, for every $R\in\Gamma$ of arity $r$, every $\mathbf{a},\mathbf{b}\in R$, every $j\in\{1,\ldots,N\}$, and every $\boldsymbol{\lambda}\in\{j-1,j\}^r$, we still have
\begin{equation}\label{eq:hard-instance-mixing-transformed}
      \bigl(g_{\lambda_1}(a_1,b_1),\ldots,
        g_{\lambda_r}(a_r,b_r)\bigr)\in R.
\end{equation}
Moreover, applying the involution property to \eqref{eq:f-initialize}, we have, for all $a,b\in\bits$,
\begin{equation}\label{eq:transformed-initialization}
  g_0(a,b)=a,
  \qquad
  g_N(a,b)=b.
\end{equation}
Hence, $g_0\ne g_N$.

Let $j\in [N]$ be the minimum index such that $g_j\neq g_0$. Fix an $r$-ary $R\in\Gamma$, $\mathbf{a}=(a_1,\ldots,a_r),\mathbf{b}=(b_1,\ldots,b_r)\in R$, and $S\subseteq [r]$. In \eqref{eq:hard-instance-mixing-transformed}, set $\lambda_i=j$ for $i\in S$ and $\lambda_i=j-1$ for $i\notin S$. Because $g_{j-1} = g_0$, the resulting tuple $\mathbf{c}=(c_1,\ldots,c_r)\in R$ is given by
\begin{equation}
  c_i\defeq
  \begin{cases}
    g_j(a_i,b_i),&i\in S,\\
    a_i,&i\notin S
  \end{cases}
  \qquad(1\leq i\leq r).
  \label{eq:first-layer-transfer}
\end{equation}
We then consider the following cases:
\begin{itemize}[leftmargin=2.2em]
  \item If $g_j(0,0)=1$, take $\mathbf{b}=\mathbf{a}$ and $S=\{i:a_i=0\}$ in \eqref{eq:first-layer-transfer}. This shows that $(1,\ldots,1)\in R$ for every $R\in \Gamma$, and therefore $\Gamma$ is $1$-valid.

  \item If $g_j(1,1)=0$, take $\mathbf{b}=\mathbf{a}$ and $S=\{i:a_i=1\}$ in \eqref{eq:first-layer-transfer}. This shows that $(0,\ldots,0)\in R$ for every $R\in \Gamma$, and therefore $\Gamma$ is $0$-valid.

  \item Suppose $g_j(0,0)=0$, $g_j(1,1)=1$, and $g_j(0,1)=1$. Fix an $r$-ary $R\in\Gamma$ and choose $\mathbf{c}\in R$ such that the number of ones in $\mathbf{c}$ is maximized. We claim that every $\mathbf{d}\in R$ satisfies $\mathbf{d}\leq\mathbf{c}$ coordinatewise. Suppose for contradiction that $c_i=0$ while $d_i=1$ for some $1\leq i\leq r$. Applying \eqref{eq:first-layer-transfer} with $\mathbf{a}=\mathbf{c}$, $\mathbf{b}=\mathbf{d}$, and $S=\{i\}$ yields a tuple in $R$ with strictly more ones than $\mathbf{c}$, contradicting its maximality. Hence, the claim holds.

        For any $\mathbf a\in R$ and $\mathbf a\leq\mathbf a'\leq\mathbf c$, applying \eqref{eq:first-layer-transfer} with $\mathbf b=\mathbf c$ and $S=\{i:a_i=0<a_i'\}$ yields $\mathbf a'\in R$.

        Therefore, after fixing the coordinates where $c_i=0$ to zero, $R$ is upward closed. A positive CNF for the remaining coordinates, together with the negative unit clauses $\neg x_i$ for the fixed coordinates, shows that $R$ is strictly essentially positive.

  \item Suppose $g_j(0,0)=0$, $g_j(1,1)=1$, and $g_j(1,0)=0$. As in the previous case, we can show that for each $r$-ary $R\in\Gamma$, choosing $\mathbf{c}\in R$ such that the number of ones in $\mathbf{c}$ is minimized guarantees that every $\mathbf{d}\in R$ satisfies $\mathbf{d}\geq\mathbf{c}$ coordinatewise.

        For any $\mathbf a\in R$ and $\mathbf c\leq\mathbf a'\leq\mathbf a$, applying \eqref{eq:first-layer-transfer} with $\mathbf b=\mathbf c$ and $S=\{i:a_i=1>a_i'\}$ yields $\mathbf a'\in R$.

        Therefore, after fixing the coordinates where $c_i=1$ to one, $R$ is downward closed. A negative CNF for the remaining coordinates, together with the positive unit clauses $x_i$ for the fixed coordinates, shows that $R$ is strictly essentially negative.
\end{itemize}
By \eqref{eq:transformed-initialization} and $g_j\neq g_0$, it cannot happen that $g_j(0,0)=0$, $g_j(1,1)=1$, $g_j(0,1)=0$, and $g_j(1,0)=1$. Hence, the four cases above are exhaustive. In each case, $\Gamma$ has finite duality by \Cref{prop:boolean-finite-duality-characterization}, a contradiction. Thus, $\+I_N$ is unsatisfiable.

For (ii), fix $k\in\{0,\ldots,N\}$ and assign
\[
  x_{i,a,b}\longmapsto
  \begin{cases}
    a,&i<k,\\
    b,&i>k
  \end{cases}
\]
on $V_N\setminus L_k$. Since every constraint (as in \Cref{def:hard-instance-duality}) lies within the union of two consecutive layers, each remaining constraint in $\+I_N[V_N\setminus L_k]$ lies entirely on one side of $L_k$. It is therefore mapped either to the tuple $\mathbf{a}\in R$ or to the tuple $\mathbf{b}\in R$, and is thus satisfied.
\end{proof}

\begin{lemma}[Minimally unsatisfiable instance with large diameter]
\label{lem:hard-instance-minimal-subinstance}
Suppose that $\Gamma$ does not have finite duality.  There is a constant
$\kappa_\Gamma$ such that, for every $N\geq1$, there is a minimally
unsatisfiable instance $\+F_N$ of $\CSP(\Gamma)$ whose incidence graph is connected and which satisfies
\[
 |V(\+F_N)|\leq4N,
 \qquad
 |\+C(\+F_N)|\leq\kappa_\Gamma N,
 \qquad
 \operatorname{diam}(G_{\mathrm{inc}}(\+F_N))\geq2N.
\]
\end{lemma}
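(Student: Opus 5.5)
The plan is to extract $\+F_N$ from the instance $\+I_N$ of \Cref{def:hard-instance-duality} by greedy constraint deletion, and then read off its size, connectivity and diameter from \Cref{lemma:hard-instance-correctness} and the layered structure of $\+I_N$.

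\textbf{Construction and size bounds.} By \Cref{lemma:hard-instance-correctness}(i), $\+I_N$ is unsatisfiable. Delete constraints one at a time as long as the instance stays unsatisfiable. This yields a subset $\+E\subseteq\+C_N$ such that $\+I_N[\+E]$ is minimally unsatisfiable in the sense of \Cref{def:minimal-unsatisfiable}. We then discard the variables that lie in the scope of no constraint of $\+E$, and let $\+F_N$ be the result; discarding them does not affect (minimal) unsatisfiability.

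For the size bounds, count the variables of $\+I_N$ after the identifications. Layers $L_0$ and $L_N$ have $2$ variables each and every other layer has $4$, so
\[
  |V(\+F_N)|\le|V_N|=2+2+4(N-1)=4N.
\]
For the constraints,
\[
  |\+C(\+F_N)|\le|\+C_N|= N\sum_{R\in\Gamma}|R|^2\,2^{\arity(R)}.
\]
This is at most $\kappa_\Gamma N$ with $\kappa_\Gamma\defeq\sum_{R\in\Gamma}|R|^2 2^{\arity(R)}$.

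\textbf{Connectivity.} Suppose the incidence graph of $\+F_N$ were disconnected. Since there are no isolated variables, every component contains a constraint, so there are at least two components that each contain a constraint. Components share no variables, so if every component were satisfiable we could combine their solutions into a solution of $\+F_N$. Hence some component $K$ is unsatisfiable on its own. But then deleting any constraint outside $K$ leaves an unsatisfiable instance, which contradicts minimality.

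\textbf{Diameter.} First, every layer $L_k$ contains a variable of $\+F_N$. Otherwise all constraints of $\+E$ would have scopes in $V_N\setminus L_k$, so $\+I_N[\+E]$ would be a subinstance of $\+I_N[V_N\setminus L_k]$ (up to isolated variables). That instance is satisfiable by \Cref{lemma:hard-instance-correctness}(ii), a contradiction. So we may pick variables $u\in L_0$ and $w\in L_N$ of $\+F_N$.

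Now take any path from $u$ to $w$ in $G_{\mathrm{inc}}(\+F_N)$; one exists by connectivity. Consecutive variable nodes on the path share a constraint. Every constraint has its scope inside $L_{j-1}\cup L_j$ for some $j$, so the layer indices of consecutive variables differ by at most $1$. The path therefore visits at least $N+1$ variable nodes, and since variable and constraint nodes alternate, it has length at least $2N$. This gives $\operatorname{diam}(G_{\mathrm{inc}}(\+F_N))\ge 2N$.

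\textbf{Main obstacle.} I expect the step needing the most care to be the transfer from \Cref{lemma:hard-instance-correctness}(ii) to the layer-covering claim for the minimal subinstance, together with the bookkeeping for the identified variables in layers $0$ and $N$. Both reduce to the observation that every constraint lies in two consecutive layers.
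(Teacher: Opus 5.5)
Your proposal is correct and follows the paper's proof essentially step for step. You take an inclusion-minimal unsatisfiable subinstance of $\+I_N$, obtain connectivity from minimality, use \Cref{lemma:hard-instance-correctness}\textup{(ii)} to show every layer is hit, and bound path lengths via the two-consecutive-layers structure, with the same counts $|V_N|=4N$ and $\kappa_\Gamma=\sum_{R\in\Gamma}2^{\arity(R)}|R|^2$. Your connectivity argument (an unsatisfiable component makes any constraint outside it redundant) is just the contrapositive of the paper's (each component is a proper subset, hence satisfiable, so the solutions combine).
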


\begin{proof}
By \Cref{lemma:hard-instance-correctness}\textup{(i)}, the instance $\+I_N$ is unsatisfiable. Choose an inclusion-minimal subset $\+E \subseteq \+C_N$ for which the constraint subinstance $\+I_N[\+E]$ remains unsatisfiable, and define
\[
  \+F_N\defeq\left(\bigcup_{c\in\+E}\var{c},\+E\right).
\]
By definition, $\+F_N$ is minimally unsatisfiable.

We first prove the diameter bound for $\+F_N$. The incidence graph $G_{\mathrm{inc}}(\+F_N)$ must be connected; otherwise, the constraints in each connected component would form a proper subset of $\+E$ and, by minimality, would admit a satisfying assignment. Combining these assignments would satisfy $\+F_N$, a contradiction.

Recall the layers $L_0,L_1,\dots,L_N$ of $\+I_N$ in \Cref{def:hard-instance-duality}. Note that we must have $V(\+F_N) \cap L_k \neq \emptyset$ for each $0\leq k\leq N$ by \Cref{lemma:hard-instance-correctness}\textup{(ii)} and the unsatisfiability of $\+F_N$. Choose $u \in V(\+F_N) \cap L_0$ and $v \in V(\+F_N) \cap L_N$. By construction in \Cref{def:hard-instance-duality}, a constraint generated for index $j$ contains variables only from $L_{j-1} \cup L_j$. Consequently, any path from $u$ to $v$ must traverse at least $N$ constraint nodes in $G_{\mathrm{inc}}(\+F_N)$, implying $\operatorname{diam}(G_{\mathrm{inc}}(\+F_N)) \geq 2N$.

Finally, the variable count is bounded by $|V(\+F_N)|\leq|V_N|=4N$. Set
\[
    \kappa_\Gamma\defeq\sum_{R\in\Gamma}2^{\arity(R)}|R|^2.
\]
The construction in \Cref{def:hard-instance-duality} then gives
$|\+C(\+F_N)|\leq|\+C_N|=\kappa_\Gamma N$.
\end{proof}

We now prove the lower-bound direction of \Cref{thm:stable-solvability-dichotomy}, thereby completing the dichotomy for stable solvability.

\begin{proof}[Proof of \Cref{thm:stable-solvability-dichotomy}\textup{(ii)}]
Suppose that $\Gamma$ does not have finite duality. Fix an algorithm $\+A$ that solves $\CSP(\Gamma)$ and an integer $n\geq 8$. Set $N\defeq\lfloor n/4\rfloor$ and let $\+F_N$ be the minimally unsatisfiable instance given by \Cref{lem:hard-instance-minimal-subinstance}. By the diameter bound in \Cref{lem:hard-instance-minimal-subinstance}, we can choose a shortest path $P$ in the incidence graph $G_{\mathrm{inc}}(\+F_N)$ whose length is at least $2N$ and let $c_1,\ldots,c_L$ be its constraint nodes, listed in their sequence of appearance on $P$.

Because $P$ alternates between variable and constraint nodes and has length at least $2N$, it must contain at least $N$ constraint nodes; thus, $L\geq N$. Furthermore, because any subpath of a shortest path is itself a shortest path, the distance between constraint nodes in the incidence graph satisfies $d_{G_{\mathrm{inc}}(\+F_N)}(c_i,c_j)=2|i-j|$.

Let $W$ be a set of $n-|V(\+F_N)|$ new variables.  For $1\leq i\leq L$,
define the instance
\[
  \+I_i\defeq
  \bigl(V(\+F_N)\cup W,\,\+C(\+F_N)\setminus\{c_i\}\bigr).
\]
For distinct $i,j\in\{1,\ldots,L\}$, define
\[
  D_{ij}\defeq
  \Wass\bigl(\Law(\+A(\+I_i)),\Law(\+A(\+I_i-c_j))\bigr).
\]
Because $\+F_N$ is minimally unsatisfiable, each subinstance $\+I_i$ is
satisfiable. Since $\+A$ correctly outputs a solution for every $\+I_i$,
the restriction of its output to $V(\+F_N)$ solves $\+F_N-c_i$.
Consequently, \Cref{cor:critical-repair-distance} combined with
$d_{G_{\mathrm{inc}}(\+F_N)}(c_i,c_j)=2|i-j|$ implies that
\[
  \Wass\bigl(\Law(\+A(\+I_i)),\Law(\+A(\+I_j))\bigr)\geq|i-j|.
\]
Moreover, since removing $c_j$ from $\+I_i$ yields exactly the same instance
as removing $c_i$ from $\+I_j$ (i.e., $\+I_i-c_j=\+I_j-c_i$), the resulting
distributions are identical. The triangle inequality therefore yields
\[
  D_{ij}+D_{ji}
  \geq\Wass\bigl(\Law(\+A(\+I_i)),\Law(\+A(\+I_j))\bigr)
  \geq|i-j|.
\]
In the definition of average sensitivity \eqref{eq:constraint-asens} for the instance $\+I_i$, the term corresponding to the removal of constraint $c_j$ is exactly $D_{ij}$. Since all terms in the sum are nonnegative and $|\+C(\+I_i)|\leq\kappa_\Gamma N$, we can bound the sum of the average sensitivities over the path constraints:
\[
  \sum_{i=1}^L\ASens(\+A,\+I_i)
  \geq\frac{1}{\kappa_\Gamma N}
        \sum_{1\leq i<j\leq L}(D_{ij}+D_{ji})
  \geq\frac{1}{\kappa_\Gamma N}
        \sum_{1\leq i<j\leq L}(j-i)
  =\frac{L^3-L}{6\kappa_\Gamma N}.
\]
Dividing by $L$ to average over the $L$ instances and applying the bounds $L\geq N$, $N\geq 2$, and $N\geq n/8$, we deduce that there exists $i\in\{1,\ldots,L\}$ such that
\[
  \ASens(\+A,\+I_i)
  \geq\frac{L^2-1}{6\kappa_\Gamma N}
  \geq\frac{N^2-1}{6\kappa_\Gamma N}
  \geq\frac{n}{96\kappa_\Gamma}.
\]
Finally, $|V(\+I_i)|=n$ by construction. This proves the
$\Omega_{\Gamma}(n)$ bound for every $n\geq8$.
\end{proof}

\section{Dichotomy for Stable Approximability}
\label{sec:stable-approximability}

In this section, we prove the dichotomy for stable approximability of Boolean CSPs, beginning with its formal statement.

\begin{theorem}[Dichotomy for stable approximability of Boolean CSPs]
\label{thm:stable-approximability-dichotomy}
For every finite Boolean constraint language $\Gamma$,
exactly one of the following holds:
\begin{enumerate}[label=(\roman*),leftmargin=2.2em]
  \item $\Gamma$ has bounded width, and for every $\eps\in(0,1]$, there is
        a randomized algorithm $\+A$ that
        $(1-\eps)$-approximates $\CSP(\Gamma)$ and, for every satisfiable
        instance $\+I=(V,\+C)$ of $\CSP(\Gamma)$ with $|V|=n\geq 2$,
        \[
          \ASens(\+A,\+I)=O_\Gamma(\eps^{-1}\log n).
        \]
  \item $\Gamma$ does not have bounded width, and there exist
        $\eps_\Gamma\in(0,1]$ and an unbounded set
        $\+N_\Gamma\subseteq\=N$ such that, for every randomized algorithm
        $\+A$ that $(1-\eps_\Gamma)$-approximates $\CSP(\Gamma)$ and every
        $n\in\+N_\Gamma$,
        there exists a satisfiable instance $\+I=(V,\+C)$ of $\CSP(\Gamma)$ such that
        $|V|=n$ and
        $\ASens(\+A,\+I)=\Omega_\Gamma(n)$.
\end{enumerate}
\end{theorem}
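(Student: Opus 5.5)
The proof has two directions: an upper bound when $\Gamma$ has bounded width, and a lower bound when it does not. For the upper bound, by \Cref{prop:bounded-width-equivalences} it suffices to handle $0$-valid, $1$-valid, Horn, dual-Horn and bijunctive $\Gamma$. The valid cases use constant assignments and have zero sensitivity.

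\textbf{Horn case.} Fix a Horn CNF for each relation. Sample $p\sim\Unif([1-\eps,1])$ and retain each constraint independently with probability $p$, coupling the retention via independent uniform priorities $u_c\in[0,1]$: $c$ is kept iff $u_c\le p$. Output the least solution of the retained subinstance; it exists because every subinstance of a satisfiable Horn instance is satisfiable. Deleted constraints are at most an expected $\eps|\+C|$, which gives the approximation guarantee.

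For sensitivity, couple $\+I$ and $\+I-c$ through the same priorities and the same $p$. The least solution is monotone in the retained set. Hence the Hamming change caused by $c$ equals the number of variables that flip when $c$ is added to the retained set $\{c':u_{c'}\le p\}$. Fix the priorities and view $p$ as increasing from $1-\eps$ to $1$. The least solution moves monotonically upward, so the total number of flips over all constraints is at most the number of non-isolated variables, which is at most $r_\Gamma|\+C|$. Averaging over $p$ with density $1/\eps$ turns this into $\sum_c\E{\text{change}_c}\le r_\Gamma|\+C|/\eps$, which gives $O(r_\Gamma/\eps)$. The dual-Horn case uses the greatest solution.

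\textbf{Bijunctive case.} Follow the exponential-shift scheme of the overview. Take priorities $u_c$ and keep constraints with $u_c\le 1-\eps/2$ to build the implication graph $G$ used for distances. Take a second threshold $q\sim\Unif([1-\eps/2,1])$ and let the subinstance with $u_c\le q$ determine which literals are feasible roots. Sample shifts $\delta_\ell\sim\ExpDist(\beta)$ with $\beta=\Theta_\Gamma(\eps)$. Set $x_v=1$ iff the score of $v$ exceeds that of $\neg v$, where a score is the maximum of $\delta_r-d_G(r,\ell)$ over feasible roots $r$.

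The approximation argument needs three facts.
\begin{itemize}
\item Dropped constraints cost at most $\eps$.
\item The truth assignment is consistent along implications: an implication arc $\ell\to\ell'$ in $G$ is violated only if the maximizers on its endpoints differ. By the Miller--Peng--Xu memorylessness argument, this happens with probability $O(\beta)$.
\item Feasibility must ensure that $\ell$ and $\neg\ell$ are not both true. This follows by contrapositive on implication paths: if $\ell$'s maximizing root reaches $\ell$, that root cannot also reach $\neg\ell$ through the skew-symmetric graph.
\end{itemize}

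Sensitivity splits into two parts. The first comes from the deleted constraint lying on a chosen shortest path. It is bounded by the expected path length $O(\beta^{-1}\log n)$, summed over the non-isolated variables and divided by $|\+C|$. The second comes from changes in the feasible-root set. As $q$ increases, the feasible set only shrinks. The exponential-shift argument, in the style of a record count, shows that the maximizing root of each literal changes $O(\log n)$ times in expectation. Integrating over $q$ with density $2/\eps$, as in the Horn case, gives $O_\Gamma(\eps^{-1}\log n)$.

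\textbf{Lower bound.} If $\Gamma$ lacks bounded width, \Cref{prop:affine-expressibility-outside-bounded-width} gives a pp definition of the self-dual $4$-ary parity relation, and \Cref{lem:equality-free} makes it equality-free, since that relation is irredundant. Start from the Yoshida--Zhang LTC instances of $4$-parity constraints. These have linear average sensitivity for $(1-\eps_0)$-approximation: codewords are far apart, and soundness forces near-codewords. Replace each constraint by its gadget on fresh existential variables, reserving disjoint variable blocks assigned by a uniformly random injection so that variable sets match before and after a deletion. Delete each gadget atom independently with a small fixed probability.

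Given an algorithm $\+A$ for $\CSP(\Gamma)$, build an algorithm for parity that applies the reduction, runs $\+A$, and projects onto the original variables. Constant gadget size transfers the approximation guarantee with $\eps_\Gamma=\Theta(\eps_0)$.

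The main obstacle is transferring sensitivity. Deleting one original constraint removes several atoms at once, while $\ASens$ only controls single deletions. The fix is that the random atom deletions make the multi-atom deletion a telescoping chain of single deletions in comparably distributed random instances. I expect this averaging-and-coupling argument to be the delicate step. It then shows that the reduced instance's average sensitivity is at least a constant fraction of the parity sensitivity, giving $\Omega_\Gamma(n)$ on some reduced instance with $n$ in an unbounded set $\+N_\Gamma$.
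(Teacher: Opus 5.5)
Your case split, the Horn argument, and the transfer mechanism (a uniformly random injection of reserved existential blocks, independent deletion of gadget atoms, telescoping through conditioned samples) match the paper. The bijunctive analysis, however, has two genuine gaps.

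\textbf{Bijunctive approximation.} This step fails as stated. The Miller--Peng--Xu bound relies on $|d(u,v)-d(u,w)|\le1$ for every source $u$. Nothing like this holds for an arc $\ell\to\ell'$ of the directed implication graph, because a root may reach $\ell'$ without reaching $\ell$. So ``the maximizers at the endpoints differ'' is not an $O(\beta)$ event: if $\ell'$ has many feasible in-neighbors besides $\ell$, it happens with probability close to $1$.

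The argument that works is the paper's. A violated clause $(a\vee b)$ forces $\operatorname{score}(\neg a)$ and $\operatorname{score}(\neg b)$ to be positive and within $1$ of each other. The feasible roots reaching $\neg a$ and those reaching $\neg b$ form \emph{disjoint} sets, since a solution of $\+I^{(2)}\supseteq\+I^{(1)}$ making such a root true would violate $(a\vee b)$. Disjointness makes the competing values independent, so \Cref{lem:exponential-gap} applies. Unit clauses $(a)$ are always satisfied, because no feasible literal reaches $\neg a$. This is the real role of feasibility. Your third bullet ($\ell$ and $\neg\ell$ not both true) holds automatically from the comparison rule.

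\textbf{Bijunctive sensitivity.} Your feasibility-change bound only covers deletions of constraints whose priority lies in the random window above $1-\eps/2$. Deleting a constraint of $\+I^{(1)}$, which is most of them, can enlarge $\mathsf{Feas}(\+I^{(2)})$ while also changing the graph. Such a deletion is not a step of a nested chain over a fixed graph, so the record count does not apply. The paper handles this case by a change of measure: $\+C^{(1)}=\+E\cup\{c\}$ is $(2-\eps)/\eps$ times as likely as $\+C^{(1)}=\+E$, and $T$ is treated as the priority of $c$ in a uniform ordering over the base instance $\+I[\+E]$. This costs a factor $\eps^{-1}$ in that term.

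\textbf{Lower bound.} The gap here is at the source. You take as given a family of $\CSP(\Lambda_4)$ instances with linear average sensitivity. Yoshida and Zhang supply only a $3$-query LTC and a lower bound on the Wasserstein distance between outputs for \emph{neighboring messages}, with the message bits hard-wired into the tests. Converting that into an average-sensitivity bound under single-constraint deletion is exactly \Cref{prop:parity-source-lower-bound}, and it needs a construction your sketch omits. ``Codewords are far apart'' explains why the outputs separate, not why a single deletion links the two instances.

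Moreover, hard-wiring yields $3$-ary parities and possibly unary constants. These are not self-dual, so they need not be pp-definable over $\Gamma$. The paper avoids this in three steps. It writes each code bit as $v^{(1)}\oplus v^{(2)}$. It splits each resulting $6$-ary test into two $R_0$-constraints sharing an auxiliary variable $t_c$. It enforces each message bit by the single constraint $R_{\sigma_x}(x^{(1)},x^{(2)},x^{(1)},x^{(1)})$, so both $R_0$ and $R_1$ are needed, not one parity relation. This yields $\+I_\sigma-c_x^\sigma=\+I_{\sigma^{(x)}}-c_x^{\sigma^{(x)}}$. The triangle inequality through this common subinstance, averaged over $\sigma$ and $x$ with $k=\Omega(n)$ message bits and $O(n)$ constraints, turns the neighboring-message bound into $\ASens=\Omega(n)$. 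This common-deletion structure is the missing idea.
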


We first prove the upper-bound direction of \Cref{thm:stable-approximability-dichotomy}, using the characterization of bounded width in \Cref{prop:bounded-width-equivalences}.

\subsection{Proof of the Upper Bound: Horn and Dual-Horn Languages}
\label{sec:horn}

Let $\Gamma$ be a finite Horn (resp., dual-Horn) constraint language. It is well known that every satisfiable instance $\+I$ of $\CSP(\Gamma)$ admits a unique coordinatewise least (resp., greatest) solution, denoted by $x^{\min}(\+I)$ (resp., $x^{\max}(\+I)$)~\cite{horn1951sentences}, and that such extremal solutions can be computed in linear time~\cite{dowling1984linear}.

The least solution of a Horn-SAT instance can be highly unstable under constraint deletion. To illustrate this, consider the chain instance $\+I$ on $n$ variables $x_0,x_1,\dots,x_{n-1}$, consisting of one unit constraint $x_0=1$ and $n-1$ implication constraints $x_i\implies x_{i+1}$ for $0\le i<n-1$. The algorithm that outputs $x^{\min}(\+I)$ has average sensitivity $\Omega(n)$ on this instance.

If approximate solutions are permitted, a natural strategy to stabilize the output is to introduce random perturbations: retain each constraint independently with a fixed probability $p\in[1-\eps,1]$ and output the coordinatewise least solution of the resulting subinstance.

Choosing one value of $p$ does not by itself bound the average sensitivity. Instead, we draw $p\sim\Unif([1-\eps,1])$. As $p$ increases, we can couple the retained subinstances so that constraints are only added and the least solution is coordinatewise nondecreasing. Each variable changes at most once, which bounds the total change over the interval by the number of non-isolated variables. Averaging over $p$ then gives the desired bound on average sensitivity. The formal algorithm proceeds as follows:

\begin{definition}[Algorithm for Horn and dual-Horn languages]\label{def:alg-horn}
    Fix $\eps\in(0,1]$, and let $\Gamma$ be a finite Horn (resp., dual-Horn) Boolean constraint language. The following algorithm takes as input a satisfiable instance $\+I=(V,\+C)$ of $\CSP(\Gamma)$ and outputs an assignment:
    \begin{enumerate}[label=(\arabic*),leftmargin=2.2em]
    \item Draw the probability $p$ of retaining each constraint uniformly from $[1-\eps,1]$. \label{item:alg-horn-1}
    \item Include every constraint independently with probability
    $p$, obtaining $\+S\subseteq\+C$. \label{item:alg-horn-2}
    \item Output $x^{\min}(\+I[\+S])$ (resp., $x^{\max}(\+I[\+S])$).
\end{enumerate}
\end{definition}

We next establish the approximation and stability guarantees for this algorithm.

\begin{theorem}[Horn and dual-Horn upper bounds]
\label{thm:horn-dual-horn}
Let $\Gamma$ be a finite Horn or dual-Horn constraint language.  For every
$\eps\in(0,1]$, the algorithm $\+A$ in \Cref{def:alg-horn}
$(1-\eps)$-approximates $\CSP(\Gamma)$ and has average sensitivity $O_{\Gamma}(\eps^{-1})$ for every satisfiable instance $\+I$ of $\CSP(\Gamma)$.
\end{theorem}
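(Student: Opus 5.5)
We treat the Horn case; the dual-Horn case follows by interchanging $0$ and $1$ and replacing least solutions by greatest ones. The plan is to realize steps \ref{item:alg-horn-1}--\ref{item:alg-horn-2} through a single monotone coupling. Assign each constraint $c\in\+C$ an independent priority $U_c\sim\Unif([0,1])$, draw $p\sim\Unif([1-\eps,1])$ independently, and set $\+S_p\defeq\{c:U_c\le p\}$. Every $\+I[\+S_p]$ is a subinstance of the satisfiable instance $\+I$, hence satisfiable, so $x^{\min}(\+I[\+S_p])$ is well defined.

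For the \textbf{approximation guarantee}, let $x^*$ be any solution of $\+I$. Then $x^*$ satisfies $\+I[\+S_p]$, and every constraint of $\+S_p$ is satisfied by the output. A constraint can therefore be violated only if it is dropped, which happens with probability $1-p\le\eps$. Summing over $\+C$ gives $\E{\viol_{\+I}}\le\eps|\+C|$.

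For \textbf{stability}, fix $c\in\+C$ and couple $\+A(\+I)$ and $\+A(\+I-c)$ by using the same $p$ and the same priorities $(U_{c'})_{c'\neq c}$. On the event $U_c>p$ the two retained sets coincide, so the outputs are equal. On the event $U_c\le p$ they differ exactly by $c$.

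The key structural fact is that $x^{\min}$ is monotone: if $\+S\subseteq\+S'$, then $x^{\min}(\+I[\+S])\le x^{\min}(\+I[\+S'])$ coordinatewise. This holds because $x^{\min}(\+I[\+S'])$ satisfies $\+I[\+S]$, and $x^{\min}$ is the coordinatewise least solution.

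Condition on all priorities and consider the path $q\mapsto x^{\min}(\+I[\+S_q])$ for $q\in[0,1]$. As $q$ increases past each $U_c$, the constraint $c$ is added. Let $\Delta_c(q)\defeq\distH\bigl(x^{\min}(\+I[\+S_q]),x^{\min}(\+I[\+S_q\setminus\{c\}])\bigr)$ for $q\ge U_c$. By monotonicity this is the number of coordinates that flip from $0$ to $1$. Going from $\+S_q\setminus\{c\}$ up to $\+S_q$, adding the constraints in any order, every coordinate flips at most once. Hence, for each fixed $q$, the jumps obtained by inserting constraints in increasing priority order sum to at most the number of non-isolated variables, at most $r_\Gamma|\+C|$.

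The obstacle is that $\Delta_c(q)$ concerns the jump at $c$ given all of $\+S_q$, not the jump at time $U_c$ along the path. I would handle this by integrating over $q$ instead. Sum over $c$ and condition on the priorities, so that
\[
\sum_{c}\E{\Delta_c(p)\mathbf 1[U_c\le p]}=\frac1\eps\int_{1-\eps}^1\E{\sum_{c\in\+S_q}\Delta_c(q)}\,dq .
\]
We then need $\E{\sum_{c\in\+S_q}\Delta_c(q)}$ to be bounded by something whose integral is $O(r_\Gamma|\+C|)$.

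The cleanest route is the Russo--Margulis formula. Since $\+S_q$ is a product measure with density $q$, for each variable $v$ we have
\[
\frac{d}{dq}\Pr{x^{\min}_v(\+I[\+S_q])=1}=\frac1q\,\E{\textstyle\sum_{c\in\+S_q}\mathbf 1[c\text{ pivotal for }v]}.
\]
Summing over $v$ and exchanging sums gives $\frac1q\E{\sum_{c\in\+S_q}\Delta_c(q)}$. Integrating over $[1-\eps,1]$ then yields
\[
\int_{1-\eps}^1\E{\textstyle\sum_{c\in\+S_q}\Delta_c(q)}\,dq\le\sum_v\bigl[\Pr{\cdot}\bigr]_{1-\eps}^{1}\le r_\Gamma|\+C|,
\]
using $q\le1$ and the fact that only non-isolated variables can change. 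Dividing by $\eps|\+C|$ gives $\ASens(\+A,\+I)\le r_\Gamma/\eps$.

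The main step to verify carefully is this derivative identity. It is a finite-dimensional computation for an increasing Boolean function of the retained set, namely $x^{\min}_v$, which is increasing by monotonicity.
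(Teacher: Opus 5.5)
Your proposal is correct and follows the paper's proof essentially step for step. It uses the same coupling through a shared $p$ and shared priorities, the same monotonicity of the least solution under adding constraints, and the same derivative identity integrated over $[1-\eps,1]$ and bounded by the at most $r_\Gamma|\+C|$ non-isolated variables. Your per-variable Russo--Margulis formula, summed over variables, is exactly the paper's multilinear-derivative computation $F'(x)=\sum_{c}\Delta_c(x)$, and your expected retained-pivotal contribution at level $q$ equals the paper's $q\,\Delta_c(q)$, where the paper's $\Delta_c$ is the unconditional expected influence.
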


\begin{proof}
Fix the instance $\+I=(V,\+C)$. We prove only the Horn case; the dual-Horn case is analogous.

We first prove the approximation guarantee. Every subinstance of the satisfiable Horn instance $\+I$ is satisfiable and Horn, so the output satisfies every retained constraint. Hence, each constraint $c\in\+C$ is violated with probability at most $\eps$. Summing over all constraints proves the approximation guarantee.

We now bound the average sensitivity. Note that the average sensitivity is trivially $0$ when $\+C = \emptyset$ by definition; therefore, we may assume $\+C \ne \emptyset$. The heart of the proof will be the following monotonicity property: for Horn languages, the least solutions are coordinatewise nondecreasing under insertion of constraints.

For each subset of constraints $\+E\subseteq\+C$, let
\[
  h(\+E)\defeq\sum_{v\in V}x^{\min}(\+I[\+E])_v
\]
denote the number of ones in the least solution of $\+I[\+E]$. Note that $h$ is nondecreasing.

We use the following realization of step~\ref{item:alg-horn-2} of
\Cref{def:alg-horn}, with explicitly specified randomness:
\begin{itemize}
    \item independently for each constraint $c\in\+C$, draw $p_c$ uniformly at random from $[0,1]$ and include $c$ if and only if $p_c\leq p$.
\end{itemize}

For each $x\in[0,1]$, let
\[
\+S_x\defeq \{c\in \+C:p_c\leq x\}, \qquad F(x)\defeq \E{h(\+S_x)},
\]
where the expectation is over the priorities $(p_c)_{c\in\+C}$. For $x\in[1-\eps,1]$, $\+S_x$ has the distribution of the retained set conditional on $p=x$, and $F(x)$ is the expected number of ones in the resulting least solution. For each $c\in\+C$, let $\Delta_c(x)$ denote the expected Hamming distance between the least solutions when $c$ is forced to be present and when it is forced to be absent; formally,
\[
\Delta_c(x)\defeq \E{h(\+S_x\cup \{c\})-h(\+S_x\setminus \{c\})}.
\]
To differentiate $F$, give each constraint occurrence $c$ its own retention probability $x_c\in[0,1]$. The expectation of $h$ is a real multilinear polynomial in $(x_c)_{c\in\+C}$, and its partial derivative with respect to $x_c$ is the expected difference between forcing $c$ to be present and forcing it to be absent. Setting all $x_c=x$ and applying the chain rule gives
\[
F'(x)=\sum\limits_{c\in \+C}\Delta_c(x).
\]

We couple the executions of the algorithm on the original instance and on the instance obtained by deleting a constraint $c\in\+C$ by using the same $p$ and the same priorities $(p_d)_{d\in\+C\setminus\{c\}}$. For $x\in[1-\eps,1]$, conditional on $p=x$, the outputs before and after deleting $c$ agree when $p_c>x$; otherwise, their expected Hamming distance is
$\Delta_c(x)$. Hence
\[
  \Wass\bigl(\Law(\+A(\+I)),\Law(\+A(\+I-c))\bigr)
  \leq\frac1\eps\int_{1-\eps}^1x\Delta_c(x)\,dx.
\]
Let $U\defeq\bigcup_{c\in\+C}\var{c}$ denote the set of non-isolated variables of $\+I$. Summing over $c$ and noting that monotonicity gives $F(1)-F(0)\leq|U|$, we obtain
\[
  \ASens(\+A,\+I)
  \leq\frac{1}{\eps |\+C|}\int_{1-\eps}^1xF'(x)\,dx
  \leq\frac{|U|}{\eps |\+C|}
  \leq\frac{r_\Gamma}{\eps}.
\]
Here, the last inequality uses $|U|\leq r_\Gamma |\+C|$. This concludes the proof.

\end{proof}
\subsection{Proof of the Upper Bound: Bijunctive Languages}
\label{sec:bijunctive}

We next give our algorithm for stably approximating Boolean CSPs over bijunctive languages.

\subsubsection{Warm-up: $2$-Coloring}
\label{par:two-coloring-warmup}
As a warm-up to our algorithmic construction for bijunctive languages, we first present an algorithm for stably approximating $2$-coloring. The $2$-coloring problem is a Boolean CSP over the bijunctive language consisting only of the disequality relation. It is also the canonical example of a problem that is not stably solvable, thus motivating the relaxation to stable approximability~\cite{VarmaYoshida23}.

A satisfiable $2$-coloring instance $\+I=(V,\+C)$ is represented by a bipartite multigraph $G=(V,E)$, with one edge occurrence for each disequality constraint. Parallel edges preserve constraint multiplicities, and deleting one constraint deletes only the corresponding edge occurrence. We identify $\+I$ with this multigraph throughout the argument.

Our algorithm for stably approximating $2$-coloring uses the \emph{low-diameter decomposition} with exponential shifts introduced by Miller, Peng, and Xu~\cite{miller2013parallel}. This partitions the graph into clusters, each with a root. We properly color each cluster by assigning color $0$ to its root and coloring each vertex by the parity of its distance from that root.

An edge can be violated only if its endpoints belong to different clusters, so bounding the probability of this event gives the approximation guarantee. To bound average sensitivity, we use the same exponential shifts before and after deleting an edge. A vertex can change color only if the deleted edge lies on a chosen shortest path from its root to that vertex in the original graph. Thus, for each vertex, the number of edge deletions that can change its color is at most the length of this path. We bound the expected path lengths to obtain the bound on average sensitivity.

\begin{definition}[Algorithm for $2$-coloring]
\label{def:alg-two-coloring}
Fix $\beta=\eps\in(0,1]$. Given a bipartite graph $G=(V,E)$, the algorithm performs the following steps. Write $d$ for the shortest-path distance in $G$, with $d(u,v)=+\infty$ when $v$ is unreachable from $u$.
\begin{enumerate}[label=(\arabic*),leftmargin=2.2em]
    \item Independently sample a shift $\delta_u\sim\ExpDist(\beta)$ of rate $\beta$ for each vertex $u\in V$.
    \item For each vertex $v\in V$, define its \emph{root} by
    \[
        r(v)=\arg\max_{u\in V}\{\delta_u-d(u,v)\},
    \]
    resolving ties via a fixed ordering of $V$. We call $\delta_u-d(u,v)$ the \emph{score} of $u$ at $v$.
    \item Output the color assignment $x_v = d(r(v),v)\bmod 2$ for every $v\in V$.
\end{enumerate}
\end{definition}

The following bound, given by the proof of~\cite[Lemma~4.4]{miller2013parallel}, will be crucial for our analysis of both $2$-coloring and $2$-SAT.
\begin{lemma}[Gap between the two largest shifted exponential values]
\label{lem:exponential-gap}
Let $q\ge2$, let $d_1,\ldots,d_q\in\mathbb R$ be fixed, and let
$\delta_1,\ldots,\delta_q$ be independent exponential random variables
of rate $\beta>0$.  Write $Z_j=\delta_j-d_j$ and let
$Z_{(1)}\ge Z_{(2)}$ be the two largest values among $Z_1,\ldots,Z_q$.
Then, for every $t\ge0$,
\[
    \Pr{Z_{(1)}-Z_{(2)}\le t}
    \le 1-\e^{-\beta t}\le\beta t.
\]
\end{lemma}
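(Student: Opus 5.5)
Condition on the index $j^\ast$ attaining the maximum together with the value of everything except $\delta_{j^\ast}$, and use the memoryless property of the exponential distribution.

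Fix an index $j$ and condition on all $\delta_i$ with $i\ne j$. Set $M_j\defeq\max_{i\ne j}Z_i$, which is now a constant. On the event $\{Z_j\ge M_j\}$, index $j$ attains the maximum, so $Z_{(1)}=Z_j$ and $Z_{(2)}=M_j$. This event is the same as $\{\delta_j\ge M_j+d_j\}$. Write $s\defeq\max(M_j+d_j,0)$, so the event becomes $\{\delta_j\ge s\}$ (when $M_j+d_j<0$ the event has conditional probability $1$, and this truncation is harmless).

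By memorylessness, conditional on $\delta_j\ge s$, the overshoot $\delta_j-s$ is again $\ExpDist(\beta)$. Hence the gap $Z_{(1)}-Z_{(2)}=\delta_j-d_j-M_j\ge\delta_j-s$ stochastically dominates an $\ExpDist(\beta)$ variable, and
\[
  \Pr{Z_{(1)}-Z_{(2)}\le t \,\big|\, j\text{ attains the max},\ (\delta_i)_{i\ne j}}\le 1-\e^{-\beta t}.
\]
When $M_j+d_j<0$, the gap equals $\delta_j+(\text{a positive constant})$, which also dominates an exponential, so the same bound holds.

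To finish, let $j^\ast$ be the smallest index attaining the maximum; ties have probability zero by continuity. The events $\{j^\ast=j\}$ partition the probability space up to a null set. Within each of them, the conditional bound above applies after further conditioning on $(\delta_i)_{i\ne j}$. Averaging over the conditioning and summing over $j$ with the weights $\Pr{j^\ast=j}$ gives $\Pr{Z_{(1)}-Z_{(2)}\le t}\le 1-\e^{-\beta t}$. The last inequality $1-\e^{-\beta t}\le\beta t$ is the elementary bound $\e^{-u}\ge 1-u$.

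The main obstacle is making this conditioning rigorous: the event $\{Z_j\ge M_j\}$ depends on $\delta_j$, so one has to condition on the other coordinates first and only then apply memorylessness to the single variable $\delta_j$. The case $M_j+d_j<0$, where the threshold lies below the support of $\delta_j$, must also be handled, and it is covered by the domination argument above.
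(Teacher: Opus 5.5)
Your proof is correct. It is essentially the standard argument behind the proof of Lemma~4.4 in \cite{miller2013parallel}, which the paper cites instead of giving its own proof: condition on the other shifts, apply memorylessness to the winner's overshoot above the runner-up, and sum over the winning index. Your handling of the case $M_j+d_j<0$ by stochastic domination is sound, as is your observation that the events $\{Z_j\ge M_j\}$ overlap only on a null set; these are the details a naive memorylessness argument would leave open.
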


We now show that the algorithm in \Cref{def:alg-two-coloring} has the claimed approximation and stability guarantees for $2$-coloring.

\begin{proposition}[Stable approximation of $2$-coloring]
\label{prop:two-coloring}
For every $\eps\in(0,1]$ and bipartite graph $G=(V,E)$ with
$n=|V|\ge2$, the algorithm $\+A$ in
\Cref{def:alg-two-coloring} $(1-\eps)$-approximates the $2$-coloring problem on $G$ and has average sensitivity $O(\eps^{-1}\log n)$.
\end{proposition}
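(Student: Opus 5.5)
The proof splits into an approximation bound and a sensitivity bound, both driven by \Cref{lem:exponential-gap}.

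\textbf{Correctness within clusters.} If $r(v)=u$, then every vertex $w$ on a shortest $u$--$v$ path also has $r(w)=u$. The reason is that $\delta_u-d(u,w)=\delta_u-d(u,v)+d(w,v)$, while any competitor $u'$ satisfies $\delta_{u'}-d(u',w)\le \delta_{u'}-d(u',v)+d(w,v)$. So $u$ still wins at $w$, and tie-breaking is consistent for strict wins; ties occur with probability zero, so I would ignore them. Each cluster therefore contains a shortest-path tree from its root. For an edge $vw$ with $r(v)=r(w)=u$, bipartiteness gives $d(u,v)\not\equiv d(u,w)\pmod 2$, so the edge is satisfied. Hence an edge can be violated only if $r(v)\ne r(w)$.

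\textbf{Approximation.} Let $vw$ be an edge. The scores at $w$ differ from the scores at $v$ by at most $1$ in every coordinate. If the gap between the top two scores at $v$ exceeds $2$, then $r(w)=r(v)$. The naive bound from \Cref{lem:exponential-gap} gives $\Pr{r(v)\ne r(w)}\le 2\beta$, which is off by a factor of $2$. I would sharpen this with the standard Miller--Peng--Xu argument: condition on which vertex attains the maximum score at the midpoint of the edge, and use memorylessness to get $\Pr{r(v)\ne r(w)}\le 1-\e^{-\beta}\le\beta=\eps$. Summing over edges then gives the $(1-\eps)$-approximation. If the sharp constant resisted, I would instead set $\beta=\eps/2$, which changes only constants.

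\textbf{Sensitivity.} I couple the runs on $G$ and $G-e$ through identical shifts. Distances in $G-e$ satisfy $d'\ge d$, with equality unless $e$ lies on every shortest path. Fix $v$ with root $u=r(v)$ in $G$, and fix a canonical shortest path $P_v$ from $u$ to $v$ (for instance, the lexicographically first one). Suppose $e\notin P_v$. Then $d'(u,v)=d(u,v)$, so the score of $u$ at $v$ is unchanged, while every other score can only decrease. Hence $r'(v)=u$ and the color $d'(u,v)\bmod 2$ is unchanged. Consequently,
\[
\sum_{e\in E}\distH(\+A(G),\+A(G-e))\le\sum_v |P_v| = \sum_v d(r(v),v).
\]
It then suffices to show $\E{d(r(v),v)}=O(\beta^{-1}\log n)$. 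Since $v$ scores itself with $\delta_v\ge0$, we have $d(r(v),v)\le\delta_{r(v)}\le\max_u\delta_u$. The expectation of $\max_u\delta_u$ is $H_n/\beta=O(\beta^{-1}\log n)$.

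Only non-isolated vertices matter, because an isolated vertex is its own root with color $0$ in both runs. There are at most $2|E|$ of them, so the sum is at most $2|E|\cdot O(\eps^{-1}\log n)$. Dividing by $|E|$ gives the claimed average sensitivity.

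\textbf{Main obstacle.} The delicate step is the approximation constant: getting exactly $\eps$ rather than $2\eps$ requires the midpoint/memorylessness argument. The sensitivity bound also needs care with tie-breaking under the coupling. Almost-sure uniqueness of the argmax handles this, because the ties of distances are resolved against continuous shifts.
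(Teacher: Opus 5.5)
Your proposal is correct and is essentially the paper's proof. The paper gets the sharp approximation bound exactly by your midpoint idea, applying \Cref{lem:exponential-gap} to the values $\delta_u-\min\{d(u,v),d(u,w)\}$ over the connected component of the edge; the fallback $\beta=\eps/2$ is therefore never needed, and it would in fact change the algorithm being analyzed. Your sensitivity argument (shared shifts, a fixed shortest path $P_v$ from $r(v)$ to $v$ whose avoidance by $e$ preserves root and color, $|P_v|\le\max_u\delta_u$ of expectation $O(\beta^{-1}\log n)$, and at most $2|E|$ non-isolated vertices) matches the paper step for step.
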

\begin{proof}[Proof of \Cref{prop:two-coloring}]
Fix $\eps\in(0,1]$ and a bipartite graph $G=(V,E)$ with $n\ge2$,
and set $\beta=\eps$ as in \Cref{def:alg-two-coloring}.

We first prove the approximation guarantee. An edge can be
violated only if it is inter-cluster, that is, if its endpoints have different roots. For an edge $\{v,w\}$, let $K\subseteq V$ be the vertex set of its connected component. It is inter-cluster only if the two largest values among $\{\delta_u-\min\{d(u,v),d(u,w)\}:u\in K\}$ differ by at most one. Here $|K|\geq2$ and all distances in the displayed set are finite. Therefore, by \Cref{lem:exponential-gap}, the violation
probability is at most $\beta=\eps$.
Summing over edges gives the approximation guarantee of at least $1-\eps$.

We then prove the bound on average sensitivity. We couple the executions on $G$ and $G-e$ by using the same exponential shifts $\delta_u$ for all $u\in V$.
For each $v\in V$, fix a shortest path $P_v$ from
$r(v)$ to $v$ in $G$.  If $e\notin P_v$, the score of $r(v)$ at $v$
is unchanged, while all other root scores can only decrease.
Therefore, $v$ retains its root and color.  Since the maximization defining $r(v)$
includes $u=v$,
\[
    \delta_{r(v)}-|P_v|\ge\delta_v>0,
    \qquad |P_v|\le\max_{u\in V}\delta_u-\delta_v.
\]
Since $\E{\max_{u\in V}\delta_u}=O(\beta^{-1}\log n)$, we have
$\E{|P_v|}=O(\beta^{-1}\log n)$.
Each vertex $v$ can change its color under at most $|P_v|$ edge
deletions in this coupling.  Consequently, for $|E|>0$,
\[
    \ASens(\+A,G)
    \le\frac{1}{|E|}\sum_{v\in V}\E{|P_v|}
    =O(\eps^{-1}\log n),
\]
where the last estimate uses the facts that isolated vertices have
$|P_v|=0$ and that a graph with $|E|>0$ has at most $2|E|$ non-isolated
vertices. This proves the proposition.
\end{proof}

\subsubsection{From $2$-Coloring to $2$-SAT}

We now present our stable approximation algorithm for general bijunctive languages. The core approach naturally extends the $2$-coloring algorithm by operating on the implication graph associated with $2$-SAT instances.

To use a literal as a root for propagating truth along directed implication paths, we require it to be feasible, meaning that some satisfying assignment sets it to true. In a satisfiable $2$-SAT instance, this is equivalent to there being no path from the literal to its negation. Deleting a constraint can make new literals feasible, so we must control both path changes and changes to the candidate roots. We address the latter by using nested random subinstances: one for computing directed distances and a larger one for testing feasibility.

Fix a finite bijunctive constraint language $\Gamma$. For each relation $R\in\Gamma$, fix a defining $2$-CNF formula over its arguments, and let $b_\Gamma\ge1$ be an upper bound, depending only on $\Gamma$, on the number of clauses in any such formula. After substituting the scope variables, discard tautologies and simplify a clause $(a\vee a)$ to the unit clause $(a)$. Given a Boolean CSP instance $\+I=(V,\+C)$ over $\Gamma$, replace each constraint $c=R(v_1,\ldots,v_k)$ with this simplified $2$-CNF formula, treating a unit clause $(a)$ as the logically equivalent disjunction $(a\vee a)$ when constructing the implication graph. Deleting a constraint removes all arcs contributed by its expansion. We then construct the standard implication graph for the resulting $2$-CNF formula~\cite{aspvall1979linear}.

\begin{definition}[Implication graph]
\label{def:implication-graph}
The \emph{implication graph} $G_{\mathrm{imp}}(\+I)$ of a Boolean CSP instance $\+I=(V,\+C)$ (with a fixed 2-CNF formula representation) is the directed multigraph with vertex set
\[
    V(G_{\mathrm{imp}}(\+I)) = \{x, \neg x : x \in V\}
\]
and, for each clause $(a\vee b)$ in the expansion of $\+C$, the directed arcs
\[
    \neg a \to b \quad \text{and} \quad \neg b \to a.
\]
\end{definition}

Let $\mathsf{Feas}(\+I)$ denote the set of literals that evaluate to true in at least one satisfying assignment of $\+I$. Our algorithm for bijunctive languages is detailed in \Cref{def:alg-bijunctive}. To stabilize the additional sensitivity introduced by newly feasible literals, the algorithm employs two nested random subinstances $\+I^{(1)}\subseteq\+I^{(2)}$: a fixed threshold on constraint priorities determines the implication arcs in $\+I^{(1)}$, whereas a larger randomized cutoff defines the subinstance $\+I^{(2)}$ used only to test root feasibility. 

\begin{definition}[Algorithm for bijunctive languages]
\label{def:alg-bijunctive}
Fix $\eps\in(0,1]$ and let
$\beta=\eps/(2b_\Gamma)$.
Given a satisfiable Boolean CSP instance $\+I=(V,\+C)$ over $\Gamma$,
the algorithm performs the following steps:
\begin{enumerate}[label=(\arabic*),leftmargin=2.2em]
\item Draw independent priorities $\tau_c\sim\Unif([0,1])$ for all
$c\in\+C$ and an independent cutoff $T\sim\Unif([1-\eps/2,1])$.
Form the two subinstances
\[
    \+I^{(1)}=\+I[\{c\in\+C:\tau_c\le 1-\eps/2\}],\qquad
    \+I^{(2)}=\+I[\{c\in\+C:\tau_c\le T\}].
\]
\item Independently draw shifts $\delta_a\sim\ExpDist(\beta)$ for all
literals $a$ on $V$.
\item Write $d(u,v)$ for the directed shortest-path
distance in $G_{\mathrm{imp}}(\+I^{(1)})$, with unit-length arcs and
$d(u,v)=+\infty$ when $v$ is unreachable from $u$. For each literal $\ell$, define its \emph{score} by
\[
    \operatorname{score}(\ell)
    =\max_{a\in\mathsf{Feas}(\+I^{(2)})}
        \bigl(\delta_a-d(a,\ell)\bigr),
\]
and output the assignment setting each $x_i\in V$ to $1$ if
$\operatorname{score}(x_i)>\operatorname{score}(\neg x_i)$, and to $0$ otherwise.
\end{enumerate}
\end{definition}

We now show that the algorithm in \Cref{def:alg-bijunctive} has the claimed approximation and stability guarantees for bijunctive languages.
\begin{theorem}[Bijunctive upper bound]
\label{thm:bijunctive}
For every finite bijunctive Boolean constraint language $\Gamma$ and
$\eps\in(0,1]$, the algorithm $\+A$ in \Cref{def:alg-bijunctive}
$(1-\eps)$-approximates $\CSP(\Gamma)$ and has average sensitivity $O_\Gamma(\eps^{-1}\log n)$
for every satisfiable Boolean CSP instance $\+I=(V,\+C)$ over $\Gamma$ with $n=|V|\ge2$.
\end{theorem}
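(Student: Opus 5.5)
We prove the approximation guarantee first and then bound the average sensitivity. The sensitivity bound splits into a path-change term, handled exactly as in \Cref{prop:two-coloring}, and a feasibility-change term, controlled by averaging over the cutoff $T$.

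\textbf{Approximation.} A constraint $c$ can be violated only in two cases: it is absent from $\+I^{(1)}$, which has probability $\eps/2$, or it lies in $\+I^{(1)}$ but one of its at most $b_\Gamma$ clauses is violated.

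Since $\+I^{(1)}\subseteq\+I^{(2)}\subseteq\+I$ are all satisfiable, $\mathsf{Feas}(\+I^{(2)})\subseteq\mathsf{Feas}(\+I^{(1)})$. Every literal reachable in $G_{\mathrm{imp}}(\+I^{(1)})$ from a feasible root is itself feasible in $\+I^{(1)}$. Hence for each variable at most one of $x_i,\neg x_i$ is reachable from the maximizing roots in a consistent way.

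Fix a clause $(a\vee b)$ of $c\in\+I^{(1)}$, with arcs $\neg a\to b$ and $\neg b\to a$. Then $\operatorname{score}(b)\ge\operatorname{score}(\neg a)-1$ and $\operatorname{score}(a)\ge\operatorname{score}(\neg b)-1$. The clause is violated only if $\operatorname{score}(\neg a)\ge\operatorname{score}(a)$ and $\operatorname{score}(\neg b)\ge\operatorname{score}(b)$. Combining these inequalities forces all four scores to lie within $1$ of each other.

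I would then show this event implies that the top two values of $\delta_u-d(u,\cdot)$ at a common literal differ by at most $1$, or involve the same root reaching both a literal and its negation. The latter is impossible for a feasible root in a satisfiable instance. So the violation has probability at most about $2\beta$ by \Cref{lem:exponential-gap}. Taking a union bound over the $b_\Gamma$ clauses gives $2b_\Gamma\beta=\eps$ up to constants, and I would tune the constant in $\beta$ if needed so that the total is at most $\eps$.

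\textbf{Sensitivity: coupling.} Deleting $c$ is coupled by sharing all $\tau$'s, $T$, and $\delta$'s. With probability $\eps/2+(1-T)$, $c$ lies in neither subinstance and nothing changes. Otherwise there are two effects:
\begin{itemize}
\item Arcs of $c$ disappear from $G_{\mathrm{imp}}(\+I^{(1)})$, but only when $\tau_c\le1-\eps/2$.
\item $\mathsf{Feas}(\+I^{(2)})$ grows when $\tau_c\le T$.
\end{itemize}
I would bound the two effects separately via the triangle inequality through the intermediate output in which only one of the subinstances is changed.

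\textbf{Path term.} For each variable $x_i$, the value is determined by the maximizing roots of $x_i$ and $\neg x_i$. As in \Cref{prop:two-coloring}, fix shortest paths $P_{x_i}$ and $P_{\neg x_i}$ from these roots. Deleting arcs outside these paths only lowers competing scores and leaves the decision unchanged. The one subtlety is that a score decrease of the losing literal cannot flip the comparison, which holds since the winning score is preserved.

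The path lengths are bounded by $\max_a\delta_a-\delta_\ell$ plus the score of $\ell$ relative to a feasible root, which is at least $\delta$ of some feasible literal. So their expectation is $O(\beta^{-1}\log n)$. Each constraint contributes at most $2b_\Gamma$ arcs, so summing over the at most $r_\Gamma|\+C|$ non-isolated variables and dividing by $|\+C|$ gives $O_\Gamma(\eps^{-1}\log n)$.

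\textbf{Feasibility term (main obstacle).} Fix the randomness other than $T$ and view the output as a function of $T\in[1-\eps/2,1]$. As $T$ increases, constraints are only added, so $\mathsf{Feas}(\+I^{(2)})$ only shrinks. The effect of deleting $c$ at cutoff $T$ equals the jump of this monotone process at $T=\tau_c$. Averaging over $c$ and $T$, the total feasibility sensitivity is at most
\[
\frac{2}{\eps|\+C|}\sum_{v\in V}\E{\#\{\text{changes of the decision at }v\text{ as }T\text{ rises}\}}.
\]
So it suffices to show that, for each variable, the winning root changes $O(\log n)$ times in expectation as the candidate set shrinks.

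For this I would use the fact that the candidate set shrinks monotonically and the scores $\delta_a-d(a,\ell)$ are fixed. The winning root at $\ell$ changes only when the current maximizer is removed, at which point the new maximizer is the best remaining candidate. With i.i.d. exponential shifts, I would argue via a records argument that each candidate is the current maximizer with probability comparable to $1/k$, where $k$ is its rank in the removal order. Summing gives $O(\log n)$.

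The difficulty is that the removal order is itself random and correlated with the $\delta$'s only through $\tau$, which is independent of the $\delta$'s. Conditioning on the $\tau$'s therefore fixes a deterministic nested sequence of candidate sets. The bound then reduces to the expected number of changes of $\arg\max_a(\delta_a-d_a)$ along a fixed chain of sets, which is a harmonic sum by memorylessness and exchangeability-type bounds for shifted exponentials. Each variable involves two literals, which contributes a factor of $2$. The non-isolated variable count again gives $O(\eps^{-1}\log n)$.
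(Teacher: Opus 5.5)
\textbf{Gap: deletions inside $\+I^{(1)}$ in the feasibility term.} Your feasibility term does not cover most deletions. Hold everything except $T$ fixed. Then a constraint $c$ produces a jump of the process $T\mapsto\mathsf{Feas}(\+I^{(2)})$ only if $\tau_c\in(1-\eps/2,1]$. A constraint with $\tau_c\le 1-\eps/2$ (about a $(1-\eps/2)$-fraction of $\+C$) lies in $\+I^{(2)}$ for every $T$, so it never appears as a jump. Yet deleting it enlarges $\mathsf{Feas}(\+I^{(2)})$ and can change the output through newly feasible roots, and your argument says nothing about these deletions.

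These deletions also interact with your two-step split. The count ``at most $|P_i|$ deletions hit $P_i$'' needs $P_i$ to be independent of $c$. So the graph must be changed first, with $\mathsf{Feas}(\+I^{(2)})$ held fixed. The feasibility comparison then takes place on $G_{\mathrm{imp}}(\+I^{(1)}-c)$, which depends on $c$, not on the fixed graph you condition on.

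The missing idea is the paper's change of measure. Write $\+E=\+C^{(1)}\setminus\{c\}$ and use $\mathbf{Pr}[\+C^{(1)}=\+E\cup\{c\}]=\frac{2-\eps}{\eps}\,\mathbf{Pr}[\+C^{(1)}=\+E]$. Conditional on $\+C^{(1)}=\+E\cup\{c\}$, treat $T$ as the priority of $c$. The two compared root sets are then consecutive prefixes of a uniformly random ordering of $\+C\setminus\+E$ over the base graph $G_{\mathrm{imp}}(\+I[\+E])$. Summing over $c\notin\+E$ counts each consecutive comparison once, and the records bound with $\+I[\+E]$ as base gives $O(|U|\eps^{-1}\log n)$. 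This is where the $\eps^{-1}$ in the feasibility term actually comes from. Your displayed bound happens to have the right total ($\frac{2-\eps}{\eps}$ from inside plus $1$ from outside), but your argument does not produce it.

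Even for $\tau_c>1-\eps/2$, your claim that ``the effect of deleting $c$ at cutoff $T$ equals the jump at $\tau_c$'' is false pathwise, since $\+I^{(2)}-c$ is not the prefix just before $\tau_c$. It holds only in distribution: the rank of $T$ among the priorities on $(1-\eps/2,1]$ is uniform, and deleting a uniformly random one of the $j$ retained constraints yields the law of the length-$(j-1)$ prefix.

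\textbf{Records step.} Shifted exponentials with different $d_a$ are not exchangeable, and the $1/s$ record probability can fail along a fixed removal order. What rescues the argument is that a literal true in a fixed solution of $\+I$ stays feasible in every prefix and has positive value at distance $0$. Hence only candidates with $\delta_a-d_a>0$ can ever be the root, and conditional on that set their values are i.i.d.\ $\ExpDist(\beta)$ by memorylessness.

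\textbf{Approximation.} The algorithm fixes $\beta=\eps/(2b_\Gamma)$, so you cannot retune it. You need the per-clause bound $1-\e^{-\beta}\le\beta$, not $2\beta$. It comes from a single application of \Cref{lem:exponential-gap} to the values over the two disjoint families of feasible roots reaching $\neg a$ and reaching $\neg b$. These families are disjoint because a root in both would reach $b$ and $\neg b$ via the arc $\neg a\to b$. Unit clauses need a separate argument, since there the two families coincide: $a$ is feasible, and no feasible literal reaches $\neg a$.

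\textbf{Minor.} Given $T$, the constraint $c$ lies in neither subinstance with probability $1-T$, not $\eps/2+(1-T)$.
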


\begin{proof}
Fix a satisfiable instance $\+I=(V,\+C)$.
A literal in $\mathsf{Feas}(\+I^{(2)})$ cannot reach both $x_i$ and
$\neg x_i$ in $G_{\mathrm{imp}}(\+I^{(1)})$: a solution of
$\+I^{(2)}$ setting that literal to true would then set both $x_i$
and $\neg x_i$ to true.
For each variable $x_i$, at least one of its literals is feasible
and has distance zero to itself.  Hence
\[
  \max\{\operatorname{score}(x_i),\operatorname{score}(\neg x_i)\}>0
  \quad\text{almost surely}.
\]
By independence and continuity of the shifts, this maximum is
attained by a unique feasible starting literal almost surely.
We call this starting literal the \emph{root} of $x_i$.

We first prove the approximation guarantee.
Condition on the subinstances $\+I^{(1)}$ and $\+I^{(2)}$. Every unit
clause $(a)$ in the $2$-CNF expansion of $\+I^{(1)}$ is satisfied by the output. Indeed, $a$ is
feasible in $\+I^{(2)}$ and has positive score, whereas no feasible
literal can reach $\neg a$: the arc $\neg a\to a$ would then make it
reach both $a$ and $\neg a$.

For each non-unit clause $(a\vee b)$ in the expansion of
$\+I^{(1)}$, the
arcs $\neg a\to b$ and $\neg b\to a$ give
\[
    \operatorname{score}(b)\ge \operatorname{score}(\neg a)-1,
    \qquad
    \operatorname{score}(a)\ge \operatorname{score}(\neg b)-1.
\]
If the output violates $(a\vee b)$, then
$\operatorname{score}(\neg a)$ and $\operatorname{score}(\neg b)$
are positive and differ by at most one.
The feasible literals reaching $\neg a$ and those reaching $\neg b$
form disjoint sets: a solution of $\+I^{(2)}$ setting a literal
in both sets to true would violate $(a\vee b)$.
If either set is empty, the clause cannot be violated. Otherwise, consider the values $\delta_r-d(r,\neg a)$ for feasible $r$ reaching
$\neg a$ and $\delta_r-d(r,\neg b)$ for feasible $r$ reaching
$\neg b$.
These values are independent, with maxima
$\operatorname{score}(\neg a)$ and $\operatorname{score}(\neg b)$
over the respective sets.
Thus a violation requires the two largest of these values to differ
by at most one, which has probability at most $1-\e^{-\beta}$
by \Cref{lem:exponential-gap}.
Each original constraint is omitted from $\+I^{(1)}$ with probability
$\eps/2$ and expands into at most $b_\Gamma$ clauses.  Its violation
probability is therefore at most
$\eps/2+b_\Gamma(1-\e^{-\beta})\leq \eps$, proving the approximation guarantee.

We then prove the bound on average sensitivity.
The case $\+C=\varnothing$ is immediate.  Otherwise, for each
$c\in\+C$, couple the executions on $\+I$ and $\+I-c$ using the
same priorities $(\tau_d)_{d\in\+C\setminus\{c\}}$ for the remaining constraints, the same cutoff $T$, and the same shifts $(\delta_a)$.
Compare the outputs in two steps: first replace the implication graph
by $G_{\mathrm{imp}}(\+I^{(1)}-c)$, keeping
$\mathsf{Feas}(\+I^{(2)})$ fixed; then replace the set of feasible roots
by $\mathsf{Feas}(\+I^{(2)}-c)$, keeping
$G_{\mathrm{imp}}(\+I^{(1)}-c)$ fixed.
Deleting a constraint absent from a subinstance has no effect.
The triangle inequality bounds the Hamming distance between the
coupled outputs by the sum of the Hamming distances in these two steps.
Let $U\defeq\bigcup_{c\in\+C}\var{c}$ be the set of non-isolated
variables; the output on variables outside $U$ is unchanged.

For the first step, fix $x_i\in U$ and a shortest path $P_i$ in
$G_{\mathrm{imp}}(\+I^{(1)})$ from its root $r$ to the literal
$\ell\in\{x_i,\neg x_i\}$ set to true by the output.  Then
\[
    \delta_r-|P_i|=\operatorname{score}(\ell)>0,
    \qquad |P_i|<\delta_r\le\max_a\delta_a.
\]
If deleting $c$ removes no arc of $P_i$, the contribution
$\delta_r-d(r,\ell)$ is unchanged, while all competing contributions
can only decrease.  Thus the assigned value of $x_i$ is unchanged.
At most $|P_i|$ constraint deletions can remove an arc of $P_i$.
As in the $2$-coloring proof, the maximum of the $2n$ shifts has
expectation $O(\beta^{-1}\log n)$, so
$\E{|P_i|}=O(\beta^{-1}\log n)$.
Summing over $x_i\in U$, the expected sum of Hamming distances
in the first step over all deletions is $O(|U|\beta^{-1}\log n)$.

For the second step, first fix $\+I^{(1)}=(V,\+C^{(1)})$.
Choose a uniformly random ordering $c_1,\ldots,c_k$ of the constraints in
$\+C\setminus\+C^{(1)}$, and put
$\+I_j^{(2)}=\+I[\+C^{(1)}\cup\{c_1,\ldots,c_j\}]$.
For $1\le j\le k$, let $\Delta_j$ be the Hamming distance between
the assignments obtained using $\mathsf{Feas}(\+I_{j-1}^{(2)})$
and $\mathsf{Feas}(\+I_j^{(2)})$ as the sets of feasible roots,
with the graph $G_{\mathrm{imp}}(\+I^{(1)})$ and shifts shared.  We claim
\begin{equation}\label{eq:bijunctive-prefix-changes}
    \E{\sum_{j=1}^k\Delta_j}=O(|U|\log n).
\end{equation}
Conditional on $\+I^{(1)}$, the expectation in \eqref{eq:bijunctive-prefix-changes} is over the random ordering and the shifts. To prove the claim, also condition on the ordering and fix $x_i\in U$.
For every $a\in\mathsf{Feas}(\+I_0^{(2)})$ that reaches $x_i$ or
$\neg x_i$, let $d_a$ be the distance to the literal it reaches; it
cannot reach both.  Only candidates with $\delta_a-d_a>0$ can be the
root of $x_i$, since a literal made true by a solution of the original
instance remains feasible in every prefix and contributes its positive
shift at distance zero.  Conditional on the set of candidates with
$\delta_a-d_a>0$, the memoryless property implies that their positive
values $\delta_a-d_a$ are independent rate-$\beta$ exponentials.
The sets $\mathsf{Feas}(\+I_j^{(2)})$ decrease as $j$ increases. Fix, independently of the shifts, an arbitrary ordering of the candidate roots removed at each step. The root can change only when
the largest remaining value is removed. Among $s$ i.i.d.
continuous values, each is largest with probability $1/s$; hence the
expected number of root changes is at most
$H_{2n}=\sum_{s=1}^{2n}1/s=O(\log n)$.  The value of $x_i$ changes no
more often than its root.  Summing over $x_i\in U$ proves
\eqref{eq:bijunctive-prefix-changes}.

We now relate the comparisons between assignments for consecutive subinstances to constraint deletions.
Conditional on $\+C^{(1)}$, the remaining priorities and $T$ are
independent uniform samples on $[1-\eps/2,1]$.  Their relative order is
uniform, and the rank of $T$ is uniform on $\{0,\ldots,k\}$; thus
$\+I^{(2)}$ is a uniform choice among the $k+1$ prefixes.
For deletions outside $\+C^{(1)}$, only the $j$ retained constraints can change the assignment when deleted from $\+I_j^{(2)}$. By symmetry, choosing one of them uniformly produces the same distribution as the pair of prefixes of lengths $j-1$ and $j$. Hence, the expected sum of Hamming distances caused by these deletions is $j\E{\Delta_j}$.
Averaging over the prefix length, the expected sum is
$\frac{1}{k+1}\sum_{j=1}^k j\E{\Delta_j}
\le\E{\sum_{j=1}^k\Delta_j}=O(|U|\log n)$.

For deletions inside $\+C^{(1)}$, we average over $\+C^{(1)}$ and group
the comparisons by the
remaining set $\+E=\+C^{(1)}\setminus\{c\}$.
The graph used in the second step is now $G_{\mathrm{imp}}(\+I[\+E])$.
Since each constraint belongs to $\+C^{(1)}$ independently with
probability $1-\eps/2$,
\[
    \Pr{\+C^{(1)}=\+E\cup\{c\}}
    =\frac{2-\eps}{\eps}\Pr{\+C^{(1)}=\+E}
    \qquad(c\notin\+E).
\]
Conditional on $\+C^{(1)}=\+E\cup\{c\}$, regard $T$ as a
priority assigned to $c$.  Together with the priorities of
$\+C\setminus(\+E\cup\{c\})$, it defines a uniform ordering
of $\+C\setminus\+E$.
The two sets of feasible roots being compared are those of the prefixes
immediately before and after $c$ in this ordering.
For fixed $\+E$, summing over $c\notin\+E$ therefore counts each
comparison between the assignments for two consecutive subinstances once in expectation. Thus,
\eqref{eq:bijunctive-prefix-changes}, applied with $\+I[\+E]$ as the fixed first subinstance, bounds this sum.
Multiplying by the probability ratio above and averaging over
$\+E$ bounds the expected sum over deletions inside $\+C^{(1)}$ by
$O(|U|\eps^{-1}\log n)$.

Combining the bounds for the two steps, applying the coupling bound
for $W_1$, and dividing by $|\+C|$ gives
\[
    \ASens(\+A,\+I)
    =O\left(\frac{|U|}{|\+C|}(\beta^{-1}+\eps^{-1})\log n\right)
    =O_\Gamma(\eps^{-1}\log n),
\]
where we use $|U|\le r_\Gamma|\+C|$ and $\beta=\eps/(2b_\Gamma)$.

\end{proof}

By \Cref{prop:bounded-width-equivalences}, a bounded-width Boolean
language is $0$-valid, $1$-valid, Horn, dual-Horn, or bijunctive.
The constant assignments resolve the first two cases with sensitivity
zero, while \Cref{thm:horn-dual-horn} and \Cref{thm:bijunctive} resolve the others. This gives the proof of \Cref{thm:stable-approximability-dichotomy}\textup{(i)}.

\subsection{Proof of the Lower Bound}
\label{sec:stable-approximability-lower-bound}
We now prove the lower-bound direction of the dichotomy for stable approximability (\Cref{thm:stable-approximability-dichotomy}), thereby completing its proof.

To transfer a lower bound from a constraint language $\Lambda$ to $\Gamma$, we replace each constraint over $\Lambda$ by constraints over $\Gamma$ using an equality-free pp definition. The difficulty is that one deletion before the reduction removes several constraints afterward, while average sensitivity controls only an average of single deletions. Randomly deleting the replacement constraints lets us compare the intermediate instances with conditioned samples of one distribution. We combine this reduction with a family of instances over four-variable parity relations. We obtain this family by adapting the LTC construction of Yoshida and Zhang~\cite{YoshidaZhang26}. Earlier reductions addressed worst-case sensitivity via the PCP framework~\cite{FlemingYoshida26}.

\subsubsection{Transferring Average-Sensitivity Lower Bounds}

For the instances used in the lower bound, the reduction reserves the same collection of disjoint sets of new existential variables before and after one constraint deletion, and assigns distinct sets to distinct constraints uniformly at random. The reduced instances therefore have the same variable set. The reduction also deletes every replacement constraint independently. Independence lets us bound the sum of Wasserstein distances between output distributions along the sequence that removes the replacement constraints of one original constraint. Averaging this bound over the deleted original constraint bounds the average sensitivity of the algorithm on original instances in terms of the expected average sensitivity of the algorithm on reduced instances.

\begin{lemma}[Transfer of average-sensitivity lower bounds]
\label{lem:pp-average-transfer}
Let $a\geq1$ and $b\geq0$ be integers, and let $\Lambda$ and $\Gamma$ be finite Boolean constraint languages such that
\begin{enumerate}[label=(\roman*),leftmargin=2.2em]
    \item
    Every relation of $\Lambda$ has an equality-free pp definition over $\Gamma$ with at most $a$ atomic formulas and at most $b$ existential variables.
    \item There are constants $\eps_0\in(0,1)$ and $c_0>0$, an integer
    $D\geq1$, and an unbounded set $\+N\subseteq\mathbb N_{\geq1}$
    such that, for every randomized algorithm $\+B$ that
    $(1-\eps_0)$-approximates $\CSP(\Lambda)$ and every $n\in\+N$,
    there is a satisfiable Boolean CSP instance $\+I=(V,\+C)$ over
    $\Lambda$ satisfying
    \[
        |V|=n,\qquad |\+C|\leq Dn,\qquad
        \ASens(\+B,\+I)\geq c_0n.
    \]
\end{enumerate}
Set $\eta\defeq\eps_0/(2a)$.
For every randomized algorithm $\+A$ that $(1-\eta)$-approximates
$\CSP(\Gamma)$ and every $n\in\+N$, there is a
satisfiable Boolean CSP instance $\+I'=(V',\+C')$ over $\Gamma$ satisfying
\[
    |V'|=(1+bD)n,\qquad |\+C'|\leq aDn,
\]
and
\[
    \ASens(\+A,\+I')\geq\frac{c_0\eta^{a-1}}{a}\,n.
\]
\end{lemma}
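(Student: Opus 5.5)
The plan is a reduction. From $\+A$ we build a randomized algorithm $\+B$ for $\CSP(\Lambda)$ and check that $\+B$ $(1-\eps_0)$-approximates $\CSP(\Lambda)$. Hypothesis~(ii) then gives a hard instance $\+I$ for $\+B$, and an averaging argument turns the sensitivity of $\+B$ on $\+I$ into that of $\+A$ on some fixed reduced instance $\+I'$.

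\textbf{The algorithm $\+B$.} On input $\+I=(V,\+C)$ with $|V|=n$, let $M\defeq\max(Dn,|\+C|)$. Reserve a set $W$ of $bM$ fresh variables, split into $M$ disjoint blocks of size $b$. The number of blocks depends only on $n$ whenever $|\+C|\le Dn$, which holds for the hard instances. The algorithm proceeds as follows.
\begin{enumerate}[label=(\arabic*),leftmargin=2.2em]
\item Assign distinct blocks to the constraints by a uniformly random injection $\+C\to[M]$.
\item Replace each $c\in\+C$ by the atoms of its fixed equality-free pp definition from hypothesis~(i). The existential variables of $c$ are taken from its block, so $c$ yields at most $a$ atoms over $\Gamma$.
\item Retain each atom independently with probability $1-\eta$. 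Call the resulting random instance $\+J$ over $\Gamma$; its variable set is $V\cup W$.
\item Run $\+A$ on $\+J$ and output the restriction of $\+A(\+J)$ to $V$.
\end{enumerate}

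\textbf{Approximation.} Since $\+I$ is satisfiable, the full reduced instance is satisfiable, and hence so is $\+J$. If an original constraint $c$ is violated by the output, then one of its atoms is either dropped or violated by $\+A(\+J)$. The expected number of dropped atoms is at most $\eta a|\+C|$. The expected number of atoms of $\+J$ violated by $\+A(\+J)$ is at most $\eta|\+C(\+J)|\le\eta a|\+C|$. So the expected number of violated original constraints is at most $2a\eta|\+C|=\eps_0|\+C|$, as required. Hypothesis~(ii) now yields, for each $n\in\+N$, a satisfiable $\+I$ with $|V|=n$, $|\+C|\le Dn$, and $\sum_{c}\Wass(\Law(\+B(\+I)),\Law(\+B(\+I-c)))\ge c_0n|\+C|$. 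On this instance $M=Dn$, so every reduced instance has exactly $(1+bD)n$ variables and at most $aDn$ constraints.

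\textbf{Coupling a single deletion.} Fix $c$ and couple the runs of $\+B$ on $\+I$ and on $\+I-c$. Use the same block injection restricted to $\+C\setminus\{c\}$; since the injection is uniform, the restriction is again a uniform injection. Use the same retention coins for all atoms not coming from $c$. Under this coupling, $\+J(\+I-c)$ is obtained from $\+J(\+I)$ by deleting the retained atoms $\alpha_1,\dots,\alpha_k$ of $c$, taken in a fixed order, with $k\le a$. Restricting outputs to $V$ does not increase Hamming distance. The triangle inequality therefore gives
\[
\Wass\bigl(\Law(\+B(\+I)),\Law(\+B(\+I-c))\bigr)\le\E{\sum_{j=1}^{k}\Wass\bigl(\Law(\+A(K_{j-1})),\Law(\+A(K_{j-1}-\alpha_j))\bigr)},
\]
where $K_j\defeq\+J-\alpha_1-\dots-\alpha_j$ and $K_0=\+J$. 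Here the outer expectation is over $\+J$ only, and $\Law(\+A(\cdot))$ refers to $\+A$'s internal randomness on a fixed instance.

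\textbf{Main step: reweighting the intermediate instances.} The intermediate instance $K_{j-1}$ is the realized $\+J$ with $m=j-1\le a-1$ additional atoms of $c$ removed. For any fixed instance $K$, the probability that it arises as $K_{j-1}$ is the probability that $\+J$ equals $K$ plus those $m$ atoms. This equals $\Pr{\+J=K}\cdot((1-\eta)/\eta)^{m}\le\eta^{-(a-1)}\Pr{\+J=K}$. Each pair (instance $K$, atom $\alpha\in K$) appears in the chain of at most one original constraint, because every atom belongs to exactly one $c$, and at one position, because the order is fixed. Summing over $c$ therefore gives
\[
\sum_{c}\Wass\bigl(\Law(\+B(\+I)),\Law(\+B(\+I-c))\bigr)\le\eta^{-(a-1)}\,\E[\+J]{|\+C(\+J)|\,\ASens(\+A,\+J)}\le\eta^{-(a-1)}\,a|\+C|\,\E[\+J]{\ASens(\+A,\+J)}.
\]
I expect this to be the main obstacle. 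One has to check that the block injection is identically distributed on the two sides, so that the same variable set is used and the reweighting is legitimate. One also has to check the bookkeeping that counts each (instance, atom) pair at most once.

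\textbf{Conclusion.} Combining the last display with the lower bound $c_0n|\+C|$ gives $\E[\+J]{\ASens(\+A,\+J)}\ge c_0\eta^{a-1}n/a$. Hence some realization $\+I'$ of $\+J$ achieves at least this value. This $\+I'$ is a satisfiable instance over $\Gamma$ with $|V'|=(1+bD)n$ and $|\+C'|\le aDn$, which proves the lemma.
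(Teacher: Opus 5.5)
Your overall route is the paper's: the same randomized reduction with a uniform block injection and independent deletions with probability $\eta$, the same approximation accounting, the same coupling, and the same final averaging. The problem is in the reweighting step, exactly where you expected trouble. Because your chain deletes only the \emph{retained} atoms of $c$, both of your bookkeeping claims are false.

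Let $\beta_1,\dots,\beta_t$ be all atoms of $c$ in the fixed order, and let $\beta_r$ be the first one present in $K$.
\begin{itemize}
\item The pair $(K,\beta_r)$ occurs as $(K_{j-1},\alpha_j)$ for every $j\le r$, not at a single position. For example, take $t=2$ and $\beta_1\notin K\ni\beta_2$. The pair occurs at $j=1$ when $\beta_1$ was dropped, and at $j=2$ when $\beta_1$ was retained.
\item For fixed $j$, the removed prefix is not determined by $K$. It can be any $(j-1)$-subset of $\{\beta_1,\dots,\beta_{r-1}\}$. So the weight at position $j$ is $\binom{r-1}{j-1}\bigl(\frac{1-\eta}{\eta}\bigr)^{j-1}\Pr{\+J=K}$, not $\bigl(\frac{1-\eta}{\eta}\bigr)^{j-1}\Pr{\+J=K}$.
\end{itemize}

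Your displayed inequality is nonetheless true. The total weight on $(K,\beta_r)$ is $\sum_{m=0}^{r-1}\binom{r-1}{m}\bigl(\frac{1-\eta}{\eta}\bigr)^m\Pr{\+J=K}=\eta^{-(r-1)}\Pr{\+J=K}\le\eta^{-(a-1)}\Pr{\+J=K}$. However, you need this binomial identity to get there. Bounding each position separately and multiplying by the number of positions loses up to a factor $a$, and then you miss the stated constant $c_0\eta^{a-1}/a$.

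The paper avoids this by running the chain over all $t\le a$ atoms $\beta_1,\dots,\beta_t$ before the random deletion, treating deletion of an absent atom as a no-op. Then $K_{j-1}$ is $\+J$ with the fixed set $\{\beta_1,\dots,\beta_{j-1}\}$ removed. This is distributed exactly as $\+J$ conditioned on the event that these atoms were all dropped, an event of probability $\eta^{j-1}$. Each pair $(K,\beta_j)$ then occurs at exactly one position, and the factor $\eta^{-(j-1)}\le\eta^{-(a-1)}$ follows directly.

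In either version, do the reweighting with the block injection fixed, so that atoms have fixed identities and ``each atom belongs to exactly one $c$'' is meaningful. Then average over the injection using convexity of $\Wass$.
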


\begin{proof}
Fix an equality-free pp definition over $\Gamma$ for every relation in $\Lambda$. Let $\+A$ be a fixed randomized algorithm that $(1-\eta)$-approximates $\CSP(\Gamma)$.

Given a Boolean CSP instance $\+{I}=(V,\+{C})$ over $\Lambda$, we construct a random instance $\+{T}(\+{I})$ over $\Gamma$ as follows:
\begin{enumerate}[label=(\arabic*),leftmargin=2.2em]
    \item \textbf{Variables:} The variable set $V'$ of $\+{T}(\+{I})$ is defined as follows:
    \begin{enumerate}[label=(\alph*)]
        \item The original variable set $V$;
        \item $\max\{D|V|, |\+{C}|\}$ disjoint sets of $b$ new variables. We denote the variables in the $i$th set by $z_{i,1},\dots,z_{i,b}$.
    \end{enumerate}
    
    \item \textbf{Constraints:} The constraint multiset $\+{C}'$ of $\+{T}(\+{I})$ is generated via the following steps:
    \begin{enumerate}[label=(\alph*),ref=2(\alph*)]
        \item Choose the indices $(i_c)_{c\in\+C}$ uniformly among all choices in which distinct constraints receive distinct indices.\label{item:pp-group-assignment}
        \item For each constraint $c=((v_1,\dots,v_k),R)\in\+C$, substitute $v_1,\dots,v_k$ for the free variables in the fixed pp definition of $R$, and substitute $z_{i_c,j}$ for its $j$th existential variable. Add every resulting atomic formula as a constraint to an intermediate multiset $\+E$.\label{item:pp-replacement-constraints}
        \item Construct the final multiset $\+{C}'$ by independently deleting each constraint in $\+{E}$ with probability $\eta$.\label{item:pp-random-deletions}
    \end{enumerate}
\end{enumerate}

Define $\+{B}$ as the algorithm that, given an input instance $\+{I}$, samples the random instance $\+{T}(\+{I})$, executes $\+{A}$ on $\+{T}(\+{I})$, and returns its output assignment restricted to the original variables $V$.

We first verify that $\+B$ converts the $(1-\eta)$-approximation guarantee of $\+A$ into a $(1-2a\eta)$-approximation guarantee. Suppose $\+I$ is satisfiable. Any satisfying assignment for $\+{I}$ naturally extends to a satisfying assignment for the intermediate multiset $\+{E}$ by assigning the new variables values that witness the corresponding pp definitions. Because deleting constraints can only relax the instance, this extended assignment also satisfies $\+C'$. Hence, $\+T(\+I)$ is satisfiable.

If $\+B(\+I)$ violates a constraint $c\in\+C$ and none of its replacement constraints were deleted, then $\+A(\+T(\+I))$ must violate at least one of the $\Gamma$-constraints obtained from the pp definition of $c$.
For each constraint, the probability that at least one of its replacement constraints was deleted is at most $a\eta$. Therefore we have
\[
    \E{\viol_{\+I}(\+B(\+I))}
    \leq a\eta|\+C|
        +\E{\viol_{\+T(\+I)}(\+A(\+T(\+I)))}
    \leq 2a\eta|\+C|
    =\eps_0|\+C|,
\]
where the second inequality uses the approximation guarantee of $\+A$ on each
sampled instance and $|\+E|\leq a|\+C|$.
Thus $\+B$ $(1-\eps_0)$-approximates $\CSP(\Lambda)$.

We now bound the average sensitivity of $\+B$ in terms of the expected average sensitivity of $\+A$ on $\+T(\+I)$. Fix an instance $\+{I}=(V,\+{C})$ such that $\+{C}\neq\emptyset$ and $|\+{C}|\leq D|V|$, and fix $c\in\+C$. Because both $\+{I}$ and $\+{I}-c$ contain at most $D|V|$ constraints, their constructions use the same $D|V|$ sets of new variables and therefore have the same variable set $V'$.

We construct a coupling between the generation of $\+{T}(\+{I})$ and $\+{T}(\+{I}-c)$ as follows:
\begin{itemize}
    \item First, we couple the choices of indices in step~\ref{item:pp-group-assignment} so that each constraint $c'\in \+C\setminus\{c\}$ receives the same index $i_{c'}$.
    \item Then, for every $c'\in\+C\setminus\{c\}$, we couple the deletions so that the same subset of the replacement constraints obtained from its pp definition is deleted in step~\ref{item:pp-random-deletions}.
\end{itemize}
Consequently, $\+{T}(\+{I}-c)$ is obtained directly from $\+{T}(\+{I})$ by deleting the retained replacement constraints generated by $c$.

Let $c_1,\dots,c_t$, where $t\leq a$, be the
replacement constraints generated by the pp definition of $c$, before the independent deletions in step~\ref{item:pp-random-deletions}. For
$0\leq i\leq t$, let $\+C_i=\{c_1,\dots,c_i\}$ and
$\+I_i=(V',\+C'\setminus\+C_i)$, with $\+C_0=\emptyset$.
Then $\+I_0=\+T(\+I)$ and $\+I_t=\+T(\+I-c)$.
Since restriction to $V$
cannot increase Hamming distance, the triangle inequality gives
\begin{equation}
    \Wass\tp{\Law(\+B(\+I)),\Law(\+B(\+I-c))}\leq\sum_{i=1}^t
        \Wass\tp{\Law(\+A(\+I_{i-1})),\Law(\+A(\+I_i))}.
\label{eq:pp-deletion-path}
\end{equation}

We next bound each term on the right-hand side for $1\le i\le t$.
Fix the indices chosen in step~\ref{item:pp-group-assignment}; this also fixes $\+E$ and all replacement constraints. We adopt the convention that $\+T(\+I)-e=\+T(\+I)$ whenever $e\notin\+C'$.
Observe that
\[
    \Law(\+I_{i-1}) = \Law(\+T(\+I) \mid \+C'\cap\+C_{i-1}=\emptyset),
\]
and this conditioning event occurs with probability $\eta^{i-1}$.
For each realized pair $(\+I_{i-1},\+I_i=\+I_{i-1}-c_i)$, take an optimal coupling of the corresponding output distributions and then average these couplings. Since the conditioning event has probability $\eta^{i-1}$ and the Wasserstein distance is nonnegative, the conditional expectation is at most $\eta^{-(i-1)}$ times the unconditional expectation. Thus,
\begin{equation}\label{eq:single-wasserstein-bound}
    \Wass\tp{\Law(\+A(\+I_{i-1})), \Law(\+A(\+I_i))} \le \eta^{-(i-1)} \E{\Wass\tp{\Law(\+A(\+T(\+I))), \Law(\+A(\+T(\+I)-c_i))}}.
\end{equation}
In \eqref{eq:single-wasserstein-bound}, the laws on the left average over both the independent constraint deletions and the internal randomness of $\+A$, with the chosen indices fixed. On the right, the expectation is over the deletions, and each law inside it is conditional on the sampled input and averages only over $\+A$'s internal randomness. Averaging this inequality over the random choice of indices and using convexity of Wasserstein distance for mixtures gives the corresponding bound for the unconditional laws in \eqref{eq:pp-deletion-path}. All subsequent expectations average over both the chosen indices and the deletions.

Combining the definition of average sensitivity with \eqref{eq:pp-deletion-path} and \eqref{eq:single-wasserstein-bound}, we obtain
\begin{equation}\label{eq:pp-average-transfer}
\begin{aligned}
    \,\ASens(\+B,\+I)
    &\leq\frac{1}{|\+C|}\eta^{-(a-1)}
        \E{\sum_{e\in\+C'}
            \Wass\tp{\Law(\+A(\+T(\+I))),\Law(\+A(\+T(\+I)-e))}}\\
    &=\frac{1}{|\+C|}\eta^{-(a-1)}
        \=E\left[|\+C'|\,\ASens(\+A,\+T(\+I))\right]\\
    &\leq a\eta^{-(a-1)}
        \=E\left[\ASens(\+A,\+T(\+I))\right].
\end{aligned}
\end{equation}

For every $n\in\+N$, the hypothesis of the lemma supplies a satisfiable
instance $\+I$ for $\+B$ with $n$ variables, at most $Dn$ constraints,
and $\ASens(\+B,\+I)\geq c_0n$.
By \eqref{eq:pp-average-transfer}, some sampled instance $\+T(\+I)$
must satisfy
\[
    \ASens(\+A,\+T(\+I))
    \geq \frac{c_0\eta^{a-1}}{a}\,n.
\]
Note that every sampled $\+T(\+I)$ is satisfiable, has exactly $(1+bD)n$ variables
and at most $aDn$ constraints, thereby proving the stated bounds.
\end{proof}

We next state a lower bound on average sensitivity for algorithms approximating the parity CSP from which we transfer the lower bound. For $b\in\bits$, define four-variable parity relations
\[
    R_b\defeq\{(x_1,x_2,x_3,x_4)\in\bits^4:
        x_1\oplus x_2\oplus x_3\oplus x_4=b\},
    \qquad \Lambda_4\defeq\{R_0,R_1\}.
\]

The following proposition gives a lower bound on the average sensitivity of algorithms that approximate $\CSP(\Lambda_4)$. Its proof uses the LTC construction of Yoshida and Zhang and their lower bound on the Wasserstein distance between output distributions, averaged over pairs of messages that differ in one bit~\cite[Lemmas~3.7 and~4.4]{YoshidaZhang26}. We defer the proof to Section~\ref{sec:parity-source-proof}.

\begin{proposition}[Average-sensitivity lower bound for approximating $\CSP(\Lambda_4)$]
\label{prop:parity-source-lower-bound}
There are constants $\eps_0\in(0,1)$ and $c_0>0$, a constant integer
$D_0\geq1$, and an unbounded set $\+N_0\subseteq\mathbb N$ such that, for every randomized
algorithm $\+A$ that $(1-\eps_0)$-approximates $\CSP(\Lambda_4)$
and every $n\in\+N_0$, there is a satisfiable Boolean CSP instance
$\+I=(V,\+C)$ over $\Lambda_4$ with $|V|=n$ and $|\+C|\leq D_0n$
satisfying
\[
    \ASens(\+A,\+I)\geq c_0n.
\]
The constants $\eps_0,c_0,D_0$ are independent of $n$ and $\+A$.
\end{proposition}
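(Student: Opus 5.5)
The plan is to build, from a family of good codes, satisfiable $\Lambda_4$ instances whose solution sets are codewords. Deleting a few constraints should let the approximation algorithm drift between codewords encoding messages that differ in one bit. I would use the Yoshida--Zhang LTC $C:\bits^k\to\bits^m$, which is linear with constant rate and relative distance, and whose test is a $4$-query parity check with local testability constant $\rho$. I would also use their lower bound (Lemmas~3.7 and~4.4): any map sending each received word to a near-codeword output has average Wasserstein distance $\Omega(m)$ between outputs on pairs of messages differing in one bit.

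\textbf{Encoding step.} I would represent the message and the codeword by Boolean variables $V$ with $|V|=n=\Theta(k)$. For a message $w$, the instance $\+I_w$ has the following constraints.
\begin{itemize}
\item The parity checks of the LTC, each written as an $R_0$ or $R_1$ constraint on at most four variables, padding with repeated variables when needed. Repetition is allowed, and $x\oplus x$ cancels.
\item Constraints pinning the message bits to $w$. Since $\Lambda_4$ has no unit relation, I would express a pin $x_i=w_i$ by one auxiliary variable $u$ and the constraint $R_{w_i}(x_i,u,u,u)$. That constraint reads $x_i\oplus u=w_i$, which still leaves freedom. So I would instead pin through a linked gadget: a block of $\Theta(1)$ constraints $R_{w_i}(x_i,y,y,y)$ that share a common reference variable $y$, plus parity constraints that force all codeword-related quantities relative to $y$. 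The message bit is then encoded as $x_i\oplus y$, and global complementation is harmless because the measured quantity is invariant under it.
\item Each pin replicated $\Theta(m/k)=\Theta(1)$ times, so that message pins carry constant weight. This keeps $|\+C|\le D_0n$.
\end{itemize}
With these choices, $\+I_w$ and $\+I_{w'}$ for messages differing in bit $i$ differ only in the $O(1)$ copies of one pin.

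\textbf{Approximation step.} An algorithm that $(1-\eps_0)$-approximates violates in expectation at most an $\eps_0$ fraction of the constraints. By local testability, its output restricted to the codeword coordinates is within relative distance $O(\eps_0/\rho)$ of some codeword. The pins force that codeword to be $C(w)$ for most $w$, once $\eps_0$ is small relative to the code distance. This is exactly the hypothesis of the Yoshida--Zhang Wasserstein lemma.

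\textbf{Sensitivity step.} The step from $\+I_w$ to $\+I_{w'}$ passes through the common subinstance with the disputed pins removed. I would delete the $O(1)$ copies one at a time and then add the new ones, so the path has constant length. By the triangle inequality, the expected Wasserstein distance over random $w$ and $i$, which is $\Omega(n)$, is at most $O(1)$ times the sum of single-deletion distances. Those single deletions are terms of $|\+C|\,\ASens$ on instances along the path. Averaging over $i$, whose pin constraints make up a constant fraction of $\+C$, gives some instance with $\ASens\ge c_0 n$.

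\textbf{Main obstacle.} I expect the main obstacle to be the intermediate instances with partially deleted pins, and in particular whether they remain satisfiable $\Lambda_4$-instances. They do, because deletions preserve satisfiability. The remaining difficulty is matching the averaging in Yoshida--Zhang, which is over one-bit message changes, to the uniform constraint deletion in $\ASens$. I would handle this by making the pin constraints a constant fraction of all constraints and choosing the instance with the maximal average.
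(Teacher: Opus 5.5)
\textbf{Replicated pins break the final averaging.} Suppose each pin has $p\ge2$ copies, and let $\+I_w=J_0,J_1,\dots,J_p$ be your path, where $J_j$ is $\+I_w$ minus $j$ copies of pin $i$. Only the first step deletes a constraint from an instance that does not depend on $i$, so only that step can be summed over $i$ into $|\+C|\,\ASens(\+A,\+I_w)$. For $j\ge1$, the instance $J_j$ depends on $(w,i)$, and you use just one of its deletion terms (with multiplicity $p-j$). A jump of size $\Theta(n)$ there therefore gives only $\ASens(\+A,J_j)=\Omega(1)$. This step genuinely fails. Consider the exact solver that sets $y=0$ and sets $x$ to the codeword of $\hat w$, where $\hat w_i=w_i$ if some copy of pin $i$ survives and $\hat w_i=0$ otherwise. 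It has $\ASens=0$ on every $\+I_w$, since no single deletion unpins a bit, and $\ASens\le|V|/|\+C|=O(1)$ on every instance along your paths. So the real obstacle is not satisfiability of the intermediate instances. Replication is also unnecessary: $k=\Theta(n)$ single pins already form a constant fraction of $\+C$. With exactly one pin per message bit, you get $\+I_w-c_i^w=\+I_{w^{(i)}}-c_i^{w^{(i)}}$, and both halves of the path are single deletions from family instances. Because $w$ and $w^{(i)}$ are identically distributed, $\mathbb{E}_w[\ASens(\+A,\+I_w)]\ge\frac{k}{2|\+C|}\,\mathbb{E}_{w,i}\bigl[\Wass(\Law(\+A(\+I_w)),\Law(\+A(\+I_{w^{(i)}})))\bigr]$. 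This is the paper's argument.

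\textbf{The shared reference variable $y$ blocks the transfer of the Yoshida--Zhang bound.} The decoded word $z_v=\tau_{x_v}\oplus\tau_y$ is invariant under global complementation, but the decoder is not Lipschitz: flipping $\tau_y$ alone flips every coordinate of $z$. Hence a lower bound on $\Wass$ between decoded outputs does not give one between raw outputs, which is what $\ASens$ measures. You only have $\distH(\tau,\tau')\ge\min\{\distH(z,z'),\distH(z,\neg z')\}$. You would have to rerun the Yoshida--Zhang argument in this complement-quotient metric, using that the decoded messages are close to $w$ and $w^{(i)}$, so that the decoded codewords are also far from each other's complements. That is not a black-box application of their lemma. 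The paper avoids the issue by giving each code coordinate $v$ its own pair $v^{(1)},v^{(2)}$ with code bit $v^{(1)}\oplus v^{(2)}$, so decoding is $1$-Lipschitz. Each three-query test then becomes a six-term parity, split into two $R_0$ constraints through an auxiliary variable $t_c$. Each pin is the single constraint $R_{\sigma_x}(x^{(1)},x^{(2)},x^{(1)},x^{(1)})$ for message coordinate $x$. Relatedly, the Yoshida--Zhang tests are three-query, not four-query. A four-slot parity with repeated variables always has an even number of variables of odd multiplicity, so padding alone cannot express them; your $y$, or the paper's doubling, is what makes them expressible.
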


We are now ready to prove \Cref{thm:stable-approximability-dichotomy}\textup{(ii)}.

\begin{proof}[Proof of \Cref{thm:stable-approximability-dichotomy}\textup{(ii)}]
Fix a finite Boolean constraint language $\Gamma$ that lacks bounded width. Both relations in $\Lambda_4$ are nonempty, self-dual, and affine. Consequently, by \Cref{prop:affine-expressibility-outside-bounded-width}, they admit pp definitions over $\Gamma$.

We use the following property of each $R_b$ with $b\in\{0,1\}$:
\begin{equation}\label{eq:not-equal}
\text{for every }i\neq j,\quad
\text{there exists }\mathbf a\in R_b\text{ such that }a_i\neq a_j.
\end{equation}
For $R_1$, take the tuple having $a_i=1$ and all other coordinates $0$.
For $R_0$, choose $k\notin\{i,j\}$ and take the tuple having
$a_i=a_k=1$ and all other coordinates $0$. Thus both $R_0$ and $R_1$
are irredundant, and their pp definitions can be made equality-free by
\Cref{lem:equality-free}.

Fix such definitions of $R_0$ and $R_1$ with at most $a$ atomic formulas and at most $b$
existential variables. Set
\[
  \eps_\Gamma\defeq\frac{\eps_0}{2a},
  \qquad
  \+N_\Gamma\defeq\{(1+bD_0)n:n\in\+N_0\}.
\]
Apply \Cref{lem:pp-average-transfer} to the lower bound for $\CSP(\Lambda_4)$ in
\Cref{prop:parity-source-lower-bound}, taking $\Lambda=\Lambda_4$,  $D=D_0$, and $\+N=\+N_0$. For every $n\in\+N_0$, the resulting
instance has $(1+bD_0)n\in\+N_\Gamma$ variables and sensitivity
$\Omega_\Gamma(n)$. This sensitivity is linear in the number of variables of the resulting instance. This proves
the lower bound.
\end{proof}

\subsubsection{Average-Sensitivity Lower Bound for Parity CSPs}
\label{sec:parity-source-proof}
We prove \Cref{prop:parity-source-lower-bound} by adapting the construction of Yoshida and Zhang~\cite{YoshidaZhang26}. Their locally testable codes induce a satisfiable CSP instance for each message. For any algorithm that violates at most a sufficiently small fraction of constraints in expectation on each instance, the expected Wasserstein distance between its output distributions on two instances whose messages differ in one bit is linear in the block length. The expectation is over a uniformly chosen message and a uniformly chosen bit to flip.

To express the construction using self-dual four-variable parity relations, we represent each code coordinate by the XOR of two variables and split each resulting six-variable test into two four-variable constraints. One additional constraint fixes each message bit. Instances for messages differing in one bit then yield the same subinstance after deleting one constraint from each. Comparing their output distributions through this subinstance converts the bound on Wasserstein distance into an average-sensitivity lower bound.

\begin{definition}[Error-correcting code]
\label{def:error-correcting-codes}
Let $n,k\in \=N$ and $n\geq k\geq1$. A \emph{(binary) error-correcting code}, or simply a \emph{code}, of length $n$ and message length $k$ is a pair $(C,f)$, where $C\subseteq \bits^n$ and the encoder $f:\bits^k\to C$ is a bijection. The \emph{rate} of the code is $k/n$,
and its \emph{(relative) distance} is $\frac1n\min\limits_{c,c'\in C,c\neq c'}\distH(c,c')$. We often identify the code with the subset $C$ when the encoder is clear from context.

A code $(C,f)$ is called \emph{linear} if $C$ is a linear subspace of $\=F^n_2$. The \emph{(absolute) distance} of a linear code equals the minimum Hamming weight of a nonzero codeword.

A code $(C,f)$ of length $n$ and message length $k$ is called \emph{systematic} if $f(x)$ has the same first $k$ bits as $x$ for any $x\in \{0,1\}^k$.
\end{definition}

A locally testable code admits a randomized tester that reads a small number of bits and rejects a string with probability at least proportional to its relative distance from the code. For a binary string $c\in \{0,1\}^V$ and a subset $S\subseteq V$, we denote by $c_S$ the restriction of $c$ to $S$. We also write $\distH(c,C)=\min\limits_{c'\in C}\distH(c,c')$ for a nonempty subset $C\subseteq \{0,1\}^V$. We use the formulation in~\cite{YoshidaZhang26}, which also bounds the number of tests in which each coordinate occurs.

\begin{definition}[Locally testable code]\label{def:locally-testable-codes}
    Let $\kappa\in\=R^{>0}$ and let $q,D\geq1$ be integers. We say an error-correcting code $(C,f)$ of length $n$ is a \emph{locally testable code (LTC)} with detection probability $\kappa$, query complexity $q$, and variable degree (at most) $D$ if there exists a finite nonempty multiset $\+S$ of nonempty subsets $S\subseteq [n]$ with $|S|=q$, where each subset $S$ is associated with a set $V_S\subseteq \=F^S_2$ of \emph{allowed local views}, such that the following properties hold:
    \begin{itemize}
        \item If $c\in C$, then $c_S\in V_S$ for every $S\in \+S$.
        \item For every $c\in \=F^n_2$,
        \[
        \Pr[S\sim\Unif(\+S)]{c_S\notin V_S}\geq \kappa\cdot\frac{\distH(c,C)}{n}.
        \]
        \item Every coordinate $i\in [n]$ belongs to at most $D$ subsets $S\in \+S$.
    \end{itemize}
\end{definition}

We use the bounded-degree, systematic version of the $c^3$-LTC construction~\cite{dinur2022locally} obtained in~\cite[Lemmas~2.5 and~4.4]{YoshidaZhang26}. Here $c^3$ denotes constant rate, constant relative distance, and constant query complexity. The proof of their Lemma~4.4 gives the tests on three coordinates specified below.

\begin{lemma}[Bounded-degree systematic $c^3$-LTC]
\label{lem:yz-bounded-degree-parity}
There exist constants $r,\delta,\kappa>0$, an integer $D\geq1$,
and an infinite family of linear systematic LTCs
$(C_n,f_n)$ with $C_n\subseteq\=F_2^n$ such that each code in the family
has rate at least $r$, relative distance at least $\delta$,
detection probability at least $\kappa$, variable degree at most $D$, and
query complexity $3$, with each local test of the form $x_u\oplus x_v\oplus x_w=0$ on three distinct coordinates $u,v,w$.
\end{lemma}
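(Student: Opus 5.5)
This lemma is essentially a packaging of results already proved in \cite{YoshidaZhang26}, so I plan to assemble it rather than reprove any coding theory. I would start from the $c^3$-LTC family of Dinur et al.~\cite{dinur2022locally}, which supplies linear codes of constant rate, constant relative distance and constant query complexity. Its tests are linear: each allowed-view set $V_S$ is the solution set of linear equations over $\=F_2$, because $C$ is linear and $V_S$ can be taken to be the projection $C_S$.

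The first step is to reduce to query complexity $3$ with tests of the form $x_u\oplus x_v\oplus x_w=0$, which is the content of the proof of \cite[Lemma~4.4]{YoshidaZhang26}. Each linear test checks a bounded number of parity checks of bounded length $\ell$. I would split every such check $\bigoplus_{t=1}^{\ell}x_{u_t}=0$ into a chain of $3$-variable parities using fresh auxiliary coordinates $y_1,\ldots,y_{\ell-2}$:
\[
  x_{u_1}\oplus x_{u_2}\oplus y_1=0,\quad
  y_1\oplus x_{u_3}\oplus y_2=0,\quad\ldots,\quad
  y_{\ell-2}\oplus x_{u_{\ell-1}}\oplus x_{u_\ell}=0.
\]
I would then append the auxiliaries, which are determined linearly by the codeword, to obtain a new linear code.

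Rate and relative distance drop by at most constant factors, since only $O(1)$ auxiliaries are added per original test and there are $O(n)$ tests. For the distance, I would use that the auxiliaries are a linear function of the original coordinates. The query complexity becomes $3$, and the three coordinates of each test are distinct by construction. Degenerate cases with repeated coordinates cancel in $\=F_2$ and can simply be dropped.

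The second step is bounded degree, via \cite[Lemma~2.5]{YoshidaZhang26}. If some coordinate appears in many tests, I would replace it by copies linked through an expander of equality constraints. Each equality $x=x'$ can itself be written as a $3$-parity $x\oplus x'\oplus z=0$ together with a test forcing $z=0$. Simpler still, I can keep the equality in the form $x\oplus x'\oplus z=0$ with $z$ a constant-zero coordinate, which is itself replicated along the expander.

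The third step is systematicity. Since the code is linear of dimension $k$, I would choose an information set of $k$ coordinates and permute them to the front. Coordinate permutations preserve every LTC parameter, and the encoder is then the unique linear map whose codeword agrees with the message on those coordinates.

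The main obstacle is preserving the detection probability $\kappa$ through the splitting and degree-reduction steps. A word violating an original test may violate only one of its chained $3$-tests, which costs a constant factor. Corrupted auxiliaries or copies also need to be charged correctly, which calls for a standard argument: either decode the copies by majority and the auxiliaries by recomputation, or else pay for the local inconsistency with violated tests. The expander's edge expansion ensures the number of violated equality tests is proportional to the disagreement among copies. This is exactly the analysis carried out in \cite[Lemmas~2.5 and~4.4]{YoshidaZhang26}, so in the paper I would cite it and only verify that the resulting tests have the stated three-distinct-coordinate parity form.
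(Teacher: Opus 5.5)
Your proposal matches the paper, which does not reprove this lemma: it takes the bounded-degree systematic version of the $c^3$-LTC of \cite{dinur2022locally} from \cite[Lemmas~2.5 and~4.4]{YoshidaZhang26}, and notes only that the proof of their Lemma~4.4 yields the three-coordinate parity tests. Your expander-based degree-reduction step is not needed, since the codes of \cite{dinur2022locally} already have bounded variable degree (each bit lies in a constant number of local views) and your chain-splitting preserves this; as written, that step also leaves open how the replicated zero coordinate $z$ is itself forced to $0$ using only homogeneous three-variable parities, but because you ultimately defer to the cited analysis, this does not affect the argument.
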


An LTC naturally induces a CSP instance by regarding each coordinate of a received word as a variable and each test performed by the tester as a constraint on those variables. We give the following definition specialized to systematic LTCs.

\begin{definition}[CSP induced by a systematic LTC]\label{def:csp-induced-LTC}
Let $(C,f)$ be a systematic LTC of block length $n$ and message
length $k$, with local test multiset $\+S$ and allowed local views $(V_S)_{S\in\+S}$. Assume that $S\setminus[k]\ne\emptyset$ for every test occurrence $S\in\+S$. For any message $\sigma\in \{0,1\}^k$, define a CSP instance $\+I_C(\sigma)=(V,\+C)$ as follows:
\begin{itemize}
    \item The variable set of $\+I_C(\sigma)$ is defined as $V\defeq [n]\setminus [k]$.
    \item For each test occurrence $S\in\+S$, write $S\cap V=\{i_1<\cdots<i_t\}$, where $t\geq1$, and add one constraint with scope $(i_1,\ldots,i_t)$ and relation
    \[
    R_{S,\sigma}\defeq\left\{(a_1,\ldots,a_t)\in\bits^t:
    \begin{array}{l}
      \text{there exists }\beta\in V_S\text{ with }
      \beta_{S\cap[k]}=\sigma_{S\cap[k]}\\
      \text{and }\beta(i_j)=a_j\text{ for every }j\in[t]
    \end{array}\right\}.
    \]
    Thus the constraint multiset retains one occurrence per local test.
\end{itemize}
\end{definition}

The next lemma gives a lower bound on the expected Wasserstein distance between the output distributions on a uniformly chosen message and on the message obtained by flipping a uniformly chosen bit. It is proved in~\cite[Lemma~3.7]{YoshidaZhang26}; we express its approximation hypothesis in terms of the induced CSP. The factor $1/n$ converts our unnormalized Hamming distance to the normalization in that source.
\begin{lemma}[Wasserstein distance for neighboring messages]
\label{lem:yz-average-separation}
Let $(C,f)$ be a systematic LTC with block length $n$,
message length $k\geq1$, rate $r$, relative distance $\delta$,
and detection probability $\kappa$, whose tests each query a coordinate outside $[k]$. Fix $\eps\in[0,1]$. Suppose that a randomized algorithm $\+A$, on input $\sigma\in\{0,1\}^k$, produces a $(1-\eps)$-approximation to $\+I_C(\sigma)$. Then
\[
    \frac1n\E[\sigma,i]{
        \Wass\bigl(\Law(\+A(\sigma)),
                   \Law(\+A(\sigma^{(i)}))\bigr)
    }\geq\delta\left(1-\frac{2\eps}{r\kappa}\right)
        -\frac{2\eps}{\kappa}-\frac1n,
\]
where $\sigma$ and $i$ are chosen independently and uniformly from $\bits^k$ and $[k]$, respectively,
and $\sigma^{(i)}$ is $\sigma$ with its $i$-th bit flipped.
\end{lemma}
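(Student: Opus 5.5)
We recover the full received word and reduce to the geometry of the code. For a message $\sigma\in\bits^k$ and an output $X\in\bits^V$ of $\+A$ on $\+I_C(\sigma)$, let $w(\sigma,X)\in\bits^n$ be the word equal to $\sigma$ on $[k]$ and to $X$ on $V=[n]\setminus[k]$. By construction of $\+I_C(\sigma)$ in \Cref{def:csp-induced-LTC}, $X$ violates the constraint for a test occurrence $S$ exactly when $w(\sigma,X)_S\notin V_S$. There is one constraint per test occurrence, so the $(1-\eps)$-approximation hypothesis says $\Pr[S\sim\Unif(\+S)]{w_S\notin V_S}\le\eps$ in expectation over $X$. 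The detection property of \Cref{def:locally-testable-codes} then gives
\[
\E{\distH(w(\sigma,X),C)}\le \eps n/\kappa
\]
for every $\sigma$. For each word $w$, let $c(w)\in C$ be a nearest codeword under a fixed tie-breaking rule.

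Fix $\sigma$ and $i$, and take an optimal coupling $(X,Y)$ of $\Law(\+A(\sigma))$ and $\Law(\+A(\sigma^{(i)}))$. Put $w_X=w(\sigma,X)$ and $w_Y=w(\sigma^{(i)},Y)$. Then $\distH(X,Y)\ge\distH(w_X,w_Y)-1$, since the two words also differ in message bit $i$. We split on whether the decoded codewords agree.

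\emph{If $c(w_X)\ne c(w_Y)$:} the codewords are at distance at least $\delta n$, so the triangle inequality gives
\[
\distH(w_X,w_Y)\ge\delta n-\distH(w_X,C)-\distH(w_Y,C).
\]

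\emph{If $c(w_X)=c(w_Y)=c$:} by systematicity, coordinate $i$ of $c$ equals that of some message. Since $\sigma_i\ne\sigma^{(i)}_i$, coordinate $i$ lies in the disagreement set between $w_X$ and $c$, or between $w_Y$ and $c$, restricted to $[k]$.

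Hence, with $E$ denoting the event $c(w_X)=c(w_Y)$,
\[
\distH(X,Y)\ge\delta n\,\mathbbm 1[\neg E]-\distH(w_X,C)-\distH(w_Y,C)-1 .
\]
The key step is to bound $\Pr{E}$ after averaging over $\sigma$ and $i$. The disagreement set of $w_X$ with $c(w_X)$ depends on $\sigma$ and on $\+A$'s randomness but not on $i$. Conditional on $\sigma$, the probability over uniform $i$ that $i$ falls in it is at most $\distH(w_X,C)/k$.

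The term for $w_Y$ needs a symmetry argument. The pair $(\sigma^{(i)},i)$ is uniform on $\bits^k\times[k]$, exactly like $(\sigma,i)$. Moreover, $c(w_Y)$ and $w_Y$ are determined by $\sigma^{(i)}$ and the run of $\+A$ on it. So the same bound applies, and
\[
\Pr{E}\le \frac{2\max_{\sigma}\E{\distH(w(\sigma,\+A(\sigma)),C)}}{k}\le \frac{2\eps n}{\kappa k}\le\frac{2\eps}{r\kappa},
\]
using $k\ge rn$.

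Taking expectations in the displayed inequality, with each distance-to-code term bounded by $\eps n/\kappa$, gives
\[
\E[\sigma,i]{\Wass}\ge \delta n\Bigl(1-\frac{2\eps}{r\kappa}\Bigr)-\frac{2\eps n}{\kappa}-1.
\]
Dividing by $n$ yields the claimed bound. The main obstacle is the bound on $\Pr{E}$, and specifically the symmetry argument that lets the same estimate apply to $w_Y$. There we must check that the event for $w_Y$ depends on $i$ only through $\sigma^{(i)}$ in the appropriate way. The rest is triangle inequalities and linearity of expectation.
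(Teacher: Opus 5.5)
Your proof is correct and gives exactly the stated bound. The three terms come out as follows:
\begin{itemize}
\item the $-1/n$ comes from the flipped message bit;
\item the $2\eps/\kappa$ comes from the two distance-to-code terms, each at most $\eps n/\kappa$;
\item the factor $1-2\eps/(r\kappa)$ comes from bounding the probability that both outputs decode to the same codeword.
\end{itemize}

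The paper does not prove this lemma. It imports it from Lemma~3.7 of Yoshida and Zhang, noting only that the approximation hypothesis is restated via the induced CSP and that distances are rescaled by $1/n$. Your argument makes that restatement explicit. The constraint for a test occurrence $S$ is violated exactly when $w(\sigma,X)_S\notin V_S$, so detection turns the $\eps$-violation bound into $\mathbb{E}[\distH(w,C)]\le\eps n/\kappa$. You then carry out a nearest-codeword decoding argument. Since the constants match the cited bound term for term, this is presumably the argument behind the cited lemma. What your route adds is self-containment and an explicit check that the hypotheses transfer; the citation is shorter.

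Two small presentation points.
\begin{itemize}
\item You say the disagreement set of $w_X$ with $c(w_X)$ ``does not depend on $i$.'' Justify this through marginals. Under the optimal coupling chosen for $(\sigma,i)$, the realization of $X$ may depend on $i$, but its marginal law is $\Law(\+A(\sigma))$ for every $i$. The quantity $\Pr[i\in D_X\mid\sigma,i]$ involves only this marginal, and that is all the averaging over $i$ requires. The same marginal reasoning, combined with uniformity of $(\sigma^{(i)},i)$, handles the $w_Y$ term.
\item Systematicity is not needed in the case $c(w_X)=c(w_Y)$. The single bit $c_i$ simply cannot equal both $\sigma_i$ and $1-\sigma_i$.
\end{itemize}
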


We are now ready to prove \Cref{prop:parity-source-lower-bound}.

\begin{proof}[Proof of \Cref{prop:parity-source-lower-bound}]
Fix a family $(C_n,f_n)$ and constants $r,\delta,\kappa,D$ from \Cref{lem:yz-bounded-degree-parity}. For a sufficiently large block length $n$ in this family, write $(C,f)=(C_n,f_n)$, let $k=k_n$ be its message length, let $V=[n]$, and let $m=m_n$ be its number of local tests. Every test is a nontrivial three-coordinate parity equation. No test can involve only message coordinates: systematicity lets those coordinates take every possible bit pattern, including one violating that equation. Thus the induced instances of \Cref{def:csp-induced-LTC} are defined and have exactly $m$ positive-arity constraints.

For sufficiently large $n$, we have $\delta n>1$. The lower bound on relative distance thus guarantees that the unit vectors $e_v\notin C$ for all $v\in V$. The detection guarantee therefore implies that some test rejects $e_v$, meaning that $x_v$ must have a nonzero coefficient in at least one test equation. Since every test queries exactly three coordinates and every coordinate is queried by at most $D$ tests, counting occurrences of coordinates in tests yields
\[
    n\leq 3m\leq Dn.
\]

For each message $\sigma\in\bits^k$, we construct an instance $\+I_\sigma=(W,\+C_\sigma)$ over $\Lambda_4$ as follows:
\begin{enumerate}[label=(\arabic*),leftmargin=2.2em]
    \item \textbf{Variables:} The variable set is
    \[
        W\defeq\{v^{(1)},v^{(2)}:v\in V\}
        \cup\{t_c:c\text{ is a local test}\},
    \]
    where $t_c$ is a new variable associated with the local test $c$.
    \item \textbf{Constraints:} The multiset $\+C_\sigma$ consists of the following:
    \begin{enumerate}[label=(\alph*)]
        \item For each local test $c$ with coordinates $u<v<w$, replace each code bit by the XOR of its two representatives, and set
        \[
          (y_{c,1},\ldots,y_{c,6})
          \defeq(u^{(1)},u^{(2)},v^{(1)},v^{(2)},w^{(1)},w^{(2)}).
        \]
        The test becomes
        \[
            y_{c,1}\oplus y_{c,2}\oplus\cdots\oplus y_{c,6}=0.
        \]
        We express this equation by two four-variable parity constraints that share the variable $t_c$:
        \[
            R_0(y_{c,1},y_{c,2},y_{c,3},t_c),\qquad
            R_0(t_c,y_{c,4},y_{c,5},y_{c,6}).
        \]
        \item For every $x\in[k]$, we enforce the condition $x^{(1)}\oplus x^{(2)}=\sigma_x$ by adding the following constraint:
        \[
            c_x^\sigma\defeq
            R_{\sigma_x}(x^{(1)},x^{(2)},x^{(1)},x^{(1)}).
        \]
    \end{enumerate}
\end{enumerate}

For any $\sigma\in\bits^k$, the codeword $f(\sigma)$ satisfies every test. Consequently, setting $v^{(1)}=f(\sigma)_v$ and $v^{(2)}=0$ for all $v\in V$, together with setting $t_c\defeq y_{c,1}\oplus y_{c,2}\oplus y_{c,3}$, satisfies $\+I_\sigma$. Each instance has size
\begin{equation}
    |W|=2n+m=\Theta(n),
    \qquad |\+C_\sigma|=2m+k\leq 5m=O(n).
    \label{eq:parity-source-size}
\end{equation}
Crucially, for every $\sigma\in\bits^k$ and $x\in[k]$, we have
\begin{equation}
    \+I_\sigma-c_x^\sigma
    =\+I_{\sigma^{(x)}}-c_x^{\sigma^{(x)}}.
    \label{eq:parity-source-common-deletion}
\end{equation}

We now establish the average-sensitivity lower bound for this family. Choose $\eps_0\in(0,1)$, depending only on $r,\delta,\kappa,D$, so that $5D\eps_0\leq1$ and
\[
  \delta\left(1-\frac{10D\eps_0}{r\kappa}\right)
       -\frac{10D\eps_0}{\kappa}\geq\frac{\delta}{2}.
\]
Let $\+A$ be a $(1-\eps_0)$-approximation algorithm for $\CSP(\Lambda_4)$. Define an algorithm $\+B$ that approximates the induced instance $\+I_C(\sigma)$ for any $\sigma\in\bits^k$ as follows:
\begin{itemize}
    \item On input $\sigma\in\bits^k$, execute $\+A(\+I_\sigma)$ to obtain an assignment $\tau\in \bits^W$. Let $z_v \defeq \tau_{v^{(1)}}\oplus\tau_{v^{(2)}}$ for each $v\in V$, and output the restriction $z_{V\setminus[k]}$ to the coordinates outside the message.
\end{itemize}

A constraint of $\+I_C(z_{[k]})$ is violated by $z_{V\setminus[k]}$ exactly when the full vector $z$ violates the corresponding local test. Moreover, every message coordinate $x\in[k]$ for which $z_x\ne\sigma_x$ causes $c_x^\sigma$ to be violated. Changing the message from $z_{[k]}$ to $\sigma$ can affect at most $D$ induced constraints per differing coordinate, so
\[
    \viol_{\+I_C(\sigma)}(\+B(\sigma))
    \leq \viol_{\+I_C(z_{[k]})}(z_{V\setminus[k]}) + D \cdot \distH(z_{[k]},\sigma)
    \leq D\viol_{\+I_\sigma}(\tau).
\]
Taking expectations, we obtain
\begin{align*}
    &\E{\viol_{\+I_C(\sigma)}(\+B(\sigma))}\\
   (\text{$\+A$ is a $(1-\eps_0)$-approximation algorithm})\quad \leq &D\eps_0|\+C_\sigma|\\
   (\text{by \eqref{eq:parity-source-size}})\quad \leq &5D\eps_0m.
\end{align*}
Because $\+I_C(\sigma)$ contains precisely $m$ constraints, $\+B$ is a $(1-5D\eps_0)$-approximation algorithm for every induced instance. This satisfies the approximation prerequisite for \Cref{lem:yz-average-separation}.

We now use the lower bound on the average Wasserstein distance between the output distributions of $\+B$ to obtain an average-sensitivity lower bound for $\+A$. Let $\sigma$ and $x$ be independent and uniform on $\bits^k$ and $[k]$, respectively. The decoding map $\tau\mapsto z_{V\setminus[k]}$ cannot increase the $1$-Wasserstein distance. For $n\geq4/\delta$, \Cref{lem:yz-average-separation}, applied with $\eps=5D\eps_0$, yields the following bound. If the code parameters exceed the uniform lower bounds $r,\delta,\kappa$, substituting these lower bounds only weakens the estimate:
\begin{equation}\label{eq:parity-source-separation}
    \=E_{\sigma,x}\left[
        \Wass\bigl(\Law(\+A(\+I_\sigma)),
                   \Law(\+A(\+I_{\sigma^{(x)}}))\bigr)
    \right]
    \geq \=E_{\sigma,x}\left[
        \Wass\bigl(\Law(\+B(\sigma)),
                   \Law(\+B(\sigma^{(x)}))\bigr)
    \right]
    \geq \frac{\delta n}{4}.
\end{equation}

Finally, with the same uniform choices of $\sigma$ and $x$, note that $\sigma$ and $\sigma^{(x)}$ are identically distributed. Applying the triangle inequality through the output distribution on the common instance in~\eqref{eq:parity-source-common-deletion}, obtained by deleting one constraint from each input, gives
\[
    \=E_\sigma[\ASens(\+A,\+I_\sigma)]
    \geq\frac{k}{2|\+C_\sigma|}
        \=E_{\sigma,x}\left[
            \Wass\bigl(\Law(\+A(\+I_\sigma)),
                       \Law(\+A(\+I_{\sigma^{(x)}}))\bigr)
        \right]=\Omega(n)=\Omega(|W|),
\]
where $k\geq rn$ and \eqref{eq:parity-source-size} give constants uniform over the code family. More explicitly, put $M\defeq2D/3+1$ and $L\defeq2+D/3$. Then $|\+C_\sigma|\leq Mn$, $|W|\leq Ln$, and the displayed expectation is at least $r\delta n/(8M)$. Take
\[
  D_0\defeq\lceil M\rceil,\qquad
  c_0\defeq\frac{r\delta}{8ML},\qquad
  \+N_0\defeq\{2n+m_n:n\text{ is a sufficiently large block length in the family}\}.
\]
The set $\+N_0$ is unbounded. For each of its sizes, some message $\sigma$ satisfies $\ASens(\+A,\+I_\sigma)\geq c_0|W|$, and $|\+C_\sigma|\leq D_0|W|$. These constants do not depend on the size or the algorithm, as required.
\end{proof}

\section{Conclusions and Open Problems}

We provide a complete classification, parameterized by the constraint language $\Gamma$, of stable solvability and stable approximability for satisfiable Boolean CSPs, with algorithmic stability quantified by average sensitivity. We establish that $\CSP(\Gamma)$ is stably solvable if and only if $\Gamma$ has finite duality, and stably approximable if and only if $\Gamma$ has bounded width.

Our proofs use the relational characterizations of finite duality and bounded width. For the upper bounds, we give low-average-sensitivity algorithms for Horn, dual-Horn, and bijunctive languages. The lower-bound proof for stable solvability uses sparse minimally unsatisfiable instances with large diameter. For stable approximability, we combine a four-variable parity construction with a reduction that transfers average-sensitivity lower bounds through equality-free pp definitions.

Our work leaves open several avenues for future research regarding the algorithmic stability of constraint satisfaction problems:

\begin{itemize}
    \item Our current average-sensitivity bound for stably approximating bijunctive languages is $O_{\Gamma}(\eps^{-1}\log n)$. Is this logarithmic dependence on $n$ necessary?
    
    \item Another natural next step is to extend the current stability classification to more general settings. Prominent directions include generalizing the dichotomy to CSPs over non-Boolean domains and studying stable approximability of CSPs without the satisfiability promise.

    \item Although we study stability here in terms of average sensitivity, one may ask how the classification would change under worst-case sensitivity or other notions of stability. It would also be interesting to establish formal connections or separations among these and other stability measures.

    \item Our upper bounds currently rely on a case analysis of Boolean constraint languages. Can stable solvability and stable approximability be established directly from finite duality and bounded width, respectively?
\end{itemize}

\section*{AI Disclosure}
During the preparation of this manuscript, the authors utilized ChatGPT 5.6 and 6 as AI assistants. While various algorithmic concepts and proof strategies emerged through extensive interactions with these models, the core theoretical constructions remain strictly human. Specifically, the algorithm and proof for stable approximation of $2$-colorings were developed independently by the authors. Building upon this, the authors designed the generalized algorithm for stably approximating $2$-SAT (bijunctive languages), while the AI models provided assistance in constructing and refining the associated formal proofs.

The authors devoted substantial effort to reorganizing and rewriting the manuscript, including all
key arguments and algorithmic descriptions, to make the ideas and proofs clear and accessible to
researchers. They also verified all mathematical statements, proofs, algorithms, and references, and
take full responsibility for all content.

\ifdoubleblind
\else
\section*{Acknowledgements}
Yuichi Yoshida is supported by JSPS KAKENHI Grant Number 24K02903.
\fi

\bibliographystyle{alpha}
\bibliography{references}

\end{document}